\newcommand{\ignore}[1]{}
\definecolor{darkred}{rgb}{0.5, 0, 0}
\definecolor{darkgreen}{rgb}{0, 0.5, 0}
\definecolor{darkblue}{rgb}{0,0,0.5}
\newlength{\saveparindent}
\newlength{\saveparskip}
\newcounter{ctr}
\newcounter{ectr}
\newenvironment{tiret}{%
\begin{list}{\hspace{1pt}\rule[0.5ex]{6pt}{1pt}\hfill}{\labelwidth=15pt%
\labelsep=3pt \leftmargin=18pt \topsep=1pt%
\setlength{\listparindent}{\saveparindent}%
\setlength{\parsep}{\saveparskip}%
\setlength{\itemsep}{1pt}}}{\end{list}}
\newcommand{\X}{\spa{X}}
\newcommand{\Y}{\spa{Y}}
\newcommand{\abs}[1]{\ensuremath{\lvert{#1}\rvert}}
\newcommand{\set}[1]{\ensuremath{\{#1\}}}
\newcommand{\Party}[1][\relax]{\ensuremath{P_{#1}}\xspace}
\newcommand{\PartyA}{\mathscr{A}}
\newcommand{\PartyB}{\mathscr{B}}
\newcommand{\PS}{\Party[\sf s]}
\newcommand{\PR}{\Party[\sf r]}
\newcommand{\Sim}{\ensuremath{\mathcal{S}}\xspace}
\newcommand{\Oracle}{\ensuremath{\mathcal{O}}\xspace}
\newcommand{\secp}{\ensuremath{n}\xspace}
\newcommand{\cF}{\ensuremath{\mathcal{F}}\xspace}
\newcommand{\cA}{\ensuremath{\mathcal{A}}\xspace}
\newcommand{\cZ}{\ensuremath{\mathcal{Z}}\xspace}
\newcommand{\cS}{\ensuremath{\mathcal{S}}\xspace}
\newcommand{\create}{\ensuremath{\mathsf{create}}\xspace}
\newcommand{\run}{\ensuremath{\mathsf{run}}\xspace}
\newcommand{\poly}{\ensuremath{\mathrm{poly}}\xspace}
\newcommand{\lmax}{\lambda_{\operatorname{max}}}
\newcommand{\rejz}{A_{\overline{0}}}
\newcommand{\rejo}{A_{\overline{1}}}
\newcommand{\accz}{A_{0}}
\newcommand{\acco}{A_{1}}
\newcommand\oast{\stackMath\mathbin{\stackinset{c}{0ex}{c}{0ex}{\ast}{\bigcirc}}}
\newcommand{\func}[1][\relax]{{\ensuremath{\mathcal{F}_{\tt #1}}}\xspace}
\newcommand{\funcOTM}{\ensuremath{\func[OTM]}\xspace}
\newcommand{\funcHT}{\ensuremath{\func[wrap]}\xspace}
\newcommand{\cH}{\ensuremath{\mathcal{H}}\xspace}
\newcommand{\pidum}{P}
\def\exec{\mathsf{EXEC}}
\newtheorem{theorem}{Theorem}[section]
\newtheorem{definition}[theorem]{Definition}
\newtheorem{remark}[theorem]{Remark}
\newtheorem{lemma}[theorem]{Lemma}
\newtheorem{conj}[theorem]{Conjecture}
\newtheoremstyle{named}{}{}{\itshape}{}{\bfseries}{.}{.5em}{\thmnote{#3's }#1}
\theoremstyle{named}
\newtheorem*{namedtheorem}{Main Theorem (informal)}
\newlength{\protowidth}
\newcommand{\namedref}[2]{\hyperref[#2]{#1~\ref*{#2}}}
\newcommand{\subfigureref}[2]{\hyperref[#1]{Figure~\ref*{#1}#2}}
\definecolor{darkred}{rgb}{0.5, 0, 0}
\definecolor{darkgreen}{rgb}{0, 0.5, 0}
\definecolor{darkblue}{rgb}{0,0,0.5}
\newcommand{\bra}[1]{\langle #1|}
\newcommand{\ket}[1]{|#1\rangle}
\newcommand{\braket}[2]{\langle #1|#2\rangle}
\newcommand{\ketbra}[2]{\ket{#1}{\bra{#2}}}
\newcommand{\Tr}{Tr} 
\newcommand{\unitary}{\mathcal{U}}
\newcommand{\trace}{{\rm Tr}}
\renewcommand{\set}[1]{{\left\{#1\right\}}}    
\renewcommand{\abs}[1]{\left\lvert #1 \right\rvert}
\newcommand{\hgate}{\operatorname{H}}
\renewcommand{\poly}{\operatorname{poly}}
\newcommand{\complex}{{\mathbb C}}
\renewcommand{\comment}[1]{}
\newcommand{\spa}[1]{\mathcal{#1}}
\newcommand{\dens}{\mathcal{D}}
\mathchardef\mhyphen="2D
\newcommand{\psigood}{\ket{\psi_{\rm good}}}
\newcommand{\pigood}{\Pi_{\rm good}}
\newcommand{\psiapprox}{\ket{\psi_{\rm approx}}}
\begin{document}

\title{Towards Quantum One-Time Memories from Stateless Hardware}

\author{Anne Broadbent}
\affiliation{Department of Mathematics and Statistics, University of Ottawa, Ontario, Canada}
\email{abroadbe@uottawa.ca}
\author{Sevag Gharibian}
\affiliation{Department of Computer Science, Paderborn University, Germany, and Virginia Commonwealth University, USA}
\email{sevag.gharibian@upb.de}
\author{Hong-Sheng Zhou}
\affiliation{Department of Computer Science, Virginia Commonwealth University, Virginia, USA}
\email{hszhou@vcu.edu}

\maketitle


\begin{abstract}
A central tenet of theoretical cryptography is the study of the minimal assumptions required to implement a given cryptographic primitive. One such primitive is the one-time memory~(OTM), introduced by Goldwasser, Kalai, and Rothblum [CRYPTO 2008], which is a classical functionality modeled after a non-interactive \mbox{1-out-of-2} oblivious transfer, and which is complete for one-time classical and quantum programs. It is known that secure OTMs do not exist in the standard model in both the classical and quantum settings. Here, we propose a scheme for using quantum information, together with the assumption of stateless (\emph{i.e.},  reusable) hardware tokens, to build statistically secure OTMs. Via the semidefinite programming-based quantum games framework of Gutoski and Watrous [STOC 2007], we prove security for a malicious receiver making at most $0.114n$ adaptive queries to the token (for $n$ the key size), in the quantum universal composability framework, but leave open the question of security against a polynomial amount of queries. Compared to alternative schemes derived from the literature on quantum money, our scheme is technologically simple since it is of the ``prepare-and-measure'' type. We also give two impossibility results showing certain assumptions in our scheme cannot be relaxed.
\end{abstract}

\section{Introduction}
Theoretical cryptography centers around building cryptographic primitives secure against adversarial attacks.
In order to allow a broader set of such primitives to be implemented, one often considers restricting the power of the adversary.  For example, one can limit the {\em computing} power of adversaries to be polynomial bounded~\cite{FOCS:Yao82a,FOCS:BluMic82}, restrict the {\em storage} of adversaries to be bounded or noisy~\cite{Mau92,CM97,DFSS05}, or make {\em trusted setups} available to honest players \cite{STOC:Kilian88,STOC:BluFelMic88,FOCS:Canetti01,STOC:CLOS02,C:IshPraSah08,C:PraRos08,STOC:LinPasVen09,TCC:MajPraRos09,C:MajPraRos10,ITCS:MauRen11,TCC:KraMul11,EC:KMPS14}, to name a few. One well-known trusted setup  is \emph{tamper-proof hardware}~\cite{EC:Katz07,C:GolKalRot08}, which is assumed to provide a specific input-output functionality, and which can only be accessed in a ``black box'' fashion. 
The hardware can maintain a state (\emph{i.e.}, is \emph{stateful}) and possibly carry out complex functionality, but presumably may be difficult or expensive to implement or manufacture. This leads to an interesting research direction: Building cryptography primitives using the {\em simplest} (and hence easiest and cheapest to manufacture) hardware.

In this respect, two distinct simplified notions of hardware have captured considerable interest.
The first is the notion of a {\em one-time memory (OTM)}~\cite{C:GolKalRot08}, which
is arguably the simplest possible notion of {\em stateful} hardware. An OTM, modeled after a non-interactive 1-out-of-2 {oblivious transfer}, behaves as follows:
first, a player (called the \emph{sender}) embeds two values $s_0$ and~$s_1$ into the OTM, and then gives the OTM to another player (called the \emph{receiver}). The receiver can now read his choice of precisely one of $s_0$ or~$s_1$; after this ``use'' of the OTM, however, the unread bit is lost forever.
Interestingly, OTMs are complete for implementing \emph{one-time} use programs (OTPs): 
 given access to OTMs, one can implement statistically secure OTPs for any efficiently computable program in the universal composability (UC) framework~\cite{TCC:GISVW10}. (OTPs, in turn, have applications in software protection and one-time proofs~\cite{C:GolKalRot08}.) In the {quantum} UC model, OTMs enable \emph{quantum} one-time programs~\cite{C:BroGutSte13}.  (This situation is analogous to the case of \emph{oblivious transfer} being complete for two-party secure function evaluation~\cite{STOC:Kilian88,C:IshPraSah08}.)
Unfortunately, OTMs are inherently \emph{stateful}, and thus represent a very strong cryptographic assumption --- any physical implementation of such a device must somehow maintain internal knowledge between activations, \emph{i.e.}, it must completely ``self-destruct'' after a single use.

This brings us to a second important simplified notion of hardware known as a \emph{stateless} token~\cite{EC:ChaGoySah08}, which keeps
no record of previous interactions. 
On the positive side, such hardware is presumably easier to implement. On the negative side, an adversary can run an experiment with stateless hardware as many times as desired, and each time the hardware is essentially ``reset''. (Despite this, stateless hardware has been useful in achieving {\em computationally secure} multi-party computation~\cite{EC:ChaGoySah08,TCC:GISVW10,TCC:CKSYZ14}, and  {\em statistically secure} commitments~\cite{AC:DamSca13}.) It thus seems impossible for stateless tokens to be helpful in implementing any sort of ``self-destruct'' mechanism.  Indeed, classically stateful tokens are trivially more powerful than stateless ones, as observed in, \emph{e.g.},~\cite{TCC:GISVW10}. This raises the question:
\begin{quote}
{\em
Can \emph{quantum} information, together with a classical stateless token, be used to simulate ``self destruction'' of a hardware token?
}
\end{quote}

\noindent In particular, a natural question along these lines is whether quantum information can help implement an~OTM. Unfortunately, it is known that quantum information \emph{alone} cannot implement an~OTM (or, more generally, any one-time program)~\cite{C:BroGutSte13}; see also Section~\ref{sec:Impossibility} below. We thus ask the question: What are the minimal cryptographic assumptions required in a quantum world to implement an~OTM?

\paragraph{Contributions and summary of techniques.} We propose what is, to our knowledge, the first prepare-and-measure quantum protocol that constructs  OTMs from stateless hardware tokens. For this protocol, we are able to rigorously prove information theoretic security against an adversary making a \emph{linear} (in~$n$, the security parameter) number of adaptive queries to the token. While we conjecture that security holds also for \emph{polynomially} many queries, note that already in this setting of linearly many adaptive queries, our protocol achieves something impossible classically (\emph{i.e.}, classically, obtaining security against a linear number of queries is impossible). We also show stand-alone security against a malicious sender.

\smallskip
\noindent

\noindent\underline{\sc Historical Note.} We proposed the concept that quantum information could provide a ``stateless to stateful'' transformation in a preliminary version of this work~\cite{BGZ15}; however, that work claimed security against a \emph{polynomial} number of token queries, obtained via a reduction from the interactive to the non-interactive setting. We thank an anonymous referee for catching a subtle, but important bug which appears to rule out the proof approach of~\cite{BGZ15}. The current paper hence employs a different proof approach, which models interaction with the token as a ``quantum game'' via semidefinite programming (further details below). Since our original paper was posted, recent work~\cite{CGLZ18} has shown an alternate quantum ``stateful to stateless'' transformation via quantum money constructions~\cite{BS18}. Specifically, in~\cite{CGLZ18}, security against a polynomial number of queries is achieved, albeit with respect to a new definition  of ``OTMs relative to an oracle'' (while the security results of the present paper are with respect to the well-established simulation-based definition of \cite{TCC:GISVW10,EC:Katz07}).  Furthermore,~\cite{CGLZ18} directly applies known quantum money constructions, which require difficult-to-prepare highly entangled states. Our focus here, in contrast, is to take a ``first-principles'' approach and build a technologically simple-to-implement scheme which requires no entanglement, but rather the preparation of just one of four single qubit states, $\ket{0},\ket{1},\ket{+},\ket{-}$. Indeed, the two works are arguably complementary in that the former focuses primarily on \emph{applications} of ``stateful'' single-use tokens, while our focus is on the most technologically simple way to \emph{implement} such ``stateful'' tokens.

\smallskip
\noindent
\underline{\sc Construction.} Our construction is inspired by Wiesner's \emph{conjugate coding}~\cite{wiesner1983conjugate}: the quantum portion of the protocols consists in $n$  quantum states chosen uniformly at random from $\{\ket{0}, \ket{1}, \ket{+}, \ket{-}\}$ (note this encoding is independent of the classical bits of the OTM functionality). We then couple this $n$-qubit quantum state, $\ket{\psi}$ (the \emph{quantum key})
with a \emph{classical} stateless hardware token, which takes as inputs a choice bit $b$, together with an $n$-bit string~$y$. If $b=0$, the hardware token verifies that the bits of $y$ that correspond to \emph{rectilinear} ($\ket{0}$ or $\ket{1}$, \emph{i.e.}, $Z$ basis) encoded qubits of $\ket{\psi}$ are consistent with the measurement of $\ket{\psi}$ in the computational basis, in which case the bit $s_0$ is returned. If $b=1$, the hardware token verifies that the bits of $y$ that correspond to \emph{diagonal} ($\ket{+}$ or $\ket{-}$, \emph{i.e.}, $X$ basis) encoded qubits of $\ket{\psi}$ are consistent with the measurement of $\ket{\psi}$ in the diagonal  basis, in which case the bit $s_1$ is returned.\footnote{We note that a simple modification using a classical one-time pad could be used to make \emph{both} the quantum state and hardware token independent of $s_0$ and $s_1$: the token would output one of two uniformly random bits $r_0$ and $r_1$, which could each be used to decrypt a single bit, $s_0$ or $s_1$.}
The honest use of the OTM is thus intuitive: for choice bit $b=0$, the user measures each qubit of the quantum key in the rectilinear basis to obtain an $n$-bit string~$y$, and inputs $(b,y)$ into the hardware token. If $b=1$, the same process is applied, but with measurements in the diagonal basis.

\smallskip
\noindent
\underline{\sc Assumption.} Crucially, we assume the hardware token accepts \emph{classical} input only (alternatively and equivalently, the token immediately measures its quantum input in the standard basis), \emph{i.e.}, it cannot be queried in superposition. Although this may seem a strong assumption, in Section~\ref{sscn:super} we show that any token which can be queried in superposition {in a reversible way}, cannot be used to construct a secure OTM (with respect to our setting in which the adversary is allowed to apply arbitrary quantum operations). Similar classical-input hardware has previously been considered in, \emph{e.g.},~\cite{C:Unruh13,C:BroGutSte13}.

\smallskip

\noindent
\underline{\sc Security and intuition.}
Stand-alone security against a malicious sender is relatively straightforward to establish, since the protocol consists in a single message from the sender to the receiver, and since stand-alone security only requires simulation of the \emph{local} view of the adversary.

The intuition underlying security against a malicious receiver is clear: in order for a receiver to extract a bit $s_b$ as encoded in the OTM, she must perform a complete measurement of the qubits of $\ket{\psi}$ in order to obtain a classical {key} for $s_b$ (since, otherwise, she would likely fail the test as imposed by the hardware token). But such a  measurement would invalidate the receiver's chance of extracting the bit $s_{1- b}$! This is exactly the ``self-destruct''-like property we require in order to implement an OTM. This intuitive notion of security was present in Wiesner's proposal for quantum money~\cite{wiesner1983conjugate}, and is often given a physical explanation in terms of the no-cloning theorem~\cite{wootters1982single} or Heisenberg uncertainty relation~\cite{Hei27}.

Formally, we work in the  statistical (\emph{i.e.}, information-theoretic) setting of the quantum \emph{Universal Composability} (UC) framework~\cite{EC:Unruh10}, which allows us to make strong security statements that address the \emph{composability} of our protocol within others. As a proof technique, we describe a simulator, such that for any ``quantum environment'' wishing to interact with the OTM,
 the environment statistically cannot tell whether it is interacting with the \emph{ideal} OTM functionality or the \emph{real} OTM instance provided by our scheme. The security of this simulator requires a statement of the following form: Given access to a (randomly chosen) ``quantum key'' $\ket{\psi_k}$ and corresponding stateless token~$V_k$, it is highly unlikely for an adversary to successfully extract keys for \emph{both} the secret bits $s_0$ and $s_1$ held by $V_k$. We are able to show this statement for any adversary which makes a linear number of queries, by which we mean an adversary making $m$ queries succeeds with probability at most $O(2^{2m-0.228n})$ (for $n$ the number of quantum key bits in $\ket{\psi_k}$). In other words, if the adversary makes at most $m=cn$ queries with $c<0.114$, then its probability of cheating successfully is exponentially small in $n$. We conjecture, however, that a similar statement holds for any $m\in\poly(n)$, \emph{i.e.}, that the protocol is secure against polynomially many queries.

 To show security against linearly many queries, we exploit the semidefinite programming-based quantum games framework of Gutoski and Watrous (GW)~\cite{GutoskiW07} to model interaction with the token. Intuitively, GW is useful for our setting, since it is general enough to model multiple rounds of {adaptive} queries to the token, even when the receiver holds quantum ``side information'' in the form of $\ket{\psi}$. We describe this technique in Sections~\ref{sscn:gw} and~\ref{sscn:securityintuition}, and provide full details in Appendix~\ref{scn:newproof}. Summarizing, we show the following.

\begin{namedtheorem}
There exists a protocol $\Pi$, which together with a classical stateless token and the ability to randomly prepare single qubits in one of four pure states, implements the OTM functionality with statistical security in the UC framework against a corrupted receiver making at most $cn$ queries for any $c<0.114$.
\end{namedtheorem}

\noindent As stated above, we conjecture that our protocol is actually secure against polynomially many adaptive queries. However, we are unable to show this claim using our present proof techniques, and hence leave this question open. Related to this, we make the following comments: (1) As far as we are aware, the Main Theorem above is the only known formal proof of any type of security for conjugate coding in the interactive setting with $\Omega(1)$ queries. Moreover, as stated earlier, classical security against $\Omega(1)$ queries is trivially impossible. (2) Our proof introduces the GW semidefinite programming framework from quantum interactive proofs to the study of conjugate coding-based schemes. This framework allows  handling multiple challenges in a unified fashion: arbitrary quantum operations by the user, classical queries to the token, and the highly non-trivial assumption of quantum side information for the user (the ``quantum key'' state sent to the user.)\\

\noindent\emph{Towards security against polynomially many queries.} Regarding the prospects of proving security against polynomially many adaptive queries, we generally believe it requires a significant new insight into how to design a ``good'' feasible solution to the primal semidefinite program (SDP) obtained via GW. However, in addition to our proof for linear security (Theorem~\ref{thm:cheatingbound}), in Appendix~\ref{app:cleaner} we attempt to give evidence towards our conjecture for polynomial security. Namely, Appendix~\ref{sscn:stream} first simplifies the SDPs obtained from GW, and derives the corresponding dual SDPs. We remark these derivations apply for any instantiation of the GW framework, \emph{i.e.} they are not specific to our setting, and hence may prove useful elsewhere. In Appendix~\ref{sscn:approx}, we then give a feasible solution $Y$ (Equation~(\ref{eqn:prob})) to the dual SDP. While $Y$ is simple to state, it is somewhat involved to analyze. A heuristic analysis suggests $Y$'s dual objective function value has roughly the behavior needed to show security, \emph{i.e.} the value scales as $m/\sqrt{2^n}$, for $m$ queries and $n$ key bits. If $Y$ were to be the \emph{optimal} solution to the dual SDP, this would strongly suggest the optimal cheating probability is essentially $m/\sqrt{2^n}$. However, we explicitly show $Y$ is not optimal, and so $m/\sqrt{2^n}$ is only a \emph{lower bound} on the optimal cheating probability\footnote{Indeed, an attack in the Breidbart basis breaks our scheme with probability $2^{-0.228n}$, as observed by David Mestel; see Section~\ref{sscn:approx}.}. Nevertheless, we conjecture that while $Y$ is not optimal, it is \emph{approximately} optimal (see Conjecture~\ref{conj:only} for a precise statement); this would imply the desired polynomial security claim. Unfortunately, the only techniques we are aware of to show such approximate optimality involve deriving a better primal SDP solution, which appears challenging.\\

\noindent{\bf Further Related work.}
Our work contributes to the growing list of functionalities achievable with quantum information, yet unachievable classically. This includes: unconditionally secure key expansion \cite{BB84}, physically uncloneable money~\cite{wiesner1983conjugate,MVW13,PYJLC12},  a  reduction from oblivious transfer to bit commitment~\cite{C:BBCS91,C:DFLSS09} and to other primitives such as ``cut-and choose'' functionality~\cite{TCC:FKSZZ13}, and   revocable time-release quantum encryption~\cite{EC:Unruh14}. Importantly, these protocols all make use of the technique of conjugate coding~\cite{wiesner1983conjugate}, which is also an important technique used in protocols for OT in the bounded quantum storage and noisy quantum storage models~\cite{DFSS05,WST08} (see~\cite{BS15} for a survey).

A number of proof techniques have been developed in the context of conjugate coding, including entropic uncertainty relations~\cite{WW10}. In the context of QKD, another technique is the use of de Finetti reductions~\cite{Ren08} (which exploit the symmetry of the scheme in order to simplify the analysis).  Recently, semidefinite programming (SDP) approaches have been applied to analyze security of conjugate coding~\cite{MVW13} for quantum money, in the setting of one round of interaction with a ``stateful'' bank. SDPs are also the technical tool we adopt for our proof (Section~\ref{sscn:securityintuition} and Appendix~\ref{scn:newproof}), though here we require the more advanced quantum games SDP framework of Gutoski and Watrous~\cite{GutoskiW07} to deal with {multiple} adaptive interactions with {stateless} tokens. Reference~\cite{PYJLC12} has also made use of Gavinsky's~\cite{G12} quantum retrieval games framework.

Continuing with proof techniques, somewhat similar to~\cite{PYJLC12}, Aaronson and Christiano~\cite{AC12} have studied quantum money schemes in which one interacts with a verifier. They introduce an ``inner product adversary method'' to lower bound the number of queries required to break their scheme.

We remark that \cite{PYJLC12} and~\cite{MVW13} have studied schemes based on conjugate coding similar to ours, but in the context of quantum money. In contrast to our setting, the schemes of~\cite{PYJLC12} and~\cite{MVW13} (for example) involve dynamically chosen random challenges from a verifier to the holder of a ``quantum banknote'', whereas in our work here the ``challenges'' are fixed (\emph{i.e.}, measure all qubits in the $Z$ or $X$ basis to obtain secret bit $s_0$ or $s_1$, respectively), and the verifier is replaced by a stateless token. Thus, ~\cite{MVW13}, for example, may be viewed as using a ``stateful'' verifier, whereas our focus here is on a ``stateless'' verifier (\emph{i.e.},~a token).

Also, we note that prior work has achieved oblivious transfer using quantum information, together with some assumption (\emph{e.g.},~bit commitment~\cite{C:BBCS91} or bounded quantum storage~\cite{DFSS05}). These protocols typically use an interaction phase similar to the ``commit-and-open'' protocol of~\cite{C:BBCS91}; because we are working in the non-interactive setting, these techniques appear to be inapplicable.

Finally, Liu~\cite{ITCS:Liu14,C:Liu14,EC:Liu15} has given stand-alone secure OTMs using quantum information in the \emph{isolated-qubit model}. Liu's approach is nice in that it avoids the use of trusted setups. In return, however, Liu must use the {isolated-qubit model}, which restricts the adversary to perform only single-qubit operations (no entangling gates are permitted); this restriction is, in some sense,  necessary if one wants to avoid trusted setups, as a secure OTM in the plain quantum model cannot exist (see Section~\ref{sec:Impossibility}). In contrast, in the current work we allow unbounded and unrestricted quantum adversaries, but as a result require a trusted setup.  In addition, we remark the security notion of OTMs of~\cite{ITCS:Liu14,C:Liu14,EC:Liu15} is weaker than the simulation-based notion studied in this paper, and it remains an interesting open question whether the type of OTM in~\cite{ITCS:Liu14,C:Liu14,EC:Liu15} is secure under composition (in the current work, the UC framework gives us security under composition for free).

\medskip

\noindent{\bf Significance.} Our results show a strong separation between the classical and quantum settings, since classically, stateless tokens cannot be used to securely implement OTMs.
  To the best of our knowledge, our work is the first to combine conjugate coding with \emph{stateless} hardware tokens. Moreover, while our protocol shares similarities with prior work in the setting of quantum money, building OTMs appears to be a new focus here \footnote{We remark, however, that a reminiscent concept of single usage of quantum ``tickets'' in the context of quantum money is very briefly mentioned in Appendix S.4.1 of~\cite{PYJLC12}.}.

Our protocol has a simple implementation, fitting into the single-qubit prepare-and-measure paradigm, which is widely used as the ``benchmark'' for a ``physically feasible'' quantum protocol (in this model, one needs only the ability to prepares single-qubit states $\ket{0},\ket{1},\ket{+},\ket{-}$, and to perform single-qubit projective measurements. In particular, no entangled states are required, and in principle no quantum memory is required, since qubits can be measured one-by-one as they arrive).
In addition, from a theoretical cryptographic perspective, our protocol is attractive in that its implementation
requires an assumption of a stateless hardware token, which is conceivably easier and cheaper to manufacture (e.g. analogous to an RFID tag) than a stateful token.

In terms of security guarantees, we allow \emph{arbitrary} operations on behalf of a malicious quantum receiver in our protocol (\emph{i.e.}, all operations allowed by quantum mechanics), with the adversary restricted in that the stateless token is assumed only usable as a black box. The security we obtain is statistical, with the only computational assumption being on the number of \emph{queries} made to the token (recall we show security for a linear number of queries, and conjecture security for polynomially many queries). Finally, our security analysis is in the quantum UC framework against a corrupted receiver; this means our protocol can be easily composed with many others; for example, combining our results with~\cite{C:BroGutSte13}'s protocol immediately yields UC-secure quantum OTPs against a dishonest receiver.

We close by remarking that our scheme is ``tight'' with respect to two impossibility results, both of which assume the adversary has black-box access to both the token and its inverse operation\footnote{This is common in the oracle model of quantum computation, where a function $f:\set{0,1}^n\mapsto\set{0,1}$ is implemented via the (self-inverse) unitary mapping  $U_f\ket{x}\ket{y}=\ket{x}\ket{y\oplus f(x)}$.}. First, the assumption that the token be queried only in the computational basis cannot be relaxed: Section~\ref{sscn:super} shows that if the token can be queried in superposition, then an adversary in our setting can easily break any OTM scheme. Second, our scheme has the property that corresponding to each secret bit $s_i$ held by the token, there are exponentially many valid keys  one can input to the token to extract $s_i$. In Section~\ref{sscn:bounded}, we show that for any ``measure-and-access'' OTM  (\emph{i.e.}, an OTM in which one measures a given quantum key and uses the classical measurement result to access a token to extract data,  of which our protocol is an example\footnote{The term ``measure-and-access'' here is not to be confused with ``prepare-and-measure''. We define the former in Section~\ref{sscn:bounded} to mean a protocol in which one measures a given quantum resource state to extract a classical key, which is then used for a desired purpose. ``Prepare and measure'', in contrast, is referring to the fact that our scheme is easy to implement; the preparer of the token just needs to prepare single-qubit states, and an honest user simply measures them.}), a polynomial number of keys implies the ability to break the scheme with inverse polynomial probability (more generally, $\Delta$ keys allows probability at least $1/\Delta^2$ of breaking the scheme).

\medskip

\noindent{\bf Open Questions.}
While our work shows the fundamental advantage that quantum information yields in a stateful to stateless reduction, it does leave a number of open questions:
\begin{enumerate}
\item \textbf{Security against polynomially many queries.} Can our security proof be strengthened to show information theoretic security against a polynomial number of queries to the token? We conjecture this to be the case, but finding a formal proof has been elusive. (See discussion under ``Towards security against polynomially many adaptive queries'' above for details.)

\item \textbf{Composable security against a malicious sender.} While we show composable security against a malicious receiver, our protocol can achieve standalone  security against a malicious sender.  Could an adaptation of our protocol ensure composable security against a malicious sender as well?\footnote{We note that this would require a different protocol, since in our current construction, a cheating sender could program the token to abort based on the user's input.}

\item \textbf{Non-reversible token.} Our impossibility result for quantum one-time memories with \emph{quantum} queries (Section~\ref{sec:Impossibility}) assumes the adversary has access to reversible tokens; can a similar impossibility result be shown for non-reversible tokens? In Section~\ref{sec:Impossibility}, we briefly discuss why it may be difficult to extend the techniques of our impossibility results straightforwardly when the adversary does \emph{not} have access to the inverse of the token.

\item \textbf{Imperfect devices.} While our prepare-and-measure scheme is technologically simple, it is still virtually unrealizable with current technology, due to the requirement of perfect quantum measurements. We leave open the question of tolerance to a small amount of noise.
\end{enumerate}

\medskip

\noindent{\bf Organization.} We begin in Section~\ref{scn:prelims} with preliminaries, including the ideal functionalities for an OTM and stateless token, background on quantum channels, semidefinite programming, and the Gutoski-Watrous framework for quantum games.
In Section~\ref{scn:feasiblity}, we give our construction for an OTM based on a stateless hardware token; the proof ideas for security are also provided.
In Section~\ref{sec:Impossibility}, we discuss ``tightness'' of our construction by showing two impossibility results for ``relaxations'' of our scheme.
In the Appendix, we include the description of classical UC and quantum UC  (Appendix~\ref{appendix:UCmodels}); Appendix~\ref{sec:appendix-def-malicious-sender} establishes notation required in the definition of stand-alone security against a malicious sender.
Appendix~\ref{scn:newproof} gives our formal security proof against a linear number of queries to the token; these results are used to finish the security proof in Section~\ref{scn:feasiblity}. Appendix~\ref{app:cleaner} gives a simplification of the GW SDP, derives its dual, and gives a dual feasible solution which we conjecture to be approximately optimal (formally stated in Conjecture~\ref{conj:only}). Finally, the security proof for a lemma in Section~\ref{sec:Impossibility} can be found in Appendix~\ref{app:4.1}.

\section{Preliminaries}\label{scn:prelims}

\noindent{\bf Notation. }
Two binary distributions $\mathbf{X} $ and $\mathbf{Y}$ are {\em indistinguishable}, denoted
$\mathbf{X} \approx \mathbf{Y}$, if
\begin{equation}
    \left|\Pr(X_n = 1) - \Pr(Y_n =1)\right| \leq \text{negl}(n).
\end{equation}
We define single-qubit $\ket{0}_+ = \ket{0}$ and  $\ket{1}_+ = \ket{1}$, so that $\{\ket{0}_+, \ket{1}_+\}$ form the \emph{rectilinear basis}. We define
 $\ket{0}_\times = \frac{1}{\sqrt{2}}(\ket{0} + \ket{1})$ and $\ket{1}_\times = \frac{1}{\sqrt{2}}(\ket{0} - \ket{1})$, so that $\{\ket{0}_\times, \ket{1}_\times\}$ form the \emph{diagonal basis}.
For strings  $x= x_1,x_2, \ldots x_n \in \{0,1\}^n$ and $\theta = \theta_1, \theta_2, \ldots, \theta_n \in \{+, \times\}^n$, define $\ket{x}_\theta = \bigotimes_{i=1}^n\ket{x_i}_{\theta_i}$. The Hadamard gate in quantum information is $H=(1~1;~1~-1)/\sqrt{2}$. It maps $H\ket{0}_+=\ket{0}_\times$, $H\ket{1}_+=\ket{1}_\times$, $H\ket{0}_\times=\ket{0}_+$, and $H\ket{1}_\times=\ket{1}_+$. For $\spa{X}$ a finite dimensional complex Hilbert space, $\mathcal{L}(\spa{X})$, $\operatorname{Herm}(\spa{X})$, $\operatorname{Pos}(\spa{X})$, and $\spa{D}(\spa{X})$ denote the sets of linear, Hermitian, positive semidefinite, and density operators acting on $\spa{X}$, respectively. The notation $A\succeq B$ means $A-B$ is positive semidefinite.

\medskip

\noindent{\bf Quantum universal composition (UC) framework.}\
We consider simulation-based security in this paper. In particular, we prove the security of our construction against a malicious receiver in the quantum universal composition (UC) framework~\cite{EC:Unruh10}. Please see Appendix~\ref{appendix:UCmodels} for a brief description of the classical UC~\cite{FOCS:Canetti01} and the quantum UC~\cite{EC:Unruh10}.  In the next two paragraphs, we introduce two relevant ideal functionalities of one-time memory and of stateless hardware token.

\medskip

\noindent{\bf One-time memory (OTM).}\
The  one-time memory (OTM) functionality $\funcOTM$ involves two parties, the sender and the receiver, and consists of two phases, ``Create'' and ``Execute''.
Please see {Functionality~\ref{ideal-funct:OTM}} below for details; for the sake of simplicity, we have omitted the session/party identifiers as they should be implicitly clear from the context.
 We sometimes refer to this functionality
$\funcOTM$ as an \emph{OTM token}.

\begin{functionality}[htbp!]
\caption{Ideal functionality $\funcOTM$.
\label{ideal-funct:OTM}}
\begin{enumerate}
\item \textbf{Create:} Upon input~$(s_0, s_1)$ from the sender, with $s_0, s_1 \in \{0,1\}$, send \create to the
receiver and store $(s_0, s_1)$.
\item \textbf{Execute:} Upon input $b\in \{0,1\}$  from the
receiver, send $s_b$ to  receiver. Delete any trace of this instance.
\end{enumerate}
\end{functionality}

\medskip

\noindent{\bf Stateless hardware.} \
The original work of Katz~\cite{EC:Katz07} introduces the ideal functionality $\funcHT$ to model stateful tokens in the UC-framework.
 In the ideal model, a party that wants to create a token, sends a Turing machine to $\funcHT$.
$\funcHT$ will then run the machine
(keeping the state), when the designated party will ask for it. The same functionality can be adapted
to model stateless tokens. It is sufficient that the functionality does not keep the state between two
executions.
A simplified version of the $\funcHT$ functionality as shown in~\cite{EC:ChaGoySah08} (that is very similar to the $\funcHT$
of~\cite{EC:Katz07}) is described below.
Note that, again for the sake of simplicity, we have omitted the session/party identifiers as they should be implicitly clear from
the context.

\begin{functionality}[htbp!]
\caption{ Ideal functionality $\funcHT$.
\label{ideal-funct:HT}}
The functionality is
parameterized by a polynomial $p(\cdot)$, and an implicit security parameter $\secp$.
\begin{enumerate}
\item  \textbf{Create:} Upon input $(\create, M)$ from the sender, where $M$ is a Turing machine,  send \create to the
receiver and store $M$.
\item \textbf{Execute:} Upon input $(\run, msg)$  from the
receiver, execute  $ M(msg)$ for at most $p(\secp)$ steps, and let $out$ be the response. Let $out:=\bot$ if $M$ does not halt in $p(\secp)$ steps.  Send $out$ to the receiver.
\end{enumerate}
\end{functionality}

Although the environment and adversary are unbounded, we specify that stateless hardware can be queried only a polynomial number of times. This is necessary; otherwise the hardware token model is vacuous (with unbounded queries, the entire input-output behavior of stateless hardware can be extracted).

\paragraph{Quantum channels.} We now review quantum channels. A basic background in quantum information is assumed, see e.g.~\cite{NC00} for a standard reference. A linear map $\Phi:\spa{L}(\spa{X})\mapsto \spa{L}(\spa{Y})$ is a \emph{quantum channel} if $\Phi$ is trace-preserving and completely positive (TPCP). Such maps take density operators to density operators. A useful representation of linear maps (or ``superoperators'') $\Phi:\spa{L}(\spa{X})\mapsto \spa{L}(\spa{Y})$ is the Choi-Jamio\l{}kowski representation, $J(\Phi)\in\spa{L}(\spa{Y}\otimes \spa{X})$. The latter is defined (with respect to some choice of orthonormal basis $\set{\ket{i}}$ for $\spa{X}$) as
$
    J(\Phi)=\sum_{i,j}\Phi(\ketbra{i}{j})\otimes\ketbra{i}{j}.
$
The following properties of $J(\Phi)$ hold~\cite{C75,J72}: (1) $\Phi$ is completely positive if and only if $J(\Phi)\succeq 0$, and (2) $\Phi$ is trace-preserving if and only if $\trace_{\spa{Y}}(J(\Phi))=I_{\spa{X}}$. In a nutshell, the Gutoski-Watrous (GW) framework generalizes this definition to \emph{interacting} strategies~\cite{GutoskiW07}.

\paragraph{Semidefinite programs.} We give a brief overview of semidefinite programs (SDPs) from the perspective of quantum information, as done \emph{e.g.}, in the notes of Watrous~\cite{W11_7} or~\cite{MVW13}. For further details, a standard text on convex optimization is Boyd and Vandenberghe~\cite{BV04}.

Given any $3$-tuple $(A,B,\Phi)$ for operators $A\in\operatorname{Herm}(\spa{X})$ and $B\in\operatorname{Herm}(\spa{Y})$, and Hermiticity-preseving linear map $\Phi:\spa{L}(\spa{X})\mapsto\spa{L}(\spa{Y})$, one can state a \emph{primal} and \emph{dual} semidefinite program:
    \begin{center}
  \begin{minipage}{2in}
    \centerline{\underline{Primal problem (P)}}\vspace{-7mm}
    \begin{align*}
        \sup \quad&\trace(AX)\\
        \text{s.t.} \quad&\Phi(X)=B,\\
         & X\in\operatorname{Pos}(\spa{X}),
		\end{align*}
  \end{minipage}
  \hspace*{5mm}
  \begin{minipage}{2in}
    \centerline{\hspace{-10mm}\underline{Dual problem (D)}}\vspace{-7mm}
		\begin{align*}
        \inf \quad&\trace(BY)\\
        \text{s.t.} \quad&\Phi^\ast(Y)\succeq A\\
         & Y\in\operatorname{Herm}(\spa{Y}),
  	\end{align*}	
  \end{minipage}
\end{center}
where $\Phi^*$ denotes the \emph{adjoint} of $\Phi$, which is the unique map satisfying $
\trace(A^\dagger\Phi(B))=\trace((\Phi^\ast (A))^\dagger B)
$
for all $A\in\spa{L}(\spa{Y})$ and $B\in\spa{L}(\spa{X})$. Not all SDPs have feasible solutions (\emph{i.e.} a solution satisfying all constraints); in this case, we label the optimal values as $-\infty$ for P and $\infty$ for D, respectively. Note also that the SDP we derive in Equation~(\ref{eqn:11}) will for simplicity not be written in precisely the form above, but can without loss of generality be made so.

\subsection{The Gutoski-Watrous framework for quantum games}\label{sscn:gw}

We now recall the Gutoski-Watrous (GW) framework for quantum games~\cite{GutoskiW07}, which can be used to model quantum interactions between spatially separated parties. The setup most relevant to our protocol here is depicted in Figure~\ref{fig:interaction}.
    \begin{figure}[t]
    \centering
      \includegraphics[width=13cm]{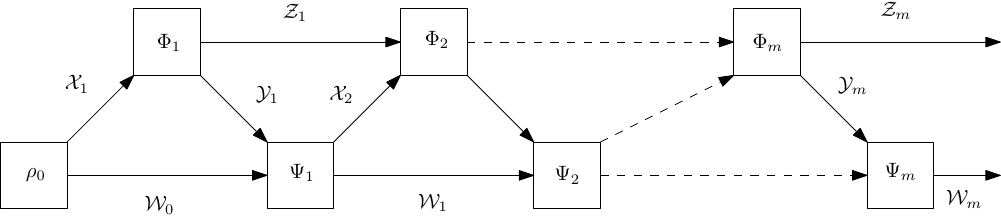}
      \caption{A general interaction between two quantum parties.}
      \label{fig:interaction}
    \end{figure}
    Here, we imagine one party, $A$, prepares an initial state $\rho_0\in\dens(\spa{X}_1\otimes \spa{W}_0)$. Register $\spa{X}_1$ is then sent to the second party ($\spa{W}_0$ is kept as private memory), $B$, who applies some quantum channel $\Phi_i:\mathcal{L}(\spa{X}_1)\mapsto\mathcal{L}(\spa{Y}_1\otimes\spa{Z}_1)$. $B$ keeps register $\spa{Z}_1$ as private memory, and sends $\spa{Y}_1$ back to $A$, who applies channel $\Psi_1:\mathcal{L}(\spa{W}_0\otimes\spa{Y}_1)\mapsto\mathcal{L}(\spa{X}_2\otimes \spa{W}_1)$, and sends $\spa{X}_2$ to $B$. The protocol continues for $m$ messages back and forth, until the final operation $\Psi_m:\mathcal{L}(\spa{W}_m\otimes\spa{Y}_m)\mapsto\complex$, in which $A$ performs a two-outcome measurement (specifically, a POVM $\Lambda=\set{\Lambda_0, \Lambda_1}$, meaning $\Lambda_0,\Lambda_1\succeq 0$, $\Lambda_0+\Lambda_1=I$) in order to decide whether to reject ($\Lambda_0$) or accept ($\Lambda_1$). As done in~\cite{GutoskiW07}, we may assume without loss of generality\footnote{This is due to the Stinespring dilation theorem.} that all channels are given by linear isometries\footnote{A linear isometry $A\in\mathcal{L}(\spa{S},\spa{T})$ satisfies $A^\dagger A=I_{\spa{S}}$,  generalizing the notion of unitary maps to non-square matrices.} $A_k$, i.e. $\Phi_k(X)=A_k X A_k^\dagger$. Reference \cite{GutoskiW07} refers to $(\Phi_1,\ldots,\Phi_m)$ as a \emph{strategy} and $(\rho_0,\Psi_1,\ldots, \Psi_m)$ as a \emph{co-strategy}.  In our setting, the former is ``non-measuring'', meaning it makes no final measurement after $\Phi_m$ is applied, whereas the latter is ``measuring'', since we will apply a final measurement on space $\spa{W}_m$ (not depicted in Figure~\ref{fig:interaction}).

    Intuitively, since our protocol (Section~\ref{section:qOTM}) will begin with the token sending the user a quantum key $\ket{x}_\theta$, we will later model the token as a \emph{measuring co-strategy}, and the user of the token as a \emph{strategy}. The advantage to doing so is that the GW framework allows one to (recursively) characterize any such strategy (resp., co-strategy) via a set of linear (in)equalities and positive semi-definite constraints. (In this sense, the GW framework generalizes the {Choi-Jamio\l{}kowski} representation for channels to a ``{Choi-Jamio\l{}kowski}'' representation for strategies/co-strategies.) To state these constraints, we first write down the {Choi-Jamio\l{}kowski} (CJ) representation of a strategy (resp., measuring co-strategy) from \cite{GutoskiW07}.

    \paragraph{CJ representation of (non-measuring) strategy.} The CJ representation of a strategy $(A_1,\ldots, A_m)$ is given by matrix~\cite{GutoskiW07}
    \begin{equation}\label{eqn:pt}
        \trace_{\spa{Z}_m}(\operatorname{vec}(A)\operatorname{vec}(A)^\dagger),
    \end{equation}
    where $A\in\mathcal{L}(\spa{X}_1\otimes\cdots\otimes \spa{X}_m,\spa{Y}_1\otimes\cdots\otimes\spa{Y}_m\otimes \spa{Z}_m)$ is defined as the product of the isometries $A_i$,
    \begin{equation}\label{eqn:A}
        A:=(I_{\spa{Y}_1\otimes\cdots\otimes\spa{Y}_{m-1}}\otimes A_m)\cdots(A_1\otimes I_{\spa{X}_2\otimes\cdots\otimes\spa{X}_m}),
    \end{equation}
    and the $\operatorname{vec}:\spa{L}(\spa{S},\spa{T})\mapsto\spa{T}\otimes\spa{S}$ mapping is the linear extension of the map $\ketbra{i}{j}\mapsto\ket{i}\ket{j}$ defined on all standard basis states $\ket{i},\ket{j}$.

    \paragraph{CJ representation of (measuring) co-strategy.} Let $\Lambda:=\set{\Lambda_0,\Lambda_1}$ denote a POVM with reject and accept measurement operators $\Lambda_0$ and $\Lambda_1$, respectively. A measuring strategy which ends with a measurement with respect to POVM $\Lambda$ replaces, for $\Lambda_a\in \Lambda$, Equation~(\ref{eqn:pt}) with~\cite{GutoskiW07}
    \begin{align}
        Q_a&:=\trace_{\spa{Z}_m}((\Lambda_a\otimes I_{\spa{Y}_1\otimes\cdots\otimes \spa{Y}_m})\operatorname{vec}(A)\operatorname{vec}(A)^\dagger)\\
        &=\trace_{\spa{Z}_m}(\operatorname{vec}((\sqrt{\Lambda_a}\otimes I_{\spa{Y}_1\otimes\cdots\otimes \spa{Y}_m})A)\operatorname{vec}((\sqrt{\Lambda_a}\otimes I_{\spa{Y}_1\otimes\cdots\otimes \spa{Y}_m})A)^\dagger)\\
        &=:\trace_{\spa{Z}_m}(\operatorname{vec}(B_a)\operatorname{vec}(B_a)^\dagger)\label{eqn:CJ}.
    \end{align}
    To convert this to a \emph{co}-strategy, one takes the transpose of the operators defined above (with respect to the standard basis). (Note: In our use of the GW framework in Section~\ref{sscn:linearsecurity}, all operators we derive will be symmetric with respect to the standard basis, and hence taking this transpose will be unnecessary.)

    \paragraph{Optimization characterization over strategies and co-strategies.} With CJ representations for strategies and co-strategies in hand, one can formulate~\cite{GutoskiW07} the optimal probability with which a strategy can force a corresponding co-strategy to output a desired result as follows. Fix any $Q_a$ from a measuring co-strategy $\set{Q_0,Q_1}$, as in Equation~(\ref{eqn:CJ}). Then, Corollary 7 and Theorem 9 of~\cite{GutoskiW07} show that the maximum probability with which a (non-measuring) strategy can force the co-strategy to output result $a$ is given by
		\begin{align}
			\text{min:}\quad & p\label{eqn:final1}\\
  		\text{subject to:}\quad & Q_a \preceq pR_m\label{eqn:prec}\\
  		& R_k= P_k\otimes I_{\spa{Y}_k}\qquad\qquad\qquad &\text{for }1\leq k\leq m\label{eqn:cond1}\\
        & \trace_{\spa{X}_k}(P_k)= R_{k-1} &\text{for }1\leq k\leq m\label{eqn:cond2}\\
        & R_0 = 1\\
        & R_k\in \operatorname{Pos}(\spa{Y}_{1,\ldots, k}\otimes \spa{X}_{1,\ldots, k})&\text{for }1\leq k\leq m\\
        & P_k\in \operatorname{Pos}(\spa{Y}_{1,\ldots, k-1}\otimes \spa{X}_{1,\ldots, k})&\text{for }1\leq k\leq m\label{eqn:lastcond}\\
        &p\in[0,1] &
          	\end{align}	
        \paragraph{Intuition.} The minimum $p$ denotes the optimal ``success'' probability, meaning the optimal probability of forcing the co-strategy to output $a$ (Theorem 9 of~\cite{GutoskiW07}). The variables above, in addition to $p$, are $\set{R_i}$ and $\set{P_i}$, where the optimization is happening over all $m$-round co-strategies $R_m$ satisfying Equation~(\ref{eqn:prec}). How do we enforce that $R_m$ encodes such an $m$-round co-strategy? This is given by the (recursive) Equations (\ref{eqn:cond1})-(\ref{eqn:lastcond}). Specifically, Corollary 7 of~\cite{GutoskiW07} states that $R_m$ is a valid $m$-round co-strategy if and only if all of the following hold: (1) $R_m\succeq 0$, (2) $R_m=P_m\otimes I_{\spa{Y}_m}$ for $P_m\succeq 0$ and $\spa{Y}_m$ the last incoming message register to the co-strategy, (3) $\trace_{\spa{X}_m}(P_m)$ is a valid $m-1$ round co-strategy (this is the recursive part of the definition). An intuitive sense as to why conditions (2) and (3) should hold is as follows: For any $m$-round co-strategy $R_m$, let $R_{m-1}$ denote $R_m$ restricted to the first $m-1$ rounds. Then, to operationally obtain $R_{m-1}$ from $R_m$, the co-strategy first ignores the last incoming message in register $\spa{Y}_m$. This is formalized via a partial trace over $\spa{Y}_m$, which (once pushed through the CJ formalism\footnote{Recall that the CJ representation of the trace map is the identity matrix (up to scaling).}) translates into the $\otimes I_{\spa{Y}_k}$ term in Equation~(\ref{eqn:cond1}). Since the co-strategy is now ignoring the last \emph{incoming} message $\spa{Y}_m$, any measurement it makes after $m-1$ rounds is independent of the last \emph{outgoing} message $\spa{X}_m$. Thus, we can trace out $\spa{X}_m$ as well, obtaining a co-strategy $R_{m-1}$ on just the first $m-1$ rounds; this is captured by Equation~(\ref{eqn:cond2}).

\section{Feasibility of Quantum OTMs using Stateless Hardware}\label{scn:feasiblity}

In this section, we present a {\em quantum} construction for one-time memories  by using stateless hardware (Section~\ref{section:qOTM}). We also state our main theorem (Theorem~\ref{thm:main}). In Section~\ref{sscn:proof}, we describe the Simulator and prove Theorem~\ref{thm:main} using the technical results of Appendix~\ref{scn:newproof}. The intuition and techniques behind the proofs in Appendix~\ref{scn:newproof} are sketched in Section~\ref{sscn:securityintuition}.

\subsection{Construction}
\label{section:qOTM}

We now present the OTM protocol $\Pi$ in the $\funcHT$ hybrid model, between a sender $\PS$ and a receiver~$\PR$.
Here the security parameter is $\secp$.

\begin{tiret}
\item Upon receiving input $(s_0,s_1)$ from the environment where  $s_0, s_1 \in \{0,1\}$,  sender $\PS$ acts as follows:

\begin{itemize}
\item The sender chooses uniformly random $x \in_R \{0,1\}^n$ and $\theta \in_R \{+, \times\}^n$, and prepares~$\ket{x}_\theta$.
 Based on tuple $(s_0,s_1,x,\theta)$, the sender then prepares the program $M$ as in \textbf{Program~\ref{hardware-token-program}}.

\begin{program} \caption{ Program for hardware token} \label{hardware-token-program}
 Hardcoded values:  $s_0, s_1 \in \{0,1\}$, $x \in \{0,1\}^n$, and $\theta \in \{+, \times\}^n$\\
Inputs: $y\in\set{0,1}^n$ and $b\in\set{0,1}$,
where $y$ is a claimed measured value for the quantum register, and $b$ the evaluator's choice bit
\begin{enumerate}
\item  If $b=0$, check that the $\theta=+$ positions return the correct bits in $y$ according to~$x$.  If Accept, output $s_0$. Otherwise output~$\bot$.
\item  If $b=1$, check that the $\theta=\times$ positions return the correct bits in $y$ according to~$x$.  If Accept, output $s_1$. Otherwise output~$\bot$.
\end{enumerate}
\end{program}

\item The sender sends $\ket{x}_\theta$ to the receiver.
\item The sender sends $(\create, M)$ to functionality $\funcHT$, and the functionality sends  \create to notify the receiver.

\end{itemize}

\item
The receiver $\PR$ operates as follows:\\
Upon  input $b$ from the environment, and $\ket{x}_\theta$ from the receiver, and  \create notification from $\funcHT$,
\begin{itemize}
\item If $b=0$, measure $\ket{x}_\theta$ in the computational basis to get string~$y$. Input $(\run, (y,b))$ into $\funcHT$.
\item If $b=1$, apply $\hgate^{\otimes n}$ to $\ket{x}_\theta$, then measure in the computational basis to get string~$y$. Input $(\run, (y,b))$ into $\funcHT$.
\end{itemize}
Return the output of $\funcHT$ to the environment. \\
 It is easy to see that the output of $\funcHT$ is $s_b$ for both $b=0$ and $b=1$.
\end{tiret}
Note again that the hardware token, as defined in \textbf{Program~\ref{hardware-token-program}}, accepts only classical input (\emph{i.e.}, it cannot be queried in superposition). As mentioned earlier, relaxing this assumption yields impossibility of a secure OTM implementation (assuming the receiver also has access to the token's inverse operation), as shown in Section~\ref{sec:Impossibility}.

%
\subsection{Stand-Alone Security Against a Malicious Sender}
\label{sec:security-sender}

We note that in protocol $\Pi$ of Section \ref{section:qOTM},  once the sender prepares and sends the token, she is no longer involved (and in particular, the sender does not receive any further communication from the receiver). We call such a protocol a \emph{one-way} protocol.
Because of this simple structure, and because the ideal functionality $\funcHT$ also does not return any message to the sender, we can easily establish stand-alone security against a malicious sender (see details in Appendix~\ref{sec:appendix-def-malicious-sender}).

\subsection{UC-Security against a corrupt receiver}
\label{sscn:proof}

Our main theorem, which establishes security against a corrupt receiver is now stated as follows.

\begin{theorem}\label{thm:main}
Construction $\Pi$ above quantum-UC-realizes $\funcOTM$ in the $\funcHT$ hybrid model  with statistical security against an  actively-corrupted receiver making at most $cn$ number of adaptive queries to the token, for any fixed constant $c<0.114$.
\end{theorem}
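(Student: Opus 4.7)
The plan is to prove security via a standard simulation argument in the quantum UC framework: construct a simulator $\Sim$ interacting with $\funcOTM$, and show that its joint output with the environment is statistically indistinguishable from the real execution of $\Pi$ in the $\funcHT$-hybrid model. Since stand-alone security against a malicious sender is handled separately, I focus on the corrupt receiver case. The simulator works as follows. It samples uniformly random $x \in \bool^n$ and $\theta \in \{+,\times\}^n$ itself, prepares $\ket{x}_\theta$, and sends it to the corrupted receiver along with a \create notification, exactly mimicking the real sender. It then emulates the stateless token: on every query $(y,b)$ from the adversary, $\Sim$ runs the verification step from Program~\ref{hardware-token-program} using its own $(x,\theta)$; if the check fails, it returns $\bot$. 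If the check succeeds and $\Sim$ has not yet queried $\funcOTM$, it submits bit $b$ to $\funcOTM$, receives $s_b$, and returns $s_b$. If a second accepting query arrives for the \emph{same} $b$, $\Sim$ returns the stored $s_b$. The event $\mathsf{BAD}$ in which $\Sim$ fails is precisely when, after having queried $\funcOTM$ on some bit $b$, the adversary later makes a valid query for the opposite bit $1-b$ --- since at that point $\Sim$ has already consumed its single query to $\funcOTM$ and cannot learn $s_{1-b}$.

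Next, I would argue that conditioned on $\lnot\mathsf{BAD}$, the simulated view is \emph{identical} to the real one. This is transparent from the construction: the quantum key, the verification predicate, and the responses $s_b$ all coincide with the real protocol; the $\bot$ responses are a deterministic function of $(x,\theta,y,b)$ in both worlds, independent of $s_0,s_1$ until an accepting query fixes the relevant $s_b$. Thus the statistical distance between the real and ideal executions is at most $\Pr[\mathsf{BAD}]$, which is upper bounded by the maximum probability that \emph{any} adversary, given $\ket{x}_\theta$ and $m$ adaptive classical queries to the token (with arbitrary quantum processing between queries), can produce two strings $y_0,y_1$ that pass the respective $b=0$ and $b=1$ verifications.

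The core technical step is then to bound this ``double-extraction'' probability. Here I invoke the main quantitative lemma from Appendix~\ref{scn:newproof} (Theorem~\ref{thm:cheatingbound}), which is established by modeling the adversary-token interaction in the Gutoski--Watrous framework: the token plus the initial preparation of $\ket{x}_\theta$ form a measuring co-strategy on $\spa{Y}_1,\ldots,\spa{Y}_m$ whose ``accept both bits'' outcome is a PSD operator $Q_1$ expressible via Equation~(\ref{eqn:CJ}), and any adversary is a non-measuring strategy. Corollary~7 and Theorem~9 of~\cite{GutoskiW07} turn the optimal cheating probability into a semidefinite program of the form in Equations~(\ref{eqn:final1})--(\ref{eqn:lastcond}); exhibiting a dual-feasible $Y$ whose objective value matches the bound $O(2^{2m-0.228n})$ certifies it by weak duality. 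Plugging in $m\le cn$ with $c<0.114$ gives an exponent $(2c-0.228)n<0$, so $\Pr[\mathsf{BAD}]$ is negligible and indistinguishability follows.

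The main obstacle, which I expect Appendix~\ref{scn:newproof} to absorb, is \emph{not} the simulator design but producing the dual-feasible witness inside the GW SDP. One must (a) correctly fold the randomness over $(x,\theta)$ and the two verification predicates into a single operator $Q_1$, (b) respect the classical-input restriction on the token so that each round's channel is a measure-then-classically-process operation, and (c) design $Y$ satisfying the recursive co-strategy constraints $R_k = P_k \otimes I_{\spa{Y}_k}$ and $\trace_{\spa{X}_k}(P_k) = R_{k-1}$ while achieving the $2^{2m-0.228n}$ bound. A subtle sanity check worth verifying explicitly is that $\Sim$'s ``lazy'' use of $\funcOTM$ induces no detectable deviation before $\mathsf{BAD}$ occurs --- this holds because, prior to the first accepting query, the adversary's view depends only on $(x,\theta)$, which $\Sim$ samples from the correct marginal distribution, and the $\funcOTM$ query is only triggered once $s_b$ is genuinely needed to answer.
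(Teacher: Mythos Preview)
Your proposal is correct and follows the paper's proof essentially verbatim: your simulator, the $\mathsf{BAD}$ event (the paper's Case~2), the perfect-simulation-outside-$\mathsf{BAD}$ argument, and the appeal to Theorem~\ref{thm:cheatingbound} for the $O(2^{2m-0.228n})$ bound all match Sections~\ref{sec:simul}--\ref{sscn:analysis}. One terminological slip worth fixing: Theorem~\ref{thm:cheatingbound} upper-bounds the cheating probability by exhibiting a \emph{primal}-feasible solution $R_{m+1}$ to the minimization SDP $\Gamma$ of Equation~(\ref{eqn:11})---since the optimal value of $\Gamma$ \emph{equals} the cheating probability, any primal feasible point upper-bounds it directly, with no weak duality involved; a dual-feasible $Y$ (as in Appendix~\ref{sscn:stream}) would only \emph{lower}-bound the cheating probability, which is the wrong direction for what you need here.
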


\noindent To prove Theorem~\ref{thm:main}, we must construct and analyze an appropriate simulator, which we now do.

\subsubsection{The simulator}
\label{sec:simul}

In order to prove Theorem~\ref{thm:main}, for an adversary $\cA$ that corrupts the receiver, we build a simulator $\cS$ (having access to the OTM functionality $\funcOTM$), such that for any unbounded environment $\cZ$, the executions in the real model and that in simulation are statistically indistinguishable.
Our simulator $\cS$ is given below:

\begin{tiret}

\item The simulator emulates an internal copy of the adversary $\cA$ who corrupts the receiver. The simulator emulates the communication between $\cA$ and the external environment $\cZ$ by forwarding the communication messages between $\cA$ and $\cZ$.
\item The simulator $\cS$ needs to emulate the whole view for the adversary~$\cA$.
First, the simulator picks dummy inputs $\tilde{s}_0 =0$ and  $\tilde{s}_1 =0$, and randomly chooses $x \in \{0,1\}^n$, and $\theta \in \{+, \times\}^n$, and generates program~$\tilde M$.
Then the simulator plays the role of the sender to send $\ket{x}_\theta$ to the adversary $\cA$ (who controls the corrupted receiver). The simulator also emulates  $\funcHT$ to notify~$\cA$ by sending \create to indicate that the hardware is ready for queries.

\item  For each query $(\run,(b,y))$ to $\funcHT$ from the adversary $\cA$, the simulator evaluates program $\tilde M$ (that is created based on $\tilde s_0, \tilde s_1, x, \theta$) as in the construction, and then
 acts as follows:
\begin{enumerate}
\item If this is a rejecting input, output $\bot$.
\item If this is the first accepting input, call the external $\funcOTM$ with input~$b$, and learn the output  $s_b$ from $\funcOTM$. Output~$s_b$.
\item If this is a subsequent accepting input, output $s_b$ (as above).
\end{enumerate}
\end{tiret}

\subsubsection{Analysis}\label{sscn:analysis}
We now show  that the simulation and the real model execution are statistically indistinguishable.
There are two cases in an execution of
 the simulation which we must consider:

\begin{tiret}
\item \emph{Case 1: In all its queries to $\funcHT$,  the accepting inputs of $\cA$ have the same choice bit~$b$.} In this case, the simulation is perfectly indistinguishable.
\item  \label{case:2} \emph{Case 2: In its queries to $\funcHT$,  $\cA$ produces accepting inputs for both $b=0$ and $b=1$.} In this case,
    it is possible that
    the simulation  fails (the environment can distinguish the real model from the ideal model), since the simulator is only able to retrieve a single bit from the external OTM functionality $\funcOTM$ (either corresponding to $b=0$ or $b=1$).
\end{tiret}
Thus, whereas in Case 1 the simulator behaves perfectly, in Case 2 it is in trouble. Fortunately, in Theorem~\ref{thm:securetoken} we show that the probability that Case~2 occurs is exponentially small in $n$, the number of qubits comprising $\ket{x}_\theta$, provided the number of queries to the token is at most~$cn$ for any $c<0.114$. Specifically, we show that for an arbitrary $m$-query strategy (\emph{i.e.}, any quantum strategy allowed by quantum mechanics, whether efficiently implementable or not, which queries the token at most $m$ times), the probability of Case~$2$ occurring is at most $O(2^{2m-0.228n})$.  This concludes the proof. \looseness=-1

\subsection{Security analysis for the token: Intuition}\label{sscn:securityintuition}

Our simulation proof showing statistical security of our Quantum OTM construction of Section~\ref{section:qOTM} relies crucially on  Theorem~\ref{thm:securetoken}, stated below. As the proof of this theorem uses quantum information theoretic and semidefinite programming techniques (as opposed to cryptographic techniques), let us introduce notation in line with the formal analysis of Appendix~\ref{scn:newproof}.

With respect to the construction of Section~\ref{section:qOTM}, let us replace each two-tuple $(x,\theta)\in\set{0,1}^n\times\set{+,\times}^n$ by a single string $z\in\set{0,1}^{2n}$, which we denote the \emph{secret key}. Bits $2i$ and $2i+1$ of $z$ specify the basis and value of conjugate coding qubit $i$ for $i\in\set{1,\ldots, n}$ (\emph{i.e.}, $z_{2i}=\theta_{i}$ and $z_{2i+1}=x_i$). Also, rename the ``quantum key'' (or conjugate coding key) $\ket{\psi_z}:=\ket{x}_\theta\in(\complex^2)^{\otimes n}$. Thus, the protocol begins by having the sender pick a \emph{secret key} $z\in\set{0,1}^{2n}$ uniformly at random, and preparing a joint state
\begin{equation}\label{eqn:rhoR}
    \ket{\psi}=\frac{1}{2^n}\sum_{z\in\ket{0,1}^{2n}}\ket{\psi_z}_R\ket{z}_T.
\end{equation}
The first register, $R$, is sent to the receiver, while the second register, $T$, is kept by the token. (Thus, the token knows the secret key $z$, and hence also which $\ket{\psi_z}$ the receiver possesses.) The mixed state describing the receiver's state of knowledge at this point is given by
\begin{equation}
    \rho_R := \frac{1}{2^{2n}}\sum_{z\in\set{0,1}^{2n}}\ketbra{\psi_z}{\psi_z}.
\end{equation}

\begin{theorem}\label{thm:securetoken}
    Given a single copy of $\rho_R$, and the ability to make $m$ (adaptive) queries to the hardware token, the probability that an unbounded quantum adversary can force the token to output both bits $s_0$ and $s_1$ scales as $O(2^{2m-0.228n})$.
\end{theorem}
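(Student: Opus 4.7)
The plan is to model the adversary--token interaction in the Gutoski--Watrous (GW) quantum games framework of Section~\ref{sscn:gw} and then bound the value of the resulting semidefinite program. First I would set up the primal SDP: an $m$-query adversary corresponds to an $m$-round non-measuring strategy starting with the reduced state $\rho_R$ of Equation~(\ref{eqn:rhoR}) in a private register, while the token is an $m$-round measuring co-strategy which, on round $k$, receives a classical query $(b_k, y_k) \in \{0,1\} \times \{0,1\}^n$, applies the predicate of Program~\ref{hardware-token-program} against the uniformly random secret key $z \in \{0,1\}^{2n}$ kept in memory, and returns $s_{b_k}$ or $\bot$. The ``cheating'' event (Case~2 of Section~\ref{sscn:analysis}) is that the transcript contains at least one accepting query with $b=0$ \emph{and} one with $b=1$; this is captured by a POVM element $\Lambda_1$ and yields an operator $Q_1$ as in Equation~(\ref{eqn:CJ}), averaged over $z$. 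Plugging $Q_1$ into~(\ref{eqn:final1})--(\ref{eqn:lastcond}) then gives an SDP whose optimum equals the maximum cheating probability.

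Second, I would exploit the fact that queries are classical: restrict attention to co-strategy operators that are block-diagonal with respect to the standard basis on each query register $\spa{X}_k$, so that outgoing and incoming message registers collapse into a classical transcript and the primal reduces to a form amenable to the recursive structure of Corollary~7 of~\cite{GutoskiW07}. I would then derive the corresponding Lagrangian dual; by weak duality, the dual objective value of any feasible $Y$ provides an upper bound on the cheating probability.

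Third---the technical heart of the argument---I would exhibit an explicit dual-feasible $Y$ with objective value $O(2^{2m-0.228n})$. The guiding intuition is Wiesner's uncertainty principle: no measurement on a single copy of $\ket{\psi_z}$ can jointly predict both the $Z$-basis bits of $x$ (needed to pass $b=0$) and the $X$-basis bits (needed to pass $b=1$) with per-qubit success probability exceeding $\cos^2(\pi/8) \approx 0.854$, and $-\log_2 \cos^2(\pi/8) \approx 0.228$. My candidate $Y$ would be built from tensor products of single-qubit ``Breidbart-style'' operators over the $n$ conjugate-coded qubits, together with an overhead factor of roughly $4$ per round to absorb the adversary's freedom in choosing the classical pair $(b_k, y_k)$ at each step. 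Accumulating over $m$ rounds then yields a dual objective on the order of $2^{2m} \cdot (\cos^2(\pi/8))^n$, matching the claimed bound.

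The main obstacle will be verifying dual feasibility, namely the operator inequality $\Phi^\ast(Y) \succeq A$ after all $m$ rounds of the GW recursion. One must track, round by round, how the partial traces and tensor products prescribed by~(\ref{eqn:cond1})--(\ref{eqn:lastcond}) interact both with the uniform average over $z$ and with the Breidbart operators inside $Y$. At the base case this reduces to a single-copy uncertainty bound for conjugate coding, while at each recursive step one must confirm that the factor-of-$4$ overhead suffices to dominate the adversary's adaptive selection of the next classical query. I expect this recursive feasibility check to be the most delicate step, and the constant $0.228$ (equivalently $c < 0.114$) to emerge as the tight consequence of the single-qubit bound $\cos^2(\pi/8)$.
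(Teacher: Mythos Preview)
Your setup of the GW framework and your identification of the constant $0.228 = -\log_2\cos^2(\pi/8) = -\log_2\bigl((1+1/\sqrt{2})/2\bigr)$ are both correct, and the Breidbart intuition is exactly what drives the paper's argument. However, there is a genuine gap in your plan: you are on the wrong side of weak duality. The GW SDP in Equations~(\ref{eqn:final1})--(\ref{eqn:lastcond}) is a \emph{minimization} over co-strategies whose optimal value \emph{equals} the maximum cheating probability. Hence a feasible point for its Lagrangian dual (a maximization) gives only a \emph{lower} bound on the cheating probability, not an upper bound. The paper makes this explicit in Appendix~\ref{app:cleaner}: the dual is derived there, an explicit dual-feasible $Y$ is analyzed, and it is shown that the resulting value only lower-bounds the optimum (and is provably not tight). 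So the ``dual-feasible Breidbart $Y$'' route cannot prove Theorem~\ref{thm:securetoken}.

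The paper's actual proof (Theorem~\ref{thm:cheatingbound}) instead exhibits a \emph{primal} feasible co-strategy $R_{m+1}$: block-diagonal in the token's response registers $\spa{X}_2,\ldots,\spa{X}_{m+1}$, with $I_{\spa{Y}_1\otimes\cdots\otimes\spa{Y}_m}$ on the query registers and $I/2^n$ on the quantum-key register $\spa{X}_1$. Feasibility of the recursive constraints~(\ref{eqn:cond1})--(\ref{eqn:lastcond}) forces a normalization of $1/\abs{T}$, where $T\subseteq\Sigma^m$ is the set of \emph{response} transcripts containing both an accepting $0$-query and an accepting $1$-query; since $\abs{T}\in\Theta(4^m)$, this is the true source of the $2^{2m}$ factor (not a per-round overhead for the adversary's choice of $(b_k,y_k)$, as you suggest). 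The constraint $Q_1\preceq pR_{m+1}$ then reduces to a spectral bound, and the core calculation is Lemma~\ref{l:eval}: $\lambda_{\max}(Q_1)=\tfrac{2}{4^n}(1+1/\sqrt{2})^n$. This is where your Breidbart operators actually appear --- inside the analysis of $Q_1$ on register $\spa{X}_1$, not inside a dual variable. Combining $\lambda_{\max}(Q_1)$ with the $(\abs{T}\,2^n)^{-1}$ eigenvalue of $R_{m+1}$ on each block yields $p\in O(2^{2m-0.228n})$.
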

\noindent Thus, the probability of an unbounded adversary (\emph{i.e.}, with the ability to apply the most general maps allowed in quantum mechanics, trace-preserving completely positive (TPCP) maps, which are not necessarily efficiently implementable) to successfully cheat using $m=cn$ for $c<0.114$ queries is exponentially small in the quantum key size, $n$.
The proof of Theorem~\ref{thm:securetoken} is in Appendix~\ref{scn:newproof}. Its intuition can be sketched as follows.

\paragraph{Proof intuition.} The challenge in analyzing security of the protocol is the fact that the receiver (a.k.a. the user) is not only given adaptive query access to the token, but also a copy of the quantum ``resource state'' $\rho_R$, which it may arbitrarily tamper with (in any manner allowed by quantum mechanics) while making queries. Luckily, the GW framework~\cite{GutoskiW07} (Section~\ref{sscn:gw})) is general enough to model such ``queries with quantum side information''. The framework outputs an SDP, $\Gamma$ (Equation~(\ref{eqn:1})), the optimal value of which will encode the optimal cheating probability for a cheating user of our protocol. Giving a feasible solution for $\Gamma$ will hence suffice to upper bound this cheating probability, yielding Theorem~\ref{thm:securetoken}.\\

\noindent \emph{Coherently modeling quantum queries to the token.} To model the interaction between the token and user, we first recall that all queries to the token must be classical by assumption. To model this process \emph{coherently} in the GW framework, we hence imagine (solely for the purposes of the security analysis) that the token behaves as follows:
\begin{enumerate}
    \item It first sends state $\rho_R$ to the user.
    \item When it receives as $i$th query a quantum state $\rho_i$ from the user, it sends response string $r_i$ to the user, and ``copies'' $\rho_i$ via transversal CNOT gates to a private memory register $\spa{W}_i$, along with $r_i$. It does not access $\rho_i$ again throughout the protocol, and only accesses $r_i$ again in Step 3. For clarity, the token runs a classical circuit, and in the formal setup of Appendix~\ref{scn:newproof} (see Remark~(\ref{rem})), the token conditions each response $r_i$ solely on the current incoming message, $\rho_i$.
    \item After all rounds of communication, the token ``measures'' its stored responses $(r_1,\ldots, r_m)$ in the $Z$-basis to decide whether to accept (user successfully cheated\footnote{We model the token as ``accepting'' when the user successfully cheats, so that a feasible solution to the semidefinite program $\Gamma$ of Equation~(\ref{eqn:1}) correctly \emph{upper bounds} the probability of said cheating. Formally, in the GW framework (Section~\ref{sscn:gw}), we will let $\Lambda_1$ denote this accepting measurement for the token; see Appendix~\ref{scn:newproof}.}) or reject (user failed to cheat).
\end{enumerate}
The ``copying'' phase of Step 2 accomplishes two tasks: First, since the token will never read the ``copies'' of $\rho_i$ again, the principle of deferred measurement~\cite{NC00} implies the transversal CNOT gates effectively simulate measuring $\rho_i$ in the standard basis. In other words, without loss of generality the user is reduced to feeding a classical string $\widetilde{y}$ to the token. Second, we would like the entire security analysis to be done in a unified fashion in a single framework, the GW framework. To this end, we want the token itself to ``decide'' at the end of the protocol whether the user has successfully cheated (i.e. extracted both secret bits). Storing all responses $r_i$ in Step 2 allows us to simulate such a final measurement in Step 3. We reiterate that, crucially, once the token ``copies'' $\rho_i$ and $r_i$ to $W_i$, it (1) never accesses (i.e. reads or writes to) $\rho_i$ again and (2) only accesses $r_i$ again in the final standard basis measurement of Step $3$. Together, these ensure all responses $r_i$ are independent, as required for a stateless token. A more formal justification is in Remark~\ref{rem} of Appendix~\ref{scn:newproof}.\\

\noindent \emph{Formalization in GW framework.} To place the discussion thus far into the formal GW framework, we return to Figure~\ref{fig:interaction}. The bottom ``row'' of Figure~\ref{fig:interaction} will depict the token's actions, and the top row the user's actions. As outlined above, the protocol begins by imagining the token sends initial state $\rho_0=\rho_R$ to the user via register $\spa{X}_1$. The user then applies an arbitrary sequence of TPCP maps $\Phi_i$ to its private memory (modeled by register $\spa{Z}_i$ in round $i$), each time sending a query $\widetilde{y}_i$ (which is, as discussed above a classical string without loss of generality) to the token via register $\spa{Y}_i$. Given any such query $\widetilde{y}_i$ in round $i$, the token applies its own TPCP map $\Psi_i$ to determine how to respond to the query. In our protocol, the $\Psi_i$ correspond to coherently applying a classical circuit, i.e. a sequence of unitary gates mapping the standard basis to itself. Specifically, their action is fully determined by Program~\ref{hardware-token-program}, and {in principle} all $\Psi_i$ are identical since the token is stateless (\emph{i.e.}, the action of the token in round $i$ is unaffected by previous rounds $\set{1,\ldots, i-1}$). (We use the term ``in principle'', as recall from above that in the security analysis we model each $\Psi_i$ as classically copying $(\widetilde{y}_i,r_i)$ to a distinct private register $W_i$.) Finally, after receiving the $m$th query $\widetilde{y}_m$ in register $\spa{Y}_m$, we imagine the token makes a measurement (not depicted in Fig.~\ref{fig:interaction}) based on the query responses $(r_1,\ldots, r_m)$ it returned; if the user managed to extract both $s_0$ and $s_1$ via queries, then the token ``accepts''; otherwise it ``rejects''. (Again, we are using the fact that in our security analysis, the token keeps a history of all its responses $r_i$, solely for the sake of this final measurement.)

With this high-level setup in place, the output of the GW framework is a semidefinite program\footnote{Note the optimization in Equation~(\ref{eqn:1}) differs from that in Equation~(\ref{eqn:final1}). This is because, technically, Equation~(\ref{eqn:1}) is not yet an SDP due to the quadratic constraint $Q_a\preceq p R_{m}$. It is, however, easily seen to be equivalent to the SDP in Equation~(\ref{eqn:final1}). We thank Jamie Sikora for pointing this out to us.}, denoted $\Gamma$ (see Appendix~\ref{scn:newproof} for further details):
		\begin{align}
			\text{min:}\quad & p\label{eqn:1}\\
  		\text{subject to:}\quad & Q_1 \preceq R_{m+1}\label{eqn:2}\\
  		& R_k= P_k\otimes I_{\spa{Y}_k}\qquad\qquad\qquad &\text{for }1\leq k\leq m+1\label{eqn:3}\\
        & \trace_{\spa{X}_k}(P_k)= R_{k-1} &\text{for }1\leq k\leq m+1\label{eqn:4}\\
        & R_0 = p\label{eqn:5}\\
        & R_k\in \operatorname{Pos}(\spa{Y}_{1,\ldots, k}\otimes \spa{X}_{1,\ldots, k})&\text{for }1\leq k\leq m+1\label{eqn:6}\\
        & P_k\in \operatorname{Pos}(\spa{Y}_{1,\ldots, k-1}\otimes \spa{X}_{1,\ldots, k})&\text{for }1\leq k\leq m+1\label{eqn:7}
          	\end{align}	
 Above, $Q_1$ encodes the actions of the token, i.e. the co-strategy in the bottom row of Figure~\ref{fig:interaction}. The variable $p$ denotes an upper bound on the optimal cheating probability (\emph{i.e.}, the probability with which both $s_0$ and $s_1$ are extracted), subject to linear constraints (Equations~(\ref{eqn:3})-(\ref{eqn:7})) which enforce that operator $R_{m+1}$ encodes a valid co-strategy (see Section~\ref{sscn:gw}). Theorem 9 of~\cite{GutoskiW07} now says that the minimum $p$ above encodes precisely the optimal cheating probability for a user which is constrained only by the laws of quantum mechanics. Since $\Gamma$ is a minimization problem, to upper bound the cheating probability it hence suffices to give a feasible solution $(p,R_1,\ldots,R_{m+1},P_1,\ldots, P_{m+1})$ for $\Gamma$, which will be our approach.

  \paragraph{Intuition for $Q_1$ and an upper bound on $p$.} It remains to give intuition as to how one derives $Q_1$ in $\Gamma$, and how an upper bound on the optimal $p$ is obtained. Without loss of generality, one may assume that each of the token's TPCP maps $\Psi_i$ are given by \emph{isometries} $A_i:\spa{Y}_i\otimes\spa{W}_{i-1}\mapsto\spa{X}_{i+1}\otimes\spa{W}_i$, meaning $A_i^\dagger A_i=I_{\spa{Y}_i\otimes\spa{W}_{i-1}}$ (due to the Stinespring dilation theorem). (We omit the first isometry which prepares state $\rho_0$ in our discussion here for simplicity.) Let us denote their sequential application by a single operator $A:=A_m\cdots A_1$ (note: to make the product well-defined, in Equation~(\ref{eqn:A}) of Appendix~\ref{scn:newproof}, one uses tensor products with identity matrices appropriately). Then, the Choi-Jamio\l{}kowski representation of $A$ is given by~\cite{GutoskiW07} (see Section~\ref{sscn:gw})
  \begin{equation}
    \trace_{\spa{Z}_m}(\operatorname{vec}(A)\operatorname{vec}(A)^\dagger),
  \end{equation}
  where we trace out the token's private memory register $\spa{Z}_m$. (The operator $\operatorname{vec}(\cdot)$ reshapes matrix $A$ into a vector; its precise definition is given in Section~\ref{sscn:gw}.) However, since in our security analysis, we imagine the token also makes a final measurement via some POVM $\Lambda=\set{\Lambda_0,\Lambda_1}$, whereupon obtaining outcome $\Lambda_1$ the token ``accepts'', and upon outcome $\Lambda_0$ the token rejects, we require a slightly more complicated setup. Letting $B_1:= \Lambda_1 A$, we define $Q_1$ as~\cite{GutoskiW07}
  \begin{equation}
  Q_1 = \trace_{\spa{Z}_m}(\operatorname{vec}(B_1)\operatorname{vec}(B_1)^\dagger).
  \end{equation}

  \noindent The full derivation of $Q_1$ in our setting takes a few steps (App.~\ref{scn:newproof}). Here, we state a slightly simplified version of $Q_1$ for exposition with intuition:
  \begin{align}
    Q_1&=\frac{1}{4^{n}}\sum_{\text{``successful''}r}\ketbra{r_m}{r_m}_{\spa{X}_{m+1}}\otimes\cdots\otimes\ketbra{r_1}{r_1}_{\spa{X}_{2}}\otimes\\
        &\hspace{4mm}\left(\sum_{\substack{\text{messages }\widetilde{y}\text{ and keys }z\\\text{consistent with $r$}}}\ketbra{\widetilde{y}_m}{\widetilde{y}_m}_{\spa{Y}_m}\otimes \cdots \otimes\ketbra{\widetilde{y}_1}{\widetilde{y}_1}_{\spa{Y}_{1}}\otimes \ketbra{\psi_z}{\psi_z}_{\spa{X}_1}\right).
    \end{align}
\noindent Above, recall each string $r_i$ denotes the response of the token given the $i$th query $\widetilde{y}_i$ from the user; hence, the corresponding projectors in $Q_1$ act on spaces $\spa{X}_2$ through $\spa{X}_{m+1}$. We say $r$ is ``successful'' if it encodes the user successfully extracting both secret bits from the token. Each string $\widetilde{y}_i\in\set{0,1}^{n+1}$ denotes the $i$th query sent from the user to the token, where each $\widetilde{y}_i=b_i\circ y_i$ in the notation of Program~\ref{hardware-token-program}, \emph{i.e.} $b_i\in \set{0,1}$ is the choice bit for each query. Each such message is passed via register $\spa{Y}_i$. The states $\ket{\psi_z}$ and strings $z$ are defined as in the beginning of Section~\ref{sscn:securityintuition}; recall $z\in\set{0,1}^{2n}$ and $\ket{\psi_z}\in(\complex^2)^{\otimes n}$ denote the secret key and corresponding quantum key, respectively. The inner summation is over all messages $\widetilde{y}$ and keys $z$ such that the token correctly returns response $r_i$ given both $\widetilde{y}_i$ and $z$.

\emph{Upper bounding $p$.} To now upper bound $p$, we give a feasible solution $R_{m+1}$ satisfying the constraints of~$\Gamma$. Note that giving even a solution which attains $p=1$ for all $n$ and $m$ is \emph{non}-trivial --- such a solution is given in Lemma~\ref{l:redundant} of Appendix~\ref{sscn:linearsecurity}. Here, we give a solution which attains $p\in O(2^{2m-0.228n})$, as claimed in Theorem~\ref{thm:securetoken} (and formally proven in Theorem~\ref{thm:cheatingbound} of Appendix~\ref{sscn:linearsecurity}). Namely, we set
\begin{equation}
    R_{m+1}=\frac{1}{N}\sum_{\text{``successful'' }r}\ketbra{r_m}{r_m}_{\spa{X}_{m+1}}\otimes\cdots\otimes\ketbra{r_1}{r_1}_{\spa{X}_{2}}\otimes I_{Y_1\otimes \cdots\otimes Y_m}  \otimes \frac{1}{2^n}I_{\spa{X}_1},
\end{equation}
where intuitively $N$ is the total number of strings $r$ corresponding to successful cheating, and recall $n$ is the key size. This satisfies constraint~(\ref{eqn:3}) of $\Gamma$ due to the identity term $I_{Y_1\otimes \cdots\otimes Y_m}$. The renormalization factor of $(N2^n)^{-1}$ above ensures that tracing out all $\spa{X}_i$ registers yields $R_0=1$ in constraint~(\ref{eqn:5}) of $\Gamma$. We are thus reduced to choosing the minimum $p$ such that constraint $(\ref{eqn:2})$ is satisfied. Note that setting $p=1$ will \emph{not} work for large enough $m$ for this choice of $R_{m+1}$. To see why, observe we have chosen $R_{m+1}$ to align with the block-diagonal structure of $Q_1$ on registers $\spa{X}_{2},\ldots,\spa{X}_m$. Since registers $\spa{Y}_1\otimes \cdots\otimes \spa{Y}_m$ and $\spa{X}_1$ of $R_{m+1}$ are proportional to the identity matrix, it thus suffices to characterize the largest eigenvalue of $Q_1$, $\lmax(Q_1)$. This is done by Lemma~\ref{l:eval} of Appendix~\ref{sscn:linearsecurity}, which says
\begin{equation}
    \lmax(Q_1)=\frac{2}{4^n}\left(1+\frac{1}{\sqrt{2}}\right)^n.
\end{equation}
Combining this bound on $\lmax(Q_1)$ with the parameters of $R_{m+1}$ above now yields the desired claim that $p\in O(2^{2m-0.228n})$. For (say) $m\geq n$ this bound is vacuous, and thus does not suffice to show even the trivial bound $p\leq 1$ for all $m$, as stated. (See Lemma~\ref{l:redundant} for a feasible solution attaining $p\leq 1$ for all $m$.) However, for $m< 0.114 n$ queries, the bound is fruitful, yielding the probability that a user of the token successfully cheats and thus that the simulation fails is exponentially small in the key size,~$n$. Simplifications of the GW SDP, the derivation of its dual SDP, and a conjectured approximately optimal dual feasible solution are given in Appendix~\ref{app:cleaner}.

\section{Impossibility Results}
\label{sec:Impossibility}

We now discuss ``tightness'' of our protocol with respect to impossibility results. To begin, it is easy to argue that OTMs cannot exist in the plain model (\emph{i.e.}, without additional assumptions) in both the classical and quantum settings: in the classical setting, impossibility holds, since software can always be copied. Quantumly, this follows by a simple rewinding argument~\cite{C:BroGutSte13}.
Here, we give two simple no-go results for the quantum setting which support the idea that our scheme is ``tight'' in terms of the minimality of the assumptions it uses. Both results assume the token is reversible, meaning the receiver can run both the token and its inverse operation. The results can be stated as:
\begin{enumerate}
    \item A stateless token which can be queried in \emph{superposition} cannot be used to securely construct an OTM (Section~\ref{sscn:super}).
    \item For \emph{measure and access} schemes such as ours, in order for a stateless token to allow statistical security, it must have an \emph{exponential} number of keys per secret bit (Section~\ref{sscn:bounded}).
\end{enumerate}

Note that if, on the other hand, the receiver is \emph{not} given access to the token's inverse operation, it is unlikely for a straightforward adaption of our no-go techniques to go through. This is because, in the most general case where the token is an arbitrary unitary $U$, which the receiver may apply as a black box, simulating $U^{-1}=U^\dagger$ appears difficult. For example, Theorem 3 of Quintino, Dong, Shimbo, Soeda, and Murao~\cite{QDSSM19} shows that any \emph{exact} implementation of $U^\dagger$ (even with an adaptive protocol) which (1) succeeds with probability $p>0$ and (2) where $p$ is independent of the choice of $U$, requires $k\geq d-1$ uses of $U$. In our setting, $d$ is exponential in the number of qubits, and thus so is $k$. Indeed, inverting arbitrary $U$ would entail, as a special case, inverting arbitrary classical permutations, which appears difficult. For example, Fefferman and Kimmel~\cite{FK18} use precisely this idea (i.e. an in-place permutation oracle, to which one does not have access to the inverse) to prove an oracle separation between two quantum generalizations of NP, Quantum-Classical Merlin Arthur and Quantum Merlin Arthur. We stress, however, that the works of~\cite{QDSSM19,FK18} are for rather general unitaries $U$, whereas here we have a very specific choice of $U$ (i.e. the token's implementation). For such a specialized $U$, it remains possible that a no-go theorem could still hold, even without black-box access to $U^\dagger$.

\subsection{Impossibility: Tokens which can be queried in superposition} \label{sscn:super}

In our construction, we require that all queries to the token be classical strings, \emph{i.e.}, no querying in superposition is allowed. It is easy to argue via a standard rewinding argument that relaxing this requirement yields impossibility of a secure OTM, {as long as access to the token's adjoint (inverse) operation is given}, as we now show. Specifically, let $M$ be a quantum OTM implemented using a hardware token. {Since the token access is assumed to be reversible}, we may model it as an oracle\footnote{This formalization models $O_f$ as a classical function $f$ which can be queried in superposition, since the aim of this paper is to consider ``easy-to-manufacture'' tokens. However, our impossibility arguments in Section~\ref{sec:Impossibility} trivially extend to the case when the token is modelled by an arbitrary unitary $U_f$. } $O_f$ realizing a function $f:\set{0,1}^n\mapsto \set{0,1}^m$ in the standard way, \emph{i.e.}, for all $y\in\set{0,1}^n$ and $b\in\set{0,1}^m$,
$    O_f\ket{y}\ket{b}=\ket{y}\ket{b\oplus f(y)}$. Now, suppose our OTM stores two secret bits $s_0$ and $s_1$, and provides the receiver with an initial state $\ket{\psi}\in A\otimes B\otimes C$, where $A$, $B$, and $C$ are the algorithm's workspace, \emph{query} (\emph{i.e.}, input to $O_f$), and \emph{answer} (\emph{i.e.}, $O_f$'s answers) registers, respectively. By definition, an honest receiver must be able to access precisely one of $s_0$ or $s_1$ with certainty, given $\ket{\psi}$. Thus, for any $i\in\set{0,1}$, there exists a quantum query algorithm $A_i=U_mO_f\cdots O_fU_2O_fU_1$ for unitaries $U_i\in\unitary(A\otimes B\otimes C)$ such that $A_i\ket{\psi}=\ket{\psi'}_{AB}\ket{s_i}_C$. For any choice of $i$, however, this implies a malicious receiver can now classically copy $s_i$ to an external register, and then ``rewind'' by applying $A_i^\dagger$ to $\ket{\psi'}_{AB}\ket{s_i}_C$ to recover $\ket{\psi}$. Applying $A_{i'}$ for $i'\neq i$ to $\ket{\psi}$ now yields the second bit $i'$ with certainty as well. We conclude that a quantum OTM which allows superposition queries to a reversible stateless token is insecure.

\begin{remark}{Above, we assumed the OTM outputs $s_i$ with certainty. The argument generalizes to OTMs that output $s_i$ with probability at least $1-\epsilon$ for small $\epsilon>0$; for this, the Gentle Measurement Lemma~\cite{Winter99} can be used to show that both bits can be recovered with non-negligible probability.
}
\end{remark}

\begin{remark}{
 Our argument crucially relies on the fact that the receiver has superposition access to the $A_i^\dagger$ operation. In certain models (e.g.,~software), such access is unavoidable. However, we do not rule out the possibility that \emph{non-reversible} superposition access to a token would allow for quantum OTMs.
}
\end{remark}

\subsection{Impossibility: Tokens with a bounded number of keys}\label{sscn:bounded} We observed superposition queries to the token prevent an OTM from being secure. One can also ask how simple a hardware token with classical queries can be, while still allowing a secure OTM. Below, we consider such a strengthening in which the token is forced to have a bounded number of keys.

To formalize this, we define the notion of a ``measure-and-access (MA)'' OTM, \emph{i.e.}, an OTM in which given an initial state $\ket{\psi}$, an honest receiver applies a prescribed measurement to $\ket{\psi}$, and feeds the resulting classical string (\emph{i.e.}, key) $y$ into the token $O_f$ to obtain $s_i$. Our construction is an example of a MA memory in which each bit $s_i$ has an \emph{exponential} number of valid keys $y$ such that $f(y)=s_i$. Can the construction can be strengthened such that each $s_i$ has a bounded number (\emph{e.g.}, a polynomial number) of keys? We now show that such a strengthening would preclude security, assuming the token is reversible.

For clarity, implicitly in our proof below, we model the oracle $O_f$ as having three possible outputs: $0$, $1$, or~$2$, where $2$ is output whenever $O_f$ is fed an invalid key $y$. This is required for the notion of having ``few'' keys to make sense (\emph{i.e.},~there are $2^n$ candidate keys, and only two secret bits, each of which is supposed to have a bounded number of keys). Note that our construction indeed fits into this framework.

\begin{lemma}\label{l:impossible}
    Let $M$ be an MA memory with oracle $O_f$, such that $O_f$ cannot be queried in superposition. If a secret bit $s_i$ has at most $\Delta$ keys $y_i$ such that $f(y_i)=s_i$, then given a single copy of $\ket{\psi}$, one can extract both $s_0$ and $s_1$ from $M$ with probability at least $1/\Delta^2$.
\end{lemma}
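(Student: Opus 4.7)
The plan is to analyze the natural sequential attack: starting from the single copy of $\ket{\psi}$, the adversary will run the honest $s_0$-extraction procedure $\mathcal{M}_0$ to obtain a classical key $y_0$, query $O_f(y_0)$ to learn $s_0$, then run the honest $s_1$-extraction procedure $\mathcal{M}_1$ on the resulting post-measurement state to obtain $y_1$, and query $O_f(y_1)$. Step~1 will succeed with probability $1$ by definition of an MA memory, so the only nontrivial event to analyze is that $y_1\in Y_1$ in step~2, despite the disturbance to $\ket{\psi}$ caused by the first measurement.

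By Naimark dilation (and rank-1 refinement), I will assume without loss of generality that $\mathcal{M}_0,\mathcal{M}_1$ are rank-1 projective measurements in orthonormal bases $B_0,B_1$. The honest-extraction guarantees then force $\ket{\psi}$ to lie in the subspace $V_i=\operatorname{span}\{\ket{y}_{B_i}:y\in Y_i\}$, with $\dim V_i\le|Y_i|\le\Delta$. Writing $\ket{\psi}=\sum_{y\in Y_0}\alpha_y\ket{y}_{B_0}$, conditioning on outcome $y$ in step~1 leaves the post-measurement state $\ket{y}_{B_0}$, so the probability that step~2 produces some $y_1\in Y_1$ equals $\|P_{V_1}\ket{y}_{B_0}\|^2$, where $P_{V_1}$ is the orthogonal projector onto $V_1$. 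The overall success probability will therefore be
\[
\sum_{y\in Y_0}|\alpha_y|^2\,\|P_{V_1}\ket{y}_{B_0}\|^2.
\]

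I will lower-bound this using the identity $\ket{\psi}=P_{V_1}\ket{\psi}=\sum_{y}\alpha_y P_{V_1}\ket{y}_{B_0}$, valid because $\ket{\psi}\in V_1$: the triangle inequality combined with Cauchy--Schwarz over at most $|Y_0|\le\Delta$ summands gives $1=\|\ket{\psi}\|^2\le\Delta\sum_y|\alpha_y|^2\|P_{V_1}\ket{y}_{B_0}\|^2$, so the attack will succeed with probability at least $1/\Delta$, which is stronger than (and therefore implies) the claimed $1/\Delta^2$ bound. A more conservative pigeonhole route that matches the stated $1/\Delta^2$ bound more directly would first pick out the most likely outcome $y_0^*$ of $\mathcal{M}_0$ (so $|\alpha_{y_0^*}|^2\ge 1/\Delta$ by pigeonhole), and then bound the conditional step-2 success probability from $\ket{y_0^*}_{B_0}$ by $1/\Delta$ via the same subspace-overlap identity applied within $V_1$. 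The main anticipated obstacle is verifying that the Naimark/refinement reduction preserves the key counts $|Y_i|$ and the ``$\ket{\psi}\in V_i$'' structure when the honest measurements are general POVMs rather than projective; once this reduction is in place, the linear-algebraic overlap bound is short.
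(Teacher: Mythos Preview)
Your proposal is correct and takes a genuinely different route from the paper. The paper's proof is essentially your ``conservative pigeonhole route'': it picks the heaviest outcome $y'$ of the $s_0$-measurement (so $|\alpha_{y'}|^2\ge 1/\Delta$), applies $U_0^\dagger$ to rewind, observes that the rewound state has squared overlap $\ge 1/\Delta$ with $\ket{\psi}$, and then lower-bounds the $s_1$-success probability by this overlap. This yields exactly $1/\Delta^2$. Your averaging-plus-Cauchy--Schwarz argument is sharper: by summing over all outcomes $y\in Y_0$ and using $\sum_y \alpha_y\, P_{V_1}\ket{\phi_y,y}=\ket{\psi}$, you get $1/|Y_0|\ge 1/\Delta$ directly. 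So your main route improves the bound from $1/\Delta^2$ to $1/\Delta$, while the paper only proves the weaker $1/\Delta^2$ statement.

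One caution on the reduction you flag: the rank-1 refinement step does \emph{not} preserve the bound $\dim V_i\le\Delta$. After Naimark dilation the honest measurement is ``apply $U_i$ on $A\otimes B$, then measure $B$,'' and each key $y$ corresponds to a projector $I_A\otimes\ketbra{y}{y}_B$ of rank $\dim A$; refining to rank~1 creates $\dim A$ basis vectors per key, so the span of the ``good'' basis vectors has dimension $|Y_i|\cdot\dim A$, not $|Y_i|$. Fortunately your Cauchy--Schwarz step does not actually need rank~1: working directly with the per-key projectors $P_y^{(0)}$ (so the post-measurement state is $P_y^{(0)}\ket{\psi}/\|P_y^{(0)}\ket{\psi}\|$), the identity $\sum_{y\in Y_0}P_y^{(0)}\ket{\psi}=\ket{\psi}\in V_1$ still holds, and the triangle inequality plus Cauchy--Schwarz over the $|Y_0|\le\Delta$ \emph{keys} gives $\sum_{y}\|\Pi_{\rm good}^{(1)}P_y^{(0)}\ket{\psi}\|^2\ge 1/\Delta$ exactly as you wrote. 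So drop the rank-1 refinement and run your argument at the level of keys; this also matches the paper's model, where $U_0^\dagger$ (not $O_f^{-1}$) is what the attacker needs to apply between the two measurements.
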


\noindent We conclude that if a secret bit $b_i$ has (say) at most polynomially many keys, then any measure-and-access OTM can be broken with at least inverse polynomial probability. The proof is given in Appendix~\ref{app:4.1}. In this sense, at least in the paradigm of measure-and-access memories, our construction is essentially tight --- in order to bound the adversary's success probability of obtaining both secret bits by an inverse exponential, we require each secret bit to have exponentially many valid keys. Note that, as in the setting of superposition queries, the above proof can be generalized to the setting in which the OTM returns the correct bit $s_i$ with probability at least $1-\epsilon$ for small $\epsilon>0$. Finally, the question of whether a similar statement to Lemma~\ref{l:impossible} holds for a \emph{non-reversible} token remains open.

\section*{Acknowledgements}
We thank anonymous referees for pointing out that the impossibility result against quantum queries applies only if we model the token as a \emph{reversible} process, as well as for finding an error in a prior version of this work. We thank Kai-Min Chung and Jamie Sikora for related discussions, and David Mestel for observing that the bound of Equation~(\ref{eqn:approx}) is not asymptotically optimal. AB acknowledges support by the U.S.~Air Force Office of Scientific Research under award number FA9550-17-1-0083, Canada’s NSERC, an Ontario ERA, and the University of Ottawa's Research Chairs program. SG acknowledges support by NSF grants CCF-1526189 and CCF-1617710. HSZ acknowledges support by NSF grant CNS-1801470 and a Google Faculty Research Award.

\appendix

\section{Universal Composition (UC) Framework}
\label{appendix:UCmodels}

We consider simulation-based security.
The Universal Composability (UC) framework was proposed by Canetti~\cite{FOCS:Canetti01}, culminating a long sequence of
simulation-based security definitions (\emph{c.f.} \cite{STOC:GolMicWig87,C:GolLev90,C:MicRog91,JC:Beaver91,JC:Canetti00}); please see also \cite{PW01,STOC:PraSah04,TCC:CDPW07,STOC:LinPasVen09,ITCS:MauRen11} for alternative/extended frameworks. Recently Unruh~\cite{EC:Unruh10} extend the UC framework to the quantum setting.
 Next, we provide a high-level description of the original classical UC model by Canetti~\cite{FOCS:Canetti01}, and then the quantum UC model by Unruh~\cite{EC:Unruh10}.

\subsection{Classical UC Model (\cite{FOCS:Canetti01})}

\paragraph{Machines.} The basic entities involved in the UC model are players $P_1, \ldots , P_k$ where $k$ is polynomial of security parameter $\secp$, an adversary $\cA$, and an environment $\cZ$. Each entity is modeled as a interactive Turing machine (ITM), where $\cZ$ could have an additional non-uniform string as advice. Each $P_i$ has identity $i$ assigned to it, while $\cA$ and $\cZ$ have special identities $id_{\cA}: = \tt{adv}$ and $id_{\cZ}:= {\tt env}$.

\paragraph{Protocol Execution.} A protocol specifies the programs for each $P_i$, which we denote as $\pi=(\pi_1,\ldots,\pi_k)$. The execution of a protocol is coordinated by the environment $\cZ$. It starts by preparing inputs to all players, who then run their respective programs on the inputs and exchange messages of the form $(id_{\tt sender}, id_{\tt receiver}, {\tt msg})$. $\cA$ can corrupt an arbitrary set of players and control them later on. In particular, $\cA$ can instruct a corrupted player sending messages to another player and also read messages that are sent to the corrupted players. During the course of execution, the environment $\cZ$ also interacts with~$\cA$ in an arbitrary way. In the end, $\cZ$ receives outputs from all the other players and generates one bit output. We use $\exec[\cZ,\cA, \pi]$ to denote the distribution of the environment $\cZ$'s (single-bit) output when executing protocol $\pi$ with $\cA$ and the $P_i$'s.

\paragraph{Ideal Functionality and Dummy Protocol.}
Ideal functionality $\func$ is a trusted party, modeled by an ITM again, that perfectly implements the desired multi-party computational task. We consider an ``dummy protocol'', denoted $\pidum^\cF$, where each party has direct communication with $\func$, who accomplishes the desired task according to the messages received from the players. The execution of $\pidum^\cF$ with environment $\cZ$ and an adversary, usually called the simulator $\cS$, is defined analogous as above, in particular, $\cS$ monitors the communication between corrupted parties and the ideal functionality $\func$. Similarly, we denote $\cZ$'s output distribution as $\exec[\cZ, \cS, \pidum^{\cF}]$.

\begin{definition}[Classical UC-secure Emulation] \
We say $\pi$ (classically) UC-emulates $\pi'$ if for any adversary~$\cA$, there exists a simulator $\cS$ such that for all environments $\cZ$,
\begin{equation}
\exec[\cZ,\cA, \pi]\approx \exec[\cZ,\cS, \pi']
\end{equation}
We here consider that $\cA$ and $\cZ$ are computationally unbounded, and we call it statistical UC-security. We require the running time $\cS$ is polynomial in that of $\cA$. We call this property {\em Polynomial Simulation}.
\end{definition}

 Let $\cF$ be a well-formed two party functionality. We say $\pi$ (classically) UC-realizes $\cF$ if for all adversary $\cA$, there exists a simulator $\cS$ such that for all environments $\cZ$, $\exec[\cZ,\cA, \pi]\approx \exec[\cZ,\cS,\pidum^\cF]$.
We also write $\exec[\cZ,\cA, \pi]\approx \exec[\cZ,\cS,\cF]$ if the context is clear.

UC-secure protocols admit a general composition property, demonstrated in the following universal composition theorem.
\begin{theorem}[{UC Composition Theorem~\cite{FOCS:Canetti01}}]
Let $\pi, \pi'$ and $\sigma$ be $n$-party protocols. Assume that $\pi$ UC-emulates $\pi'$. Then
$\sigma^{\pi}$ UC-emulates $\sigma^{\pi'}$. \label{thm:cuc}
\end{theorem}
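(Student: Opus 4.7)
The plan is to prove the UC Composition Theorem via a hybrid argument over the (polynomially many) sub-session invocations of $\pi$ inside $\sigma$, combined with a standard reduction to the dummy adversary. Let $q = q(\secp)$ be a polynomial upper-bounding the number of sessions of $\pi$ invoked in any execution of $\sigma^\pi$ with environment $\cZ$.

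First I would reduce to the case where the adversary attacking $\sigma^\pi$ is the \emph{dummy adversary} $\cA_D$, which merely forwards messages between corrupted parties and $\cZ$. The standard UC-folklore reduction shows that if a simulator exists for $\cA_D$ then one exists for any $\cA$, so it suffices to construct $\cS$ against $\cA_D$. By hypothesis, $\pi$ UC-emulates $\pi'$, so in particular there exists a simulator $\cS_\pi$ such that $\exec[\cZ,\cA_D,\pi] \approx \exec[\cZ,\cS_\pi,\pi']$ for every $\cZ$.

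Next I would define hybrids $H_0, H_1, \ldots, H_q$ where in $H_i$ the first $i$ sub-sessions of $\pi$ (in some canonical, e.g.\ timestamp-based, ordering) are replaced by sessions of $\pi'$ together with an internal copy of $\cS_\pi$ handling the adversarial interface, while the remaining $q-i$ sub-sessions still run $\pi$ with $\cA_D$. Then $H_0$ matches $\exec[\cZ,\cA_D,\sigma^\pi]$ and $H_q$ matches $\exec[\cZ,\cS,\sigma^{\pi'}]$, where the simulator $\cS$ attacking $\sigma^{\pi'}$ is the natural one that internally spawns a fresh copy of $\cS_\pi$ for each $\pi'$-session and relays higher-level messages on behalf of $\cA_D$. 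To show $H_i \approx H_{i+1}$, I would argue by contradiction: suppose some $\cZ$ distinguishes them; then I construct an environment $\cZ^\ast$ for the single-session UC-emulation game for $\pi$ vs.\ $\pi'$. Concretely, $\cZ^\ast$ internally simulates $\cZ$, the protocol $\sigma$, the first $i$ sessions of $\pi'$ (each with its embedded $\cS_\pi$), and the last $q-i-1$ sessions of $\pi$ (each with $\cA_D$); the $(i+1)$-st session is ``outsourced'' to the external challenger, who runs either $\pi$ (with $\cA_D$) or $\pi'$ (with $\cS_\pi$). A distinguishing advantage for $\cZ$ between $H_i$ and $H_{i+1}$ translates directly into a distinguishing advantage for $\cZ^\ast$ in the single-session emulation game, contradicting the hypothesis. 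Summing over the $q$ hybrid steps and invoking the triangle inequality for statistical distance yields $\exec[\cZ,\cA_D,\sigma^\pi] \approx \exec[\cZ,\cS,\sigma^{\pi'}]$, which, combined with the dummy-adversary reduction, gives the theorem.

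The main obstacle is verifying that $\cZ^\ast$ is a legitimate environment in the single-session UC game: it must be a single ITM of the prescribed form, it must respect the session/identity-tagging conventions of the model so that the outsourced session is correctly spliced into the internal simulation of $\sigma$, and, crucially for the statistical-security statement, its running time must be polynomial in that of $\cZ$ plus $\sigma$ plus $\cS_\pi$ so that the polynomial-simulation property of the assumed UC-emulation can be applied. The scheduling bookkeeping (ensuring messages intended for the outsourced session are routed outside, and all others are handled locally by $\cZ^\ast$) and the ordering of sub-sessions (needed so that the hybrid step is well-defined even under concurrent scheduling by $\cZ$) are the genuinely delicate parts; everything else is mechanical once the ITM model is fixed.
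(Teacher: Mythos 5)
The paper does not prove this theorem; it is quoted verbatim as a known result from Canetti's work \cite{EPRINT:Canetti00}, so there is no in-paper proof to compare against. Your sketch correctly reproduces the standard argument from that reference (dummy-adversary reduction plus a hybrid over the polynomially many sub-sessions, with the distinguisher folded into a single-session environment), and the caveats you flag --- legitimacy and polynomial running time of $\cZ^\ast$, and the session-tagging bookkeeping --- are exactly the delicate points handled in the original proof.
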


\subsection{Quantum UC Model (\cite{EC:Unruh10})}
\label{section:qUC}

 Now, we give a high-level description of quantum UC model by Unruh~\cite{EC:Unruh10}.

\paragraph{Quantum Machine.} In the quantum UC model, all players are modeled as quantum machines. A quantum machine is a sequence of quantum circuits $\{M^\secp\}_{\secp\in \mathbb{N}}$, for each security parameter $\secp$. $M^\secp$ is a completely positive trace preserving operator on space $\cH^{\tt state}\otimes\cH^{\tt class}\otimes \cH^{\tt quant}$, where $\cH^{\tt state}$ represents the internal workspace of $M^\secp$ and $\cH^{\tt class}$ and $\cH^{\tt quant}$ represent the spaces for communication, where for convenience we divide the messages into classical and quantum parts. We allow a non-uniform quantum advice\footnote{Unruh's model only allows classical advice, but we tend to take the most general model. It is easy to justify that almost all results remain unchanged, including the composition theorem. See~\cite[Section 5]{C:HalSmiSon11} for more discussion.} to the machine of the environment $\cZ$, while all other machines are uniformly generated.

\paragraph{Protocol Execution.} In contrast to the communication policy in classical UC model, we consider a network~$\mathbf{N}$ which contains the space $\cH_{\mathbf{N}}:= \cH^{\tt class} \otimes \cH^{\tt quant}\otimes_i \cH^{\tt state}_i$. Namely, each machine maintains individual internal state space, but the communication space is shared among all . We assume $\cH^{\tt class}$ contains the message $(id_{\tt sender}, id_{\tt receiver}, {\tt msg})$ which specifies the sender and receiver of the current message, and the receiver then processes the quantum state on $\cH^{\tt quant}$. Note that this communication model implicitly ensures authentication. In a protocol execution, $\cZ$ is activated first, and at each round, one player applies the operation defined by its machine $M^\secp$ on $\cH^{\tt class} \otimes \cH^{\tt quant}\otimes \cH^{\tt state}$. In the end $\cZ$ generates a one-bit output. Denote $\exec[\cZ, \cA, \Pi]$ the output distribution of~$\cZ$.

\paragraph{Ideal Functionality.} All functionalities we consider in this work are classical, \emph{i.e.}, the inputs and outputs are classical, and its program can be implemented by an efficient classical Turing machine. Here in the quantum UC model, the ideal functionality $\func$ is still modeled as a quantum machine for consistency, but it only applies classical operations. Namely, it measures any input message in the computational basis to get a classical bit-string, and implements the operations specified by the classical computational task.

We consider an ``dummy protocol'', denoted $\pidum^\cF$, where each party has direct communication with $\func$, who accomplishes the desired task according to the messages received from the players. The execution of $\pidum^\cF$ with environment $\cZ$ and an adversary, usually called the simulator $\cS$, is defined analogous as above, in particular, $\cS$ monitors the communication between corrupted parties and the ideal functionality $\func$. Similarly, we denote $\cZ$'s output distribution as $\exec[\cZ, \cS, \pidum^{\cF}]$.  For simplicity, we also write it as $\exec[\cZ, \cS, {\cF}]$.

\begin{definition}[Quantum UC-secure Emulation]\
We say $\Pi$ quantum-UC-emulates $\Pi'$ if for any quantum adversary $\cA$, there exists a (quantum) simulator $\cS$ such that for all quantum environments $\cZ$,
\begin{equation}
\exec[\cZ,\cA, \Pi]\approx \exec[\cZ,\cS, \Pi']
\end{equation}
We consider here that
$\cA$ and $\cZ$ are computationally unbounded, we call it (quantum) statistical UC-security.
We require the running time $\cS$ is polynomial in that of $\cA$. We call this property {\em Polynomial Simulation}.
\label{def:quc}
\end{definition}

Similarly, (quantum) computational UC-security can be defined.
Let $\cF$ be a well-formed two party functionality. We say $\Pi$ {\bf quantum-UC-realizes} $\cF$ if for all quantum adversary $\cA$, there exists a (quantum) simulator $\cS$ such that for all quantum environments $\cZ$, $\exec[\cZ,\cA, \Pi]\approx \exec[\cZ,\cS,\pidum^\cF]$.

Quantum UC-secure protocols also admit general composition:
\begin{theorem}[{Quantum UC Composition Theorem~\cite[Theorem 11]{EC:Unruh10}}]
Let $\Pi, \Pi'$ and $\Sigma$ be quantum-polynomial-time
protocols. Assume that $\Pi$ quantum UC-emulates $\Pi'$. Then
$\Sigma^{\Pi}$ quantum UC-emulates $\Sigma^{\Pi'}$. \label{thm:quc}
\end{theorem}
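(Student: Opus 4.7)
The plan is to prove Unruh's quantum composition theorem by the standard hybrid argument, adapted to the quantum setting. Let $q = q(\secp) \in \poly(\secp)$ upper bound the number of subprotocol instances of $\Pi$ that $\Sigma$ invokes during a single execution of $\Sigma^\Pi$ (this is finite since $\Sigma$ is quantum-polynomial-time). Fix an arbitrary quantum adversary $\cA$ against $\Sigma^\Pi$. By hypothesis, there exists a quantum simulator $\cS_\Pi$ (with polynomial-time overhead) such that for every quantum environment, executing $\Pi$ with the dummy adversary is indistinguishable from executing $\Pi'$ with $\cS_\Pi$. I will use $\cS_\Pi$ as a building block: the simulator $\cS$ for $\Sigma^{\Pi'}$ will consist of $\cA$'s behavior on the outer $\Sigma$ messages together with one fresh copy of $\cS_\Pi$ inserted in place of $\cA$'s interaction with each $\Pi$-instance.

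Define hybrid executions $H_0, H_1, \dots, H_q$, where in $H_i$ the first $i$ instances of the subprotocol invoked inside $\Sigma^\Pi$ are replaced by $\Pi'$ (with the corresponding $\cA$-slices replaced by copies of $\cS_\Pi$), and the remaining $q-i$ instances still run $\Pi$ (with $\cA$). Then $H_0 = \exec[\cZ,\cA,\Sigma^\Pi]$ and $H_q = \exec[\cZ,\cS,\Sigma^{\Pi'}]$. For each $i \in \{1,\dots,q\}$ I will build an auxiliary quantum environment $\cZ_i$ which internally runs $\cZ$, the honest $\Sigma$ parties, the $\Pi$-irrelevant portion of $\cA$, the first $i-1$ instances under $(\Pi',\cS_\Pi)$, and the last $q-i$ instances under $(\Pi,\cA)$; the $i$-th instance is left external and communicates through $\cZ_i$'s interface in the UC sense, with $\cA$'s $i$-th slice forwarded by $\cZ_i$. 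From $\cZ_i$'s vantage, interacting externally with $(\Pi,\cA_{\text{slice}})$ produces exactly $H_{i-1}$, and interacting with $(\Pi',\cS_\Pi)$ produces $H_i$. The emulation hypothesis applied to $\cZ_i$ yields $H_{i-1} \approx H_i$, and a triangle inequality over the $q$ hybrids gives $\exec[\cZ,\cA,\Sigma^\Pi] \approx \exec[\cZ,\cS,\Sigma^{\Pi'}]$, which is the desired conclusion of Definition~\ref{def:quc}.

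The main quantum-specific obstacle is that, unlike the classical setting, $\cZ_i$ cannot clone or rewind $\cA$'s state, so the interposition of $\cZ_i$ between $\cA$ and the $i$-th external instance must be realized coherently. This is feasible because Unruh's network model keeps the communication space $\cH^{\tt class} \otimes \cH^{\tt quant}$ shared across all machines, so $\cZ_i$ can literally ``splice'' the $i$-th set of ports of $\cA$ (viewed as a single polynomial-size quantum circuit with ports for each subprotocol instance) to the external channel, threading $\cA$'s global quantum state through its own internal register $\cH^{\tt state}_{\cZ_i}$ without duplication. I will also need $\cS_\Pi$ to be used in a straight-line, non-rewinding manner; this is built into the quantum UC emulation definition (Definition~\ref{def:quc}) and its polynomial-simulation requirement, so a single application of $\cS_\Pi$ per instance suffices.

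Finally I would verify the complexity bookkeeping: $q$ is polynomial, each $\cS_\Pi$ instance inflates runtime by a polynomial factor, and $\Sigma$, $\cZ$, $\cA$ are all quantum-polynomial-time, so $\cS$ and every $\cZ_i$ remain quantum-polynomial-time and thus admissible in the emulation hypothesis. I expect the hybrid reduction itself to be routine; the delicate step is a careful, coordinate-free description of how $\cZ_i$ coherently embeds $\cA$'s joint quantum state across its many concurrent subprotocol slices, which is where any attempt to mimic the classical proof naively would break down due to no-cloning.
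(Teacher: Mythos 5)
The paper does not actually prove this statement: it is imported verbatim from Unruh~\cite{EC:Unruh10} as a black-box result, so there is no in-paper proof to compare yours against. Judged on its own terms, your plan follows the right template --- a hybrid over the $q$ subprotocol instances, with each step justified by instantiating the emulation hypothesis on a purpose-built environment $\mathcal{Z}_i$ --- and you correctly identify the quantum-specific issue that $\mathcal{Z}_i$ must thread the adversary's state coherently through its own registers rather than copying or rewinding it.

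The genuine gap is your assumption that $\cA$ can be decomposed into a ``$\Pi$-irrelevant portion'' plus per-instance ``slices'' that can be individually excised and replaced by copies of $\mathcal{S}_{\Pi}$. A general quantum adversary is a single machine whose actions across different subprotocol instances are arbitrarily correlated and entangled; there is no canonical factorization into slices, so the simulator $\cS$ you describe (``$\cA$'s behavior on the outer $\Sigma$ messages together with one fresh copy of $\mathcal{S}_{\Pi}$ per instance'') is not well-defined, and neither is the intermediate object ``$\cA$'s $i$-th slice'' that $\mathcal{Z}_i$ is supposed to forward. The standard repair --- used in both the classical proof of Canetti and Unruh's quantum proof --- is to first establish \emph{completeness of the dummy adversary}: it suffices to prove emulation against the adversary that merely relays messages between the environment and the protocol. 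Against the dummy adversary the ``slices'' are trivial forwarders, all of $\cA$'s actual logic is absorbed into the environment, and the composed simulator is simply $q$ routed copies of $\mathcal{S}_{\Pi}$. Without that preliminary reduction your hybrid argument does not go through as stated. A secondary point: since the emulation notion here is statistical (unbounded $\cZ$), admissibility of the hybrid environments is automatic, but the Polynomial Simulation requirement of Definition~\ref{def:quc} must still be checked for the composed simulator, which your final paragraph does address.
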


\begin{remark}
Out of the two protocol parties (the sender and the receiver), we consider security only in
the case of the receiver being a corrupted party.
Note that we are only interested in cases where the same party
is corrupted with respect to all composed protocol. Furthermore, we only consider static corruption.

\end{remark}

\section{Stand-Alone Security in the case of a Malicious Sender}
\label{sec:appendix-def-malicious-sender}

In order to define stand-alone security against a malicious sender  (Definition~\ref{def:stand-alone-sender}), in our context, we closely follow definitions given in prior work~\cite{DNS10}, which we now recall. (Note that, instead of considering the \emph{approximate} case for security, we are able to use the \emph{exact}~one.)

\begin{definition}
 An $n$-step quantum two-party protocol with oracle calls, denoted $\Pi^\Oracle= (\PartyA, \PartyB, \Oracle, n)$ consists of:
 \begin{enumerate}
 \item input space $\mathcal{A}_0$ and $\mathcal{B}_0$ for parties $\PartyA$ and $\PartyB$ respectively.
 \item memory spaces $\mathcal{A}_1, \ldots \mathcal{A}_n$ and $\mathcal{B}_1, \ldots \mathcal{B}_n$ for $\PartyA$ and $\PartyB$, respectively.
 \item An $n$-tuple of quantum operations $(\PartyA_1, \ldots \PartyA_n)$ for $\PartyA$, $\PartyA_i : \spa{L}(\mathcal{A}_{i-1}) \mapsto \spa{L}(\mathcal{A}_{i}),   (1\leq i \leq n)$.

 \item An $n$-tuple of quantum operations $(\PartyB_1, \ldots \PartyB_n)$ for $\PartyB$, $\PartyB_i : \spa{L}(\mathcal{B}_{i-1}) \mapsto \spa{L}(\mathcal{B}_{i}),   (1\leq i \leq n)$.
 \item Memory spaces $\mathcal{A}_1, \ldots ,\mathcal{A}_n$ and $\mathcal{B}_1, \ldots ,\mathcal{B}_n$ can be written as $\mathcal{A}_i = {\mathcal{A}_i}^\Oracle \otimes  {\mathcal{A}_i}'$ and   $\mathcal{B}_i = {\mathcal{B}_i}^\Oracle \otimes  {\mathcal{B}_i}'$, $(1 \leq i \leq n)$ and $\Oracle = (\Oracle_1, \ldots , \Oracle_n)$ is an $n$-tuple of quantum operations: $\Oracle_i: \spa{L}(\mathcal{A}_i^\Oracle \otimes \mathcal{B}_i^\Oracle) \mapsto  \spa{L}(\mathcal{A}_i^\Oracle \otimes \mathcal{B}_i^\Oracle)$, $(1 \leq i \leq n)$.
 \end{enumerate}
\end{definition}

If $\Pi^\Oracle= (\PartyA, \PartyB, \Oracle, n)$ is  an $n$-turn two-party protocol, then the final state of the interaction upon input $\rho_{\text{in}} \in \mathrm{D}(\mathcal{A}_0 \otimes \mathcal{B}_0 \otimes \mathcal{R})$ where $\mathcal{R}$ is a system of dimension $\dim\mathcal{A}_0\dim\mathcal{B}_0$, is:
\begin{equation}
[\PartyA \oast \PartyB] (\rho_{\text{in}}) = (\mathbb{1}_{\spa{L}(\mathcal{A}'_{n} \otimes \mathcal{B}'_{n} \otimes \mathcal{R})}  \otimes \Oracle_n) (\PartyA_n \otimes \PartyB_n \otimes \mathbb{1}_\mathcal{R}) \ldots  (\mathbb{1}_{\spa{L}(\mathcal{A}'_{1} \otimes \mathcal{B}'_{1} \otimes \mathcal{R})}  \otimes \Oracle_1) (\PartyA_1 \otimes \PartyB_1 \otimes \mathbb{1}_\mathcal{R})(\rho_{\text{in}})\,.
\end{equation}

As in \cite{DNS10}, we specify that an oracle $\Oracle$ can be a communication oracle or an ideal functionality oracle.

An \emph{adversary} $\tilde{\PartyA}$ for an honest party $\PartyA$ in $\Pi^\Oracle= (\PartyA, \PartyB, \Oracle, n)$ is an $n$-tuple of quantum operations matching the input and outputs spaces of $\PartyA$. A  \emph{simulator} for $\tilde{\PartyA}$
is a sequence of quantum operations $(\Sim_i)_{i=1}^n$ where $\Sim_i$ has the same input-output spaces as the maps of $\tilde{\PartyA}$ at step~$i$.
In addition, $\Sim$ has access to the ideal functionality for the protocol $\Pi$.


\begin{definition}
\label{def:stand-alone-sender}
  An $n$-step quantum two-party protocol with oracle calls, $\Pi^\Oracle= (\PartyA, \PartyB, \Oracle, n)$ is statistically \emph{stand-alone} secure against a corrupt $\PartyA$ if for every adversary $\tilde{\PartyA}$ there exists a simulator $\Sim$ such that for every input $\rho_{\text{in}}$,
  \begin{equation}
     \trace_{\mathcal{B}_n \otimes \mathcal{R}} (\tilde{\PartyA} \oast \PartyB) =  \trace_{\mathcal{B}_n \otimes \mathcal{R}} (\Sim \oast \PartyB)\,.
  \end{equation}
\end{definition}

\noindent We note that Definition~\ref{def:stand-alone-sender} is weaker than some other definitions for active security used in the literature, \emph{e.g.},~\cite{DNS12}, because we ask only that the \emph{local} view of the adversary be simulated.

Given the simple structure of our protocol and ideal functionality, the construction and proof of the simulator is straightforward as shown below.

\begin{theorem}\label{thm:security-sender-stand-alone}
Protocol $\Pi$ is statistically stand-alone secure against a corrupt sender.
\end{theorem}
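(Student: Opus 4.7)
The plan is to exploit the fact that protocol $\Pi$ is \emph{one-way} from the sender's perspective: after the sender transmits $\ket{x}_\theta$ to the receiver and $(\create,M)$ to $\funcHT$, it receives no further messages---neither from the receiver, nor from $\funcHT$, nor (in the ideal world) from $\funcOTM$. Consequently, the adversary's local state at the end of the real-world execution depends only on its own operations applied to the input $\rho_{\text{in}}$, and not on anything that happens downstream on the receiver's side.

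Given this, I would construct the simulator $\Sim$ as follows. On input $\rho_{\text{in}}$, $\Sim$ runs an internal copy of the corrupted sender $\tilde{\PartyA}$ and intercepts its two outgoing messages: the quantum register intended for the receiver, and the Turing machine $M$ intended for $\funcHT$. From this pair, $\Sim$ extracts some $(s_0,s_1)\in\{0,1\}^2$---concretely, by measuring the intercepted quantum register in the computational basis, running $M$ on the resulting string with choice bit $0$ to define $s_0$ (replacing $\bot$ by $0$), and setting $s_1:=0$---and forwards $(s_0,s_1)$ to $\funcOTM$. Finally, $\Sim$ outputs the residual internal state of $\tilde{\PartyA}$ on the adversary's output register.

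To verify Definition~\ref{def:stand-alone-sender}, I would then argue that in both executions the reduced state on the adversary's output space is precisely the output of the same quantum operation $\tilde{\PartyA}$ applied to $\rho_{\text{in}}$: in the real world, $\tilde{\PartyA}$ runs and sends its messages outward without receiving any reply, while in the ideal world $\Sim$ internally runs $\tilde{\PartyA}$ on the same input and copies its output verbatim. Tracing out the receiver's output space $\mathcal{B}_n$ and the reference register $\mathcal{R}$ on both sides therefore yields identical operators, giving \emph{exact}, not merely statistical, indistinguishability---consistent with the remark in Section~\ref{sec:security-sender} that we can use the exact version of the definition here.

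I do not anticipate any substantive obstacle. The only mildly delicate point is ensuring the extraction step defining $(s_0,s_1)$ is well-defined even for arbitrary, possibly malformed $M$ or quantum register; but since these extracted values only propagate to the receiver (which is traced out), any extraction---even outputting $(0,0)$ unconditionally---would suffice. The entire argument rests on the one-way structure of $\Pi$ together with the fact that stand-alone security against the sender requires reproducing only the sender's \emph{local} view.
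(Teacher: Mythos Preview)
Your proposal is correct and follows essentially the same approach as the paper: exploit that $\Pi$ is one-way and that neither $\funcHT$ nor $\funcOTM$ returns anything to the sender, so the adversary's local state is determined solely by its own operation on $\rho_{\text{in}}$, making simulation trivial and exact. The paper's simulator is in fact even more minimal than yours---it does not bother to extract $(s_0,s_1)$ or call $\funcOTM$ at all (as you yourself observe would suffice)---but the underlying reasoning is identical.
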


\begin{proof}
Since $\Pi$ consists in a single message from the sender to the receiver (together with a call to the ideal functionality for the token), we have  that $\PartyA = (\PartyA_1)$. Furthermore, since the ideal functionality $\funcHT$ does not return anything to the sender, there is no need for our simulator $\Sim$ to call an ideal functionality.

We thus build  $\Sim$ that runs $\PartyA$ on the input in register~$\mathcal{A}_0$. When $\PartyA$ calls the $\funcHT$ ideal functionality, the simulator does nothing. Since $\Pi$ is a one-way protocol, and since the ideal functionality also does not allow communication from the receiver to the sender,

\begin{equation}
\trace_{\mathcal{B}_n \otimes \mathcal{R}} (\tilde{\PartyA} \oast \PartyB) = \PartyA(\Tr_{\mathcal{B}_0 \otimes \mathcal{R} }(\rho_{\text{in}})) =
 \Sim(\Tr_{\mathcal{B}_0 \otimes \mathcal{R} } (\rho_{\text{in}}))\,.
\end{equation}
This concludes the proof.
\end{proof}

\section{Security Analysis for the Token}\label{scn:newproof}

We now provide the technical result (Theorem~\ref{thm:securetoken}) that is used to prove security of our Quantum OTM construction of Section~\ref{section:qOTM} against a linear number of queries. The statement below is informal; as outlined in Section~\ref{sscn:securityintuition}, to make it formal, in Section~\ref{sscn:linearsecurity} we model the user's interaction with the token via the Gutoski-Watrous (GW) framework for quantum games~\cite{GutoskiW07}. The resulting formal statement we desire, which immediately yields the informal claim below, is given in Theorem~\ref{thm:cheatingbound}.

 \begin{theorem}[Informal]
    For any stateless hardware token implemented as in Program~\ref{hardware-token-program}, \emph{i.e.}, using an $n$-qubit conjugate coding state $\ket{x}_\theta$, and for any user of the token (restricted only by the laws of quantum mechanics, meaning using any trace-preserving completely positive maps desired, regardless of efficiency of their implementation) making $m$ queries to the token, the probability the user successfully queries the token to extract both secret bits $s_0$ and $s_1$ is at most $O(2^{2m-0.228n})$.
\end{theorem}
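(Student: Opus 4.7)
My plan is to instantiate the Gutoski--Watrous (GW) framework of Section~\ref{sscn:gw} for our specific token and then upper bound the optimal cheating probability by exhibiting an explicit feasible point of the primal SDP $\Gamma$ (Equations~(\ref{eqn:1})--(\ref{eqn:7})). First I would set up the reduction: model the token coherently so that each isometry $\Psi_i$ copies the classical query $\widetilde{y}_i$ and its response $r_i$ into a fresh private register $W_i$, and a single final POVM $\Lambda=\{\Lambda_0,\Lambda_1\}$ of the co-strategy declares ``accept'' exactly when the stored transcript $(r_1,\ldots,r_m)$ reveals both $s_0$ and $s_1$. The principle of deferred measurement, together with the fact that no $W_i$ is ever revisited, will justify that this coherent model is equivalent to the honest classical stateless token, so that Theorem~9 of~\cite{GutoskiW07} applies verbatim and the optimum of $\Gamma$ is the cheating probability over \emph{all} strategies allowed by quantum mechanics.

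Next I would compute the explicit accepting co-strategy operator $Q_1$. Using $\rho_0=\rho_R$, the classical structure of Program~\ref{hardware-token-program}, and the CJ-like representation of measuring co-strategies from Section~\ref{sscn:gw}, a routine composition of the token isometries followed by tracing out the private register produces
\begin{equation*}
Q_1 = \frac{1}{4^{n}} \sum_{s \in T} \bigotimes_{i=1}^{m} \ketbra{t_i s_{t_i}}{t_i s_{t_i}}_{\spa{X}_{i+1}} \otimes \Bigl( \sum_{(\widetilde{y},z) \in Y_t} \bigotimes_{i=1}^{m} \ketbra{\widetilde{y}_i}{\widetilde{y}_i}_{\spa{Y}_i} \otimes \ketbra{\psi_z}{\psi_z}_{\spa{X}_1} \Bigr),
\end{equation*}
with $Y_t$ encoding the consistency relation ``query $\widetilde{y}_i$ under secret key $z$ returns response $t_i$''. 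The structural fact driving the rest of the argument is that $Q_1$ is block-diagonal on the response registers $\spa{X}_2,\ldots,\spa{X}_{m+1}$, and within each accepting block acts on $\spa{Y}_1 \otimes \cdots \otimes \spa{Y}_m \otimes \spa{X}_1$ as a sum of rank-one projectors against conjugate coding states.

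Third I would exhibit the feasible point. Aligning with the block structure of $Q_1$, the ansatz is
\begin{equation*}
R_{m+1} = \frac{1}{\abs{T}} \sum_{t \in T} \bigotimes_{i=1}^{m} \ketbra{t_i s_{t_i}}{t_i s_{t_i}}_{\spa{X}_{i+1}} \otimes I_{\spa{Y}_1 \otimes \cdots \otimes \spa{Y}_m} \otimes \tfrac{1}{2^n} I_{\spa{X}_1},
\end{equation*}
with the remaining $R_k, P_k$ for $k\leq m$ defined recursively via the partial-trace prescription of Equations~(\ref{eqn:3}) and~(\ref{eqn:4}). The identity factors on each $\spa{Y}_k$ make constraint~(\ref{eqn:3}) manifest with the obvious $P_k$, and the $1/(\abs{T}\,2^n)$ normalization yields $R_0=1$ via Equation~(\ref{eqn:5}). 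Constraint~(\ref{eqn:2}), $Q_1 \preceq p\, R_{m+1}$, then collapses block by block (using that the $\spa{Y}_i$ and $\spa{X}_1$ factors of $R_{m+1}$ are proportional to the identity) into a single scalar condition $p \geq \abs{T}\,2^n\, \lmax(Q_1)$.

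The main technical obstacle is therefore proving the eigenvalue bound $\lmax(Q_1) \leq \frac{2}{4^n}(1+1/\sqrt{2})^n$, and I expect this to be the hardest step. My plan is to rewrite the inner $\spa{X}_1$-operator of $Q_1$, weighted by the $\spa{Y}_i$ projectors, as a tensor product over the $n$ qubits of single-qubit Gram matrices of BB84 states; each such Gram matrix has operator norm $(1+1/\sqrt{2})/2$, so the norm factorizes qubit by qubit. Combining with the counting $\abs{T} \in O(4^m)$ (a constant number of response patterns per round) and the numerical fact $\log_2(1+1/\sqrt{2}) < 0.772$, so that $n - n\log_2(1+1/\sqrt{2}) > 0.228\,n$, the feasibility condition yields $p \in O(2^{2m-0.228n})$, which is exactly Theorem~\ref{thm:cheatingbound}. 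The identity tensor on $\spa{Y}_1 \otimes \cdots \otimes \spa{Y}_m$ is also precisely the slack that obstructs extending this approach to $m = \poly(n)$: it contributes a factor that overwhelms the eigenvalue savings once $m \gtrsim 0.114\,n$, and removing it would require an adaptive, query-aware choice of $R_k$, which is the open problem flagged in Appendix~\ref{app:cleaner}.
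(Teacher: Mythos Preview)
Your proposal is correct and follows essentially the same approach as the paper: the same GW setup, the same co-strategy operator $Q_1$, the same feasible point $R_{m+1}$, and the same reduction of constraint~(\ref{eqn:2}) to bounding $\lmax(Q_1)$. The one place your sketch is vague---the eigenvalue bound---is handled in the paper (Lemma~\ref{l:eval}) not by a direct Gram-matrix factorization but by observing that $Q_1$ is also block-diagonal on the $\spa{Y}_i$ registers, and within each fixed $(t,\widetilde{y})$-block the presence of both a successful $0$-query and a successful $1$-query (guaranteed since $t\in T$) forces each qubit of the secret key into one of exactly two states, one per basis, giving the per-qubit operator $\ketbra{b}{b}+H\ketbra{c}{c}H$ with norm $1+1/\sqrt{2}$ (not $(1+1/\sqrt{2})/2$).
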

\noindent Thus, we are able to prove that if the user makes at most $m=cn$ queries with $c<0.114$, then the user's probability of cheating successfully is exponentially small in $n$.

\subsection{Security against a linear number of token queries: Primal SDP}\label{sscn:linearsecurity}

To show security of our hardware token implementation (Program~\ref{hardware-token-program}) against a linear number of queries, we now model a user's interaction with the token as an interactive game between two parties using the GW framework of Section~\ref{sscn:gw}. As outlined in Section~\ref{sscn:securityintuition}, we shall treat the token as the \emph{co-strategy} and the user as the \emph{strategy}. An overview of how all operators introduced below fit together is given in Figure~\ref{fig:list}, which may be periodically referred to as the reader progresses through this section.

\paragraph{Basics of our model.} We proceed as follows. As depicted in Figure~\ref{fig:interaction}, the token (co-strategy) begins by preparing state $\rho_0\in\mathcal{L}(\spa{X}_1\otimes\spa{W}_0)$, and sending message $\spa{X}_1$ (which contains $\rho_R$ from Equation~(\ref{eqn:rhoR}) to the user. The user then makes $m$ queries, each via a distinct register $\spa{Y}_i$ for $i\in\set{1,\ldots, m}$. For each query made, we model the token as returning two strings: (1) a symbol in set $\Sigma=\set{0,1,\overline{0},\overline{1}}$ where $0$ and $1$ denote successful $0$- and $1$-queries, respectively, and $\overline{0}$ and $\overline{1}$ denote unsuccessful $0$- and $1$-queries, respectively, and (2) a bit $b$ which is set to $0$ for a failed query, or secret bit $b_i$ for a successful $i$th query. Formally, the size of each register $\spa{X}_i$ for $i\geq 2$ is hence three qubits. We will deviate from Figure~\ref{fig:interaction} in one respect --- we assume the token also returns the response to the final query, $m$, via a register $\spa{Y}_{m+1}$; this does not affect the success or failure of the user (as the latter makes no further queries at this point), but helps streamline the analysis. After this last response is sent out, the token measures the string $s\in\Sigma^m$ of responses it sent back to the user, and ``accepts'' if and only if $s$ contains at least one $0$ and one $1$. This will be spelled out formally below once we defined the isometries $A_i$ for the protocol.

Before doing so, let us introduce the terminology used in this section for discussing the secret key held by the token. Namely, recall in Program~\ref{hardware-token-program} that the token keeps secret key data $x\in\set{0,1}^n$ and $\theta\in\set{+,\times}^n$. Here, we shall replace these by a single string $z\in\set{0,1}^{2n}$, such that bits $2i$ and $2i+1$ of $z$ specify the basis and value of conjugate coding qubit $i$, for $i\in\set{1,\ldots, n}$ (i.e. $z_{2i}=\theta_{i}$ and $z_{2i+1}=x_i$). We shall call $z$ the \emph{secret key}. For consistency, we shall rename the \emph{quantum key} $\ket{x}_\theta$ from Program~\ref{hardware-token-program} by $\ket{\psi_z}\in(\complex^2)^{\otimes n}$, i.e. $\ket{x}_\theta=\ket{\psi_z}$. Next, in Program~\ref{hardware-token-program} the token takes inputs $b\in\set{0,1}$ and $y\in\set{0,1}^n$, for $b$ the choice bit and $y$ the claimed measured value. In this section, we shall simply concatenate these as one string $\widetilde{y}=b\circ y\in\set{0,1}^{n+1}$ (we henceforth write $\widetilde{y}=by$ for brevity), the first bit of which is the choice bit. We shall refer to $\widetilde{y}$ as a \emph{query string}. With these definitions in hand, for each secret key $z\in\set{0,1}^{2n}$, we define a partition $\rejz(z)$, $\rejo(z)$, $\accz(z)$, $\acco(z)$ of $\set{0,1}^{n+1}$, which correspond to the sets of query strings $\widetilde{y}$ which cause the token to return response $\overline{0}$, $\overline{1}$, $0$, or $1$, respectively.

\paragraph{Defining the isometries $A_k$.} Recall from Section~\ref{sscn:gw} that the GW model begins by capturing the actions of a co-strategy as a sequence of linear isometries, $A_k$. To define these $A_k$, we first construct operators $\Delta_k(z):\spa{Y}_k\mapsto\spa{X}_{k+1}\otimes\spa{W}_{k,k+1}$ (i.e. which map an incoming message in $\spa{Y}_k$ to the token to an outgoing message in $\spa{X}_{k+1}$ and private data to store in $\spa{W}_{k,k+1}$) for $k\in\set{1,\ldots, m}$ as follows:
\begin{align}
    \Delta_k(z) =& \sum_{\widetilde{y}\in\rejz(z)}\ket{\overline{0}0}_{\spa{X}_{k+1}}\ket{\widetilde{y}\overline{0}}_{\spa{W}_{k,k+1}}\bra{\widetilde{y}}_{\spa{Y}_k}+\label{eqn:independent}\\
    &\sum_{\widetilde{y}\in\rejo(z)}\ket{\overline{1}0}_{\spa{X}_{k+1}}\ket{\widetilde{y}\overline{1}}_{\spa{W}_{k,k+1}}\bra{\widetilde{y}}_{\spa{Y}_k}+\\
    &\sum_{\widetilde{y}\in\accz(z)}\ket{0s_0}_{\spa{X}_{k+1}}\ket{\widetilde{y}0}_{\spa{W}_{k,k+1}}\bra{\widetilde{y}}_{\spa{Y}_k}+\\
    &\sum_{\widetilde{y}\in\acco(z)}\ket{1s_1}_{\spa{X}_{k+1}}\ket{\widetilde{y}1}_{\spa{W}_{k,k+1}}\bra{\widetilde{y}}_{\spa{Y}_k}.
\end{align}
The intuition is as follows. In round $k$, we model the token as (coherently) making the following classical computation: Upon input $\ket{\widetilde{y}}_{\spa{Y}_k}$ from the user (which consists of a choice bit $b$ and candidate key $y$), the token sends its response in $\spa{X}_{k+1}$ to the user (the first symbol of which denotes accept/reject via a symbol from $\Sigma$, and the second symbol of which is the corresponding secret bit $s$, which is set to~$0$ by default for failed queries), and classically copies (i.e. via transversal CNOT gates) both the input $\widetilde{y}$ and response from $\Sigma$ into $\spa{W}_k$ (the private memory of the token). Recall from Section~\ref{sscn:securityintuition} that coherently keeping this local copy of $\widetilde{y}$, which is never accessed again, simulates a measurement of $\spa{Y}_k$ in the standard basis (by the principle of deferred measurement~\cite{NC00}). The response from $\Sigma$ is also locally stored in $\spa{W}_k$, solely for the token to be able to decide at the end of the protocol whether the user successfully extracted both secret bits. (More details on this below after the isometries $A_k$ are defined.)

    \begin{figure}[t]
    \centering
      \includegraphics[width=16cm]{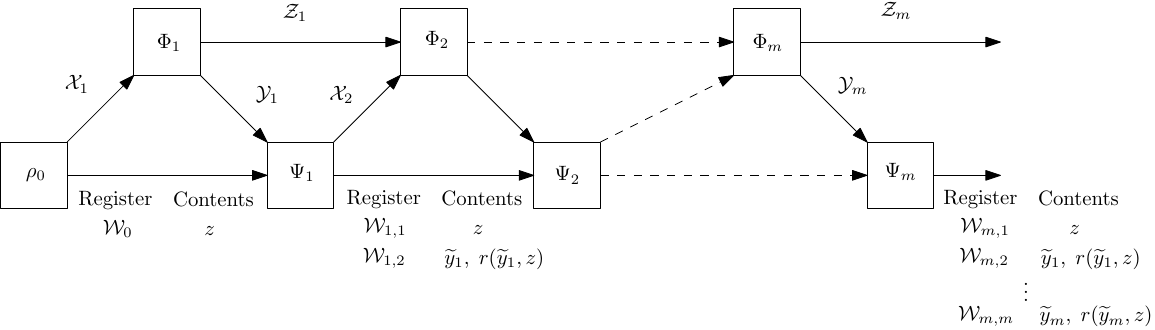}
      \caption{A reproduction of Figure~\ref{fig:interaction} with additional details regarding the token's private memory contents (bottom row, horizontal arrows pointing to the right) in each round. For example, in round $k=0$, $\spa{W}_0$ contains the secret key, $z$. In round $k=1$, $\spa{W}_{1,1}$ contains $z$ and $\spa{W}_{1,2}$ contains $\widetilde{y}_1$ and $r(\widetilde{y}_1,z)$. Here, $r(\widetilde{y}_k,z)\in\Sigma$ denotes whether the token accepted or rejected query string $\widetilde{y}_k$ assuming secret key $z$. Note the secret key $z$ is passed along from round to round (otherwise, the token cannot correctly decide its response in a round $k$ given query string $\widetilde{y}_k$).}
      \label{fig:interactionMemory}
    \end{figure}

Before finally defining isometries $A_k$, let us further elaborate on how the token's private memory spaces $\spa{W}_k$ is modelled (illustrated in Figure~\ref{fig:interactionMemory}).
\begin{itemize}
    \item $\spa{W}_0$ contains the secret key $z\in\set{0,1}^{2n}$ of the token (i.e. the token knows what the secret key is).
    \item Each $\spa{W}_k$ register for $k>0$ is split into $k+1$ parts:
    \begin{itemize}
        \item $\spa{W}_{k,1}$ contains a copy of $z$ (this allows us to pass forward $z$ from one round of interaction to the next, i.e. the token should know the secret key in \emph{all} rounds), and
        \item  $\spa{W}_{k,r}$ for $r\geq 2$ contains a copy in the standard basis of the user's $(r-1)$st query string (string $\widetilde{y}$), as well as the token's response from $\Sigma$ for said query.
    \end{itemize}
\end{itemize}
    \noindent Note that an additional technical reason for storing $\widetilde{y}$ above is that it ensures $\Delta_k(z)^\dagger\Delta_k(z)=I$, so that each $A_i$ defined shortly is an isometry. We remark that while the size of $\spa{W}$ grows with $m$ in our security analysis here, the actual token does not have growing memory requirements, since it stores nothing other than the secret key $z$ in its private memory (i.e. registers $\spa{W}_{k,r}$ for $r\geq 2$ exist only for our security analysis, not the actual implementation of the token).

\begin{figure}[t]
\centering
\begin{tabular}{c|l}
  \hline
  Operator & Description \\
  \hline
  &\\
   $\Delta_k(z)$    & 1. Sends back token's response to user's $k$th query $\widetilde{y}_k$, \emph{conditioned on} key $z$\\
                    & 2. Copies all data sent above to token's private register\\
                    & 3. Forwards token's existing private memory contents to next round\\
                    &\\
   \hline
   &\\
   $A_k$            & ``Bootstraps'' $\Delta_k(z)$ by summing over all possible keys $z$\\
                    & Note $A_0$ also has special role of sending quantum key $\ket{\psi_z}$ to user\\
   &\\
   \hline
   &\\
   $A$& The operator obtained by taking the product of all $A_k$\\
   &\\
   \hline
   &\\
   $B_1$ & The operator $A$ projected down onto the space of all ``accepting'' strategies, i.e. \\
   &where the token's private memory in the last round, $\mathcal{W}_m$, contains $\ket{0}$ and $\ket{1}$\\
    &in some $\mathcal{W}'_i$ and $\mathcal{W}'_j$ for $i\neq j$, respectively\\
   &\\
   \hline
    &\\
   $Q_1$& The operator $B_1$ is reshaped into a column vector via $\operatorname{vec}()$ mapping, then the token's \\
    &private memory in  the last round, $\mathcal{W}_m$, is traced out. \\
    &\\
  \hline
\end{tabular}
\caption{An overview of how all operators in our instantiation of the GW framework fit together.}
\label{fig:list}
\end{figure}

We are now ready to define isometries $A_k$ for round $k$ of the token's actions, where $1< k \leq m$:
\begin{align}
    A_0 &= \frac{1}{2^n}\sum_{z\in\set{0,1}^{2n}}\ket{\psi_z}_{\spa{X}_1}\ket{z}_{\spa{W}_{0,1}}\\
    A_1 &= \sum_{z\in\set{0,1}^{2n}}\Delta_1(z)_{\spa{Y}_1,\spa{X}_{2},\spa{W}_{1,2}}\otimes \ket{z}_{\spa{W}_{1,1}}\bra{z}_{\spa{W}_{0,1}}\\
    A_k &= \sum_{z\in\set{0,1}^{2n}}\Delta_k(z)_{\spa{Y}_k,\spa{X}_{k+1},\spa{W}_{k,k+1}}\otimes \ket{z}_{\spa{W}_{k,1}}\bra{z}_{\spa{W}_{k-1,1}} \bigotimes_{r=2}^{k}I_{\spa{W}_{k,r},\spa{W}_{k-1,r}}
\end{align}
Here, $A_0:\complex\mapsto \spa{X}_1\otimes \spa{W}_0$, and $A_k:\spa{Y}_k\otimes\spa{W}_{k-1}\mapsto\spa{X}_{k+1}\otimes\spa{W}_k$ for $1\leq k \leq m$. The intuition is as follows:
\begin{itemize}
    \item $A_0$ captures the token choosing an initial secret key $z$ uniformly at random and preparing corresponding quantum key $\ket{\psi_z}$, which it sends to the user in register $\spa{X}_1$.
    \item Each $A_i$ for $1\leq k\leq m$ consists of terms $\Delta_k(z)$ and $\ketbra{z}{z}$. The latter simply copies forward the secret key $z$ from round $i-1$ to $i$ from private register $\spa{W}_{k-1,1}$ to $\spa{W}_{k,1}$ , ensuring the token always knows $z$. Recall the term $\Delta_k(z)$, defined in Equation~(\ref{eqn:independent}), captures the token reading a message $\widetilde{y}$ from the user in $\spa{Y}_k$, measuring it in the standard basis (simulated by copying string $\widetilde{y}$ to a private register $\spa{W}_{k,k+1}$), returning an appropriate response to the user in register $\spa{X}_{k+1}$, and storing a copy of the $k$th response from $\Sigma$ to the user in the private register $\spa{W}_{k,k+1}$.
\end{itemize}
\begin{remark}\label{rem}
It is in the definition of the $A_i$ above that it is now formally clear that, although in our analysis the token stores additional data in its private register $\spa{W}$ (in addition to the private key, $z$), the token's response on incoming $k$th message $\widetilde{y}$ depends \emph{solely on register} $\spa{W}_{k,1}$, which contains only the secret key, $z$. (This is most easily seen through Equation~\ref{eqn:independent}, where the only ``bra'' vector is $\bra{\widetilde{y}}_{\spa{Y}_k}$, meaning the corresponding response $\ket{\overline{0}0}_{\spa{X}_{k+1}}$ depends only on $\widetilde{y}$ and the string $z$ (since the term $\overline{0}$ depends on the summation criterion $\widetilde{y}\in A_{\overline{0}}(z)$, which depends on $z$). Thus, the effective interactive behavior of the token is indeed stateless, as desired.
\end{remark}

\paragraph{Combining isometries $A_i$ to get $A$.} Having defined isometries $A_i$, their product now yields operator $A$ from Equation~(\ref{eqn:A}) (where we reorder the $\spa{X}$ and $\spa{W}$ registers to clarify that incoming message $\spa{Y}_k$ results in outgoing message $\spa{X}_{k+1}$):
\begin{align}
    A =& \frac{1}{2^n}\sum_{z\in\set{0,1}^{2n}}\sum_{\widetilde{y}_1,\ldots, \widetilde{y}_m\in \set{0,1}^{n+1}}\ket{\widetilde{y}_1r(\widetilde{y}_1,z)}_{\spa{W}_{m,2}}\otimes\cdots\otimes\ket{\widetilde{y}_mr(\widetilde{y}_m,z)}_{\spa{W}_{m,m+1}}\otimes \\
    &\ket{r(\widetilde{y}_{m},z)s_{r(\widetilde{y}_{m},z)}}_{\spa{X}_{m+1}}\bra{\widetilde{y}_m}_{\spa{Y}_m}\otimes \ket{r(\widetilde{y}_{m-1},z)s_{r(\widetilde{y}_{m-1},z)}}_{\spa{X}_{m}}\bra{\widetilde{y}_{m-1}}_{\spa{Y}_{m-1}} \otimes \cdots\otimes
     \\
    &\ket{r(\widetilde{y}_{1},z)s_{r(\widetilde{y}_{1},z)}}_{\spa{X}_{2}}\bra{\widetilde{y}_{1}}_{\spa{Y}_{1}}\otimes\ket{\psi_z}_{\spa{X}_1}\otimes\ket{z}_{\spa{W}_{m,1}},
\end{align}
where $r(\widetilde{y},z)\in\Sigma$ denotes whether the token accepted or rejected query string $\widetilde{y}$ assuming secret key $z$, and $s_{r(\widetilde{y},z)}\in\set{0,1}$ is the secret bit returned by the token corresponding to response $r(\widetilde{y},z)\in\Sigma$ (recall we set $s_{r(\widetilde{y},z)}=0$ if $r(\widetilde{y},z)\in\set{\overline{0},\overline{1}}$).

\paragraph{Defining operator $Q_1$.} In order to next define operator $Q_1$ from Equation~(\ref{eqn:CJ}), we model what it means for a cheating user of the token to ``succeed''. As mentioned earlier, this is formalized by having the token make a final measurement on its private memory after the protocol concludes, in order to determine whether the user has successfully extracted both secret bits via queries. Formally, for convenience, let $\spa{W}'$ denote the tensor product of the registers in $\spa{W}_{m,k}$ for $2\leq k\leq m+1$, which hold the values from $\Sigma$ (\emph{i.e.},~the responses $r(\widetilde{y}_{k-1},z)$). Then, a \emph{successful} user makes at least one correct $0$-query and at least one correct $1$-query (where a $j$-query refers to a query for choice bit $j$).

We define the ``accepting'' measurement operator $\Lambda_1$, corresponding to a successful user, as follows. $\Lambda_1$ maps $\spa{W}'$ to itself, and is a projector onto the set of strings with some $i\neq j$ such that $\spa{W}'_i$ is set to $\ket{0}$ and $\spa{W}'_j$ is set to $\ket{1}$. In other words, $\Lambda_1$ projects onto set
\begin{equation}\label{eqn:S}
    T:=\set{t\in \Sigma^m\mid t \text{ contains at least one $0$ and one $1$}}.
\end{equation}
To use this definition of $\Lambda_1$ to write down $B_1$, we require further terminology. Define for any $t\in T$ and fixed key $z\in\set{0,1}^{2n}$, the set of all consistent sequences of query strings $\widetilde{y}_i\in\set{0,1}^{n+1}$ as:
\begin{equation}
    Y_t = \set{(\widetilde{y},z)\in \set{0,1}^{m(n+1)}\times \set{0,1}^{2n}\mid r(\widetilde{y}_i,z)=t_i \text{ for $\widetilde{y}_i$ the $i$th block of $(n+1)$ bits in $\widetilde{y}$}}.
\end{equation}
(For example, the second block of $(n+1)$ bits of $0^{n+1}1^{n+1}$ is $1^{n+1}$.) In words, $Y_t$ is the set of all strings $\widetilde{y}_1\widetilde{y}_2\cdots \widetilde{y}_m z$ such that the response of the token on query $i$, $r(\widetilde{y}_i,z)\in\Sigma$, is consistent with $t_i$. Using this, define relation $R\subseteq\Sigma^m\times\set{0,1}^{m(n+1)}\times \set{0,1}^{2n}$ such that
\begin{equation}\label{eqn:R}
    (t,\widetilde{y},z)\in R \text{ if and only if }\left[ t\in T \text{ and } (\widetilde{y},z)\in Y_t\right].
\end{equation}
\noindent In words, a triple $(t,\widetilde{y},z)\in R$ if for a secret key $z$ and query string $\widetilde{y}$, $t\in T\subseteq \Sigma^m$ encodes the correct set of $m$ responses from the token (where recall $T$ is the set of all ``successful'' response strings).

Recall from Equation~(\ref{eqn:CJ}) that to define $Q_1$, we required $B_1$, which in turn required $\Lambda_1$ and $A$. With the latter two in hand, we can now define $B_1=(\sqrt{\Lambda_1}\otimes I)A=(\Lambda_1\otimes I)A$ as (where recall $t_i=r(\widetilde{y}_i,z)$)
\begin{align}
     B_1=& \frac{1}{2^n}\sum_{(t,\widetilde{y},z)\in R}\ket{\widetilde{y}_1t_1}_{\spa{W}_{m,2}}\otimes\cdots\otimes\ket{\widetilde{y}_m t_m}_{\spa{W}_{m,m+1}}\otimes \\
    &\hspace{18mm}\ket{t_{m}s_{t_{m}}}_{\spa{X}_{m+1}}\bra{\widetilde{y}_m}_{\spa{Y}_m}\otimes \ket{t_{m-1}s_{t_{m-1}}}_{\spa{X}_{m}}\bra{\widetilde{y}_{m-1}}_{\spa{Y}_{m-1}} \otimes \cdots\otimes
    \ket{t_1s_{t_1}}_{\spa{X}_{2}}\bra{\widetilde{y}_{1}}_{\spa{Y}_{1}}\otimes \\
    &\hspace{18mm}\ket{\psi_z}_{\spa{X}_1}\otimes\ket{z}_{\spa{W}_{m,1}}.
\end{align}
By definition of the $\operatorname{vec}$ mapping (Section~\ref{sscn:gw}),
\begin{align}
  \operatorname{vec}(B_1)=&\frac{1}{2^n}\sum_{(t,\widetilde{y},z)\in R}\ket{\widetilde{y}_1t_1}_{\spa{W}_{m,2}}\otimes\cdots\otimes\ket{\widetilde{y}_mt_m}_{\spa{W}_{m,m+1}}\otimes \\
    &\hspace{18mm}\ket{t_{m}s_{t_{m}}}_{\spa{X}_{m+1}}\ket{\widetilde{y}_m}_{\spa{Y}_m}\otimes \ket{t_{m-1}s_{t_{m-1}}}_{\spa{X}_{m}}\ket{\widetilde{y}_{m-1}}_{\spa{Y}_{m-1}} \otimes \cdots\otimes
    \ket{t_1s_{t_1}}_{\spa{X}_{2}}\ket{\widetilde{y}_{1}}_{\spa{Y}_{1}}\otimes \\
    &\hspace{18mm}\ket{z}_{\spa{W}_{m,1}}\otimes\ket{\psi_z}_{\spa{X}_1}.
\end{align}
Finally, $Q_1=\trace_{\spa{W}_m}(\operatorname{vec}(B_1)\operatorname{vec}(B_1)^*)$ equals
\begin{align}
  Q_1=&\frac{1}{2^{2n}}\sum_{(t,\widetilde{y},z)\in R}
     \ketbra{t_{m}s_{t_{m}}}{t_{m}s_{t_{m}}}_{\spa{X}_{m+1}}\otimes\ketbra{\widetilde{y}_m}{\widetilde{y}_m}_{\spa{Y}_m}\otimes\cdots\otimes\\
    &
    \hspace{19mm}\ketbra{t_1s_{t_1}}{t_1s_{t_1}}_{\spa{X}_{2}}\otimes\ketbra{\widetilde{y}_{1}}{\widetilde{y}_{1}}_{\spa{Y}_{1}}\otimes
    \ketbra{\psi_z}{\psi_z}_{\spa{X}_1}.
\end{align}

\noindent Note that we have crucially used the fact that queries to the token are \emph{classical strings}. Namely, since the token implicitly measures its input in the standard basis (modelled by copying each string $\widetilde{y}_i$ to register $\spa{W}_i$), the partial trace over $\spa{W}_m$ annihilates all cross terms in $\operatorname{vec}(B_1)\operatorname{vec}(B_1)^*$. Thus, $Q_1$ is conveniently simplified to a \emph{mixture} over $(t,\widetilde{y},z)\in R$, which is further block-diagonal with respect to all registers other than $\spa{X}_1$.

For convenience, we permute subsystems to rewrite:
\begin{align}
  Q_1=&\frac{1}{4^{n}}\sum_{t\in T}\ketbra{t_{m}s_{t_{m}}}{t_{m}s_{t_{m}}}_{\spa{X}_{m+1}}\otimes\cdots\otimes\ketbra{t_1s_{t_1}}{t_1s_{t_1}}_{\spa{X}_{2}}\otimes\\
  &\hspace{4mm}\left(\sum_{(\widetilde{y},z)\in Y_t}\ketbra{\widetilde{y}_m}{\widetilde{y}_m}_{\spa{Y}_m}\otimes \cdots \otimes
\ketbra{\widetilde{y}_1}{\widetilde{y}_1}_{\spa{Y}_{1}}\otimes \ketbra{\psi_z}{\psi_z}_{\spa{X}_1}\right).\label{eqn:qa}
\end{align}

\paragraph{The SDP.} Having set up all required operators for the GW framework, Equation~(\ref{eqn:final1}) of Section~\ref{sscn:gw} now yields the optimal probability with which a cheating user can succeed; we reproduce Equation~(\ref{eqn:final1}) below for convenience. Note the subsystem ordering of $Q_1$ below is not that of Equation~(\ref{eqn:qa}), but rather $Q_1\in \operatorname{Pos}(\spa{Y}_{1,\ldots, m}\otimes \spa{X}_{1,\ldots, m+1})$ below; we have omitted explicitly including the permutation effecting this reordering to avoid clutter. Also, to account for the slight asymmetry in our protocol (the token sends out $m+1$ messages $\spa{X}_i$, whereas the user only sends $m$ messages $\spa{Y}_i$), we add a dummy space $\spa{Y}_{m+1}=\complex$ which models an empty $(m+1)$st message from the user to the token.
		\begin{align}
			\text{min:}\quad & p\\
  		\text{subject to:}\quad & Q_1 \preceq p R_{m+1}\\
  		& R_k= P_k\otimes I_{\spa{Y}_k}\qquad\qquad\qquad &\text{for }1\leq k\leq m+1\\
        & \trace_{\spa{X}_k}(P_k)= R_{k-1} &\text{for }1\leq k\leq m+1\\
        & R_0 = 1\\
        & R_k\in \operatorname{Pos}(\spa{Y}_{1,\ldots, k}\otimes \spa{X}_{1,\ldots, k})&\text{for }1\leq k\leq m+1\\
        & P_k\in \operatorname{Pos}(\spa{Y}_{1,\ldots, k-1}\otimes \spa{X}_{1,\ldots, k})&\text{for }1\leq k\leq m+1\\
        &p\in[0,1] &
          	\end{align}	

\noindent In the analysis below, we shall sometimes analyze the optimization above, which we shall denote $\Gamma'$. However, note that technically it is not yet an SDP due to the quadratic constraint $Q_1\preceq p R_{m+1}$. It is, however, easily seen to be equivalent to the following bona fide SDP $\Gamma$:
		\begin{align}
			\text{min:}\quad & p\label{eqn:11}\\
  		\text{subject to:}\quad & Q_1 \preceq R_{m+1}\label{eqn:12}\\
  		& R_k= P_k\otimes I_{\spa{Y}_k}\qquad\qquad\qquad &\text{for }1\leq k\leq m+1\label{eqn:13}\\
        & \trace_{\spa{X}_k}(P_k)= R_{k-1} &\text{for }1\leq k\leq m+1\label{eqn:14}\\
        & R_0 = p\label{eqn:15}\\
        & R_k\in \operatorname{Pos}(\spa{Y}_{1,\ldots, k}\otimes \spa{X}_{1,\ldots, k})&\text{for }1\leq k\leq m+1\label{eqn:16}\\
        & P_k\in \operatorname{Pos}(\spa{Y}_{1,\ldots, k-1}\otimes \spa{X}_{1,\ldots, k})&\text{for }1\leq k\leq m+1\label{eqn:17}
          	\end{align}	
Above and henceforth, we use terminology $\spa{T}_{1\cdots k}$ to denote the space $\spa{T}_1\otimes\cdots\otimes\spa{T}_k$.

\paragraph{Warmup: A ``trivial'' solution.} We mentioned in Section~\ref{sscn:securityintuition} that obtaining a solution to $\Gamma$ which obtains the trivial bound $p\leq 1$ is not trivial. (Sometimes with SDPs, a scaled identity operator gives a feasible solution obtaining the desired trivial bound on the objective value; this unfortunately does not work here.) Let us hence warm up by demonstrating a solution attaining the trivial bound $p\leq 1$.

\begin{lemma}\label{l:redundant}
    The SDP $\Gamma$ has a feasible solution with $p=1$.
\end{lemma}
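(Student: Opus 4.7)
The plan is to construct a feasible solution by lifting the full \emph{unmeasured} isometric action of the token through the GW framework. Concretely, let $\Lambda_0 := I_{\spa{W}'} - \Lambda_1$ be the ``reject'' projector complementary to the ``accept'' projector $\Lambda_1$, and define $Q_0$ in direct analogy with $Q_1$ by
\[
    Q_0 := \trace_{\spa{W}_m}(\operatorname{vec}(B_0)\operatorname{vec}(B_0)^\dagger), \qquad B_0 := (\sqrt{\Lambda_0}\otimes I)A.
\]
Using the equivalent form of Equation~(\ref{eqn:CJ}), namely $Q_a = \trace_{\spa{W}_m}((\Lambda_a \otimes I)\operatorname{vec}(A)\operatorname{vec}(A)^\dagger)$, together with linearity of the trace, one immediately obtains
\[
    Q_0 + Q_1 = \trace_{\spa{W}_m}(\operatorname{vec}(A)\operatorname{vec}(A)^\dagger),
\]
which is precisely the Choi-Jamio\l{}kowski representation of the composed token isometry $A = A_m \cdots A_1 A_0$ \emph{prior} to any final measurement. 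I would therefore set $R_{m+1} := Q_0 + Q_1$.

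The key step is to invoke the GW characterization of valid co-strategies (Corollary~7 of~\cite{GutoskiW07}, as recorded in Equations~(\ref{eqn:cond1})--(\ref{eqn:lastcond})). Since $A$ is a composition of isometries, the operator $R_{m+1}$ is, by that corollary, the CJ representation of a valid non-measuring $(m+1)$-round co-strategy, where the dummy input register $\spa{Y}_{m+1} = \complex$ is treated as trivial. This guarantees the existence of positive semidefinite operators $P_1,\ldots,P_{m+1}$ and $R_1,\ldots,R_m$ satisfying the recursive constraints $R_k = P_k \otimes I_{\spa{Y}_k}$ and $\trace_{\spa{X}_k}(P_k) = R_{k-1}$, terminating at $R_0 = 1$ because $A$ is an isometry. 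This immediately gives $p = 1$ in constraint~(\ref{eqn:15}), and the positivity constraints~(\ref{eqn:16})--(\ref{eqn:17}) are inherited from the fact that every $R_k, P_k$ arises as a CJ representation of a completely positive map.

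Finally, the dominance constraint~(\ref{eqn:12}) is immediate: $Q_0 \succeq 0$ by construction (it is a CJ representation of a completely positive map), so $R_{m+1} = Q_0 + Q_1 \succeq Q_1$. I do not anticipate any substantive obstacle; the entire argument reduces to the fact that the token's action is a composition of isometries and that $\set{\Lambda_0, \Lambda_1}$ is a complete two-outcome projective measurement. The lemma serves as a sanity check --- attaining the trivial bound $p \leq 1$ via the ``natural'' unmeasured co-strategy --- before the nontrivial bound $p \in O(2^{2m - 0.228n})$ is derived from a smarter choice of $R_{m+1}$ in Theorem~\ref{thm:cheatingbound}.
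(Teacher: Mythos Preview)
Your proposal is correct and, in fact, constructs the \emph{same} operator $R_{m+1}$ as the paper: since $\Lambda_1$ projects onto the set $T$ of ``successful'' response strings and $\Lambda_0 = I - \Lambda_1$ onto its complement, the sum $Q_0 + Q_1$ is exactly what the paper obtains by dropping the constraint $t(\widetilde{y},z)\in T$ from $Q_1$. The difference is purely in how feasibility is argued. The paper verifies the recursive constraints~(\ref{eqn:13})--(\ref{eqn:15}) by hand, explicitly tracing out $\spa{X}_{m+1},\spa{X}_m,\ldots,\spa{X}_1$ one at a time and checking that each $\spa{Y}_k$-register becomes an identity factor after the corresponding $\spa{X}_{k+1}$ is removed. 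You instead observe that $R_{m+1} = \trace_{\spa{W}_m}(\operatorname{vec}(A)\operatorname{vec}(A)^\dagger)$ is by construction the Choi--Jamio\l{}kowski representation of the token's unmeasured co-strategy, and hence satisfies those constraints automatically by the ``only if'' direction of Corollary~7 of~\cite{GutoskiW07}.

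Your route is shorter and more conceptual: it makes transparent \emph{why} the trivial bound $p=1$ must be attainable (the token itself furnishes a witnessing co-strategy), whereas the paper's computation, while self-contained, somewhat obscures this. On the other hand, the paper's explicit trace-out argument has the minor advantage of not requiring the reader to unpack the precise normalization convention in GW's Corollary~7 that yields $R_0=1$; your appeal to ``$A$ is an isometry'' is correct but terse on this point, and a reader unfamiliar with~\cite{GutoskiW07} might appreciate one more sentence spelling out that the base case $R_0=1$ is exactly the statement that the initial preparation $A_0$ is a unit vector.
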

\begin{proof}
Recall from Equation~(\ref{eqn:qa}) that
\begin{equation}
  Q_1=\frac{1}{4^{n}}\sum_{(t,\widetilde{y},z)\in R}\ketbra{t,s}{t,s}_{\spa{X}_{{m+1}\cdots 2}}\otimes
  \left(\ketbra{\widetilde{y}_m}{\widetilde{y}_m}_{\spa{Y}_m}\otimes \cdots \otimes
\ketbra{\widetilde{y}_1}{\widetilde{y}_1}_{\spa{Y}_{1}}\right)\otimes \ketbra{\psi_z}{\psi_z}_{\spa{X}_1},
\end{equation}
where $t\in T \subseteq \Sigma^m$ and $s\in \set{0,1}^m$ are the resulting query responses and secret bits, respectively. (Recall from Equation~(\ref{eqn:qa}) that, formally, we should write $s_t$, as each $s$ depends on $t$; to save clutter and space below, however, we drop the subscript.) Observe that any fixed $\widetilde{y}\in\set{0,1}^{m(n+1)}$ and $z\in\set{0,1}^{2n}$ determine a \emph{unique} query response string $t\in\Sigma^m$ (which may or may not be in $T$); denote this as $t(\widetilde{y},z)$. Therefore,
\begin{equation}
  Q_1=\frac{1}{4^{n}}\sum_{\substack{\widetilde{y},z\\\text{s.t. }t(\widetilde{y},z)\in T}}\ketbra{t(\widetilde{y},z),s}{t(\widetilde{y},z),s}_{\spa{X}_{{m+1}\cdots 2}}\otimes
  \left(\ketbra{\widetilde{y}_m}{\widetilde{y}_m}_{\spa{Y}_m}\otimes \cdots \otimes
\ketbra{\widetilde{y}_1}{\widetilde{y}_1}_{\spa{Y}_{1}}\right)\otimes \ketbra{\psi_z}{\psi_z}_{\spa{X}_1},
\end{equation}
for $T\subseteq\Sigma^m$ defined as in Equation~(\ref{eqn:S}). Let us drop the constraint that $t(\widetilde{y},z)\in T$, \emph{i.e.}~choose
\begin{equation}
  R_{m+1}=\frac{1}{4^{n}}\sum_{\widetilde{y},z}\ketbra{t(\widetilde{y},z),s}{t(\widetilde{y},z),s}_{\spa{X}_{{m+1}\cdots 2}}\otimes
  \left(\ketbra{\widetilde{y}_{m}}{\widetilde{y}_{m}}_{\spa{Y}_{m}}\otimes \cdots \otimes
\ketbra{\widetilde{y}_1}{\widetilde{y}_1}_{\spa{Y}_{1}}\right)\otimes \ketbra{\psi_z}{\psi_z}_{\spa{X}_1}.
\end{equation}
Clearly, $Q_1\preceq p\cdot R_{m+1}$ for $p=1$, since we added positive semidefinite terms to $Q_1$ to get $R_{m+1}$. Thus, if $R_{m+1}$ satisfies the remaining primal constraints, then it has objective function value $p=1$.

To see that $R_{m+1}$ satisfies the constraints, clearly $R_{m+1}$ has $I$ in register $\spa{Y}_{m+1}$ (recall $\spa{Y}_{m+1}=\complex$, so this just means $\spa{Y}_{m+1}$ is trivially set to $1$). Let us now trace out $\spa{X}_{m+1}$; we require that register $\spa{Y}_{m-1}$ now also contains the identity. For this, $\trace_{\spa{X}_{m+1}}(R_{m+1})$ equals:
\begin{equation}
  \frac{1}{4^{n}}\sum_{\widetilde{y}_m,\ldots,\widetilde{y}_1}\sum_{z}\ketbra{t_{m-1}(\widetilde{y},z)s_{m-1}}{t_{m-1}(\widetilde{y},z)s_{m-1}}_{\spa{X}_{{m}\cdots 2}}\otimes
  \left(\ketbra{\widetilde{y}_{m}}{\widetilde{y}_{m}}_{\spa{Y}_{m}}\otimes \cdots \otimes
\ketbra{\widetilde{y}_1}{\widetilde{y}_1}_{\spa{Y}_{1}}\right)\otimes \ketbra{\psi_z}{\psi_z}_{\spa{X}_1},
\end{equation}
where for brevity we use $t_{m-1}(\widetilde{y},z)s_{m-1}$ to denote the first $m-1$ queries. But since we discarded the $m$th symbol of $t(\widetilde{y},z)$, registers $\spa{Y}_m$ and $\spa{X}_1$ are now independent. Thus, bringing in the sum over $\widetilde{y}_m$,
\begin{align}
\trace_{\spa{X}_{m+1}}(R_{m+1})=&\frac{1}{4^{n}}\sum_{\widetilde{y}_{m-1},\ldots,\widetilde{y}_1}\sum_{z}\ketbra{t_{m-1}(\widetilde{y},z)s_{m-1}}{t_{m-1}(\widetilde{y},z)s_{m-1}}_{\spa{X}_{{m}\cdots 2}}\otimes\\
  &\left(I_{\spa{Y}_{m}}\otimes \ketbra{\widetilde{y}_{m-1}}{\widetilde{y}_{m-1}}_{\spa{Y}_{m-1}}\otimes \cdots \otimes
\ketbra{\widetilde{y}_1}{\widetilde{y}_1}_{\spa{Y}_{1}}\right)\otimes \ketbra{\psi_z}{\psi_z}_{\spa{X}_1}.
\end{align}
In a similar fashion, tracing out registers $\spa{X}_{m\cdots 2}$ will yield operator
\begin{equation}
\frac{1}{4^{n}}I_{\spa{Y}_{m+1\cdots 1}}\otimes \sum_{z}\ketbra{\psi_z}{\psi_z}_{\spa{X}_1}.
\end{equation}
Finally, tracing out $\spa{X}_1$ yields $I_{\spa{Y}_{m\cdots 1}}$, since there are $4^n$ possible quantum key states $\ket{\psi_z}$. Hence, $R_{m+1}$ is a feasible solution.
\end{proof}

\paragraph{An upper bound on the cheating probability.} We now give a feasible solution to SDP $\Gamma$ which yields the claimed security against a linear number of queries. Its proof of correctness relies on Lemma~\ref{l:eval}, which we state and prove first.

\begin{lemma}\label{l:eval}
    For $Q_1$ in Equation~(\ref{eqn:qa}), $\lmax(Q_1)=\frac{2}{4^n}\left(1+\frac{1}{\sqrt{2}}\right)^n$.
\end{lemma}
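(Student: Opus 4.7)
The plan is to block-diagonalize $Q_1$ and then reduce the eigenvalue computation to a tensor-product calculation across qubit positions. First, observe that the standard basis states on $\spa{X}_{2\cdots m+1}$ encoding the response pattern $t$ are orthogonal for distinct $t\in T$, and likewise the $\ket{\widetilde{y}}$ states on $\spa{Y}_{1\cdots m}$ are orthogonal for distinct query strings $\widetilde{y}$. Hence $Q_1$ is block-diagonal with blocks indexed by $(t,\widetilde{y})$, and
\[
    \lmax(Q_1) \;=\; \frac{1}{4^{n}}\max_{t\in T,\,\widetilde{y}}\, \lmax\bigl(N_{t,\widetilde{y}}\bigr), \qquad N_{t,\widetilde{y}} \;:=\; \sum_{z:\,(\widetilde{y},z)\in Y_t}\ketbra{\psi_z}{\psi_z}.
\]

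The second step is to exploit the structure of $Y_t$ for $t\in T$. Since $t$ contains at least one $0$-success and one $1$-success query, for any compatible $z=(\theta,x)$ and any qubit position $k$, the bit $x_k$ is uniquely determined by $\theta_k$ together with $\widetilde{y}$: if $\theta_k=+$ then $x_k$ equals the $k$th bit of any $0$-success query, and if $\theta_k=\times$ then $x_k$ equals the $k$th bit of any $1$-success query. The per-position consistency checks for the $0$- and $1$-success queries (``do all the $0$-success $y_i$ agree at position $k$?'') force the allowed set of $\theta$'s, after dropping the global $\bar{0},\bar{1}$ disagreement constraints, to be a product set $\prod_k A_k$ with $A_k\subseteq\{+,\times\}$. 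Dropping the global constraints only adds rank-one positive semidefinite terms, so writing $N^{\mathrm{rel}}_{t,\widetilde{y}}$ for this relaxed sum, one has $N_{t,\widetilde{y}} \preceq N^{\mathrm{rel}}_{t,\widetilde{y}}$ and
\[
    N^{\mathrm{rel}}_{t,\widetilde{y}} \;=\; \bigotimes_{k=1}^{n} M_k, \qquad M_k \;=\; \sum_{\theta_k\in A_k}\ketbra{x_k(\theta_k)}{x_k(\theta_k)}_{\theta_k}.
\]

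Third, each local factor $M_k$ has the form $\alpha\ketbra{a}{a} + \beta\ketbra{b}{b}_\times$ with $\alpha,\beta\in\{0,1\}$ and $a,b\in\{0,1\}$. A direct $2\times 2$ computation using $|\langle a\mid b\rangle_\times| = 1/\sqrt{2}$ shows that when both terms are present the eigenvalues of $M_k$ are $1 \pm 1/\sqrt{2}$; when exactly one term is present $\lmax(M_k)=1$; otherwise $M_k=0$. In every case $\lmax(M_k)\leq 1+1/\sqrt{2}$, so multiplicativity of $\lmax$ under tensor products yields the upper bound $\lmax(N_{t,\widetilde{y}})\leq (1+1/\sqrt{2})^n$ uniformly in $(t,\widetilde{y})$.

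Finally, the plan is to exhibit a matching witness to obtain the claimed equality. Choose $t$ with exactly one $0$-success and one $1$-success response (and, if $m>2$, choose the remaining failure responses together with query strings that disagree with the $x$ forced by the two success queries, so that the global $\bar{0},\bar{1}$ constraints are satisfied without further restricting $\theta$). Any compatible $\widetilde{y}$ then gives every $A_k = \{+,\times\}$ and each local factor $M_k = \ketbra{y_{1,k}}{y_{1,k}} + \ketbra{y_{2,k}}{y_{2,k}}_\times$ attaining $\lmax = 1 + 1/\sqrt{2}$, so $\lmax(N_{t,\widetilde{y}}) = (1+1/\sqrt{2})^n$. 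The main obstacle to watch for is the second step: one must verify carefully that the global $\bar{0},\bar{1}$ disagreement constraints truly only discard terms from the sum (rather than coupling positions non-trivially) and that a witness exists in which these constraints are compatible with $(1+1/\sqrt{2})^n$ being attained at every local factor.
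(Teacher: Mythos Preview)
Your upper-bound argument is correct and essentially matches the paper's: block-diagonalize $Q_1$ over $(t,\widetilde{y})$, observe that the success constraints pin down $x_k$ as a function of $\theta_k$ and restrict the allowed $\theta$'s to a product set, then bound each single-qubit factor by $1+1/\sqrt{2}$. Dropping the $\bar{0},\bar{1}$ constraints only adds positive semidefinite terms, so the relaxation is sound.

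The gap is in your lower-bound witness for $m>2$. You propose filling the remaining $m-2$ slots with \emph{failure} responses, choosing the failing query strings so that ``the global $\bar{0},\bar{1}$ constraints are satisfied without further restricting $\theta$.'' This is impossible. A $\bar{0}$-response demands the existence of some position $k$ with $\theta_k=+$ at which the query disagrees with $x$; but when $\theta=\times^n$ there are \emph{no} $+$-positions, so every $0$-query succeeds vacuously and $\bar{0}$ is incompatible with $\theta=\times^n$. Symmetrically, any $\bar{1}$-response excludes $\theta=+^n$. Hence for your witness the set of compatible $\theta$'s is always a strict subset of $\{+,\times\}^n$, and the resulting block operator is $N^{\mathrm{rel}}_{t,\widetilde{y}}$ minus at least one rank-one term $\ketbra{\psi_z}{\psi_z}$ that has nonzero overlap with the top eigenvector of $N^{\mathrm{rel}}_{t,\widetilde{y}}$. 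Its largest eigenvalue is therefore strictly smaller than $(1+1/\sqrt{2})^n$, and the equality in the lemma is not attained.

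The paper sidesteps this entirely by taking $t=0^{m-1}1$ and setting $\widetilde{y}_1=\cdots=\widetilde{y}_{m-1}$ (all the same successful $0$-query) with $\widetilde{y}_m$ a successful $1$-query. With no failure responses present, there are no existential constraints at all, every $\theta\in\{+,\times\}^n$ is compatible, and the block operator is exactly the full tensor product $\bigotimes_k(\ketbra{y_{1,k}}{y_{1,k}}+H\ketbra{y_{m,k}}{y_{m,k}}H)$ with top eigenvalue $(1+1/\sqrt{2})^n$. Replacing your ``one success of each type plus failures'' witness with this ``repeat a successful query'' witness closes the gap.
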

\begin{proof}
The factor of $4^{-n}$ in the claimed value for $\lmax(Q_1)$ comes from the $4^{-n}$ appearing in Equation~(\ref{eqn:qa}); we henceforth thus ignore this $4^{-n}$ term in this proof by redefining $Q_1$ as $4^nQ_1$. We shall also ignore the $b_i$ terms in $Q_1$, as they shall play no role in the analysis. Now, since $Q_1$ is block-diagonal (with respect to the standard basis) on registers $\spa{X}_2$,\ldots,$\spa{X}_{m+1}$, $\spa{Y}_1$,\ldots,$\spa{Y}_m$, it suffices to characterize the largest eigenvalue of any block. We shall say that any fixed $t\in T$ and $\widetilde{y}\in\set{0,1}^{m(n+1)}$ defines the $(t,\widetilde{y})$-block of $Q_1$. (Formally, the $(t,\widetilde{y})$-block of $Q_1$ is given by $\Pi_{t,\widetilde{y}}Q_1\Pi_{t,\widetilde{y}}$, where $\Pi_{t,\widetilde{y}}=\ketbra{t}{t}_{\spa{X}_{m+1\cdots 2}}\otimes\ketbra{\widetilde{y}}{\widetilde{y}}_\spa{Y}$.)

\paragraph{Lower bound.} We first show lower bound $\lmax(Q_1)\geq \frac{2}{4^n}(1+\frac{1}{\sqrt{2}})^n$. To do so, we demonstrate an explicit $t,\widetilde{y}$ such that the $(t,\widetilde{y})$-block has eigenvalue $\frac{2}{4^n}(1+\frac{1}{\sqrt{2}})^n$. Set $t=0^{m-1}1$ (note $t\in T$) and $\widetilde{y}=\widetilde{y}_1\ldots \widetilde{y}_m$ for $\widetilde{y}_1=\widetilde{y}_2=\cdots=\widetilde{y}_{m-1}$ and $\widetilde{y}_{m-1}\neq \widetilde{y}_m$ (note $\widetilde{y}_i\in\set{0,1}^{n+1}$), where the first bit of each of $\widetilde{y}_1,\ldots, \widetilde{y}_{m-1}$ is $0$, and the first bit of $\widetilde{y}_m$ is $1$. In words, we are modelling $m-1$ successful (and identical) $0$-queries in the $Z$-basis, followed by a single successful $1$-query in the $X$-basis. The question now is: Given $t$ and $\widetilde{y}$, how many $\ket{\psi_z}\in(\complex^2)^{\otimes n}$ exist such that $(t,\widetilde{y},z)\in R$?

To answer this, observe that the token enforces the following set of rules. Fix any $i\in\set{1,\ldots, m}$, and let $\ket{\widetilde{y}_i(j)}$ and $\ket{\psi_z(j)}$ denote the $j$th qubits of $\widetilde{y}_i$ and $\psi_z$, respectively. Then we have rules (where $H$ denotes the $2\times 2$ Hadamard matrix, and $\overline{b}$ denotes the complement of bit $b$):
\begin{enumerate}
    \item If $t_i=0$, then $\forall j\in\set{1,\ldots,n}$, either $\ket{\psi_z(j)}=\ket{\widetilde{y}_i(j)}$ or $\ket{\psi_z(j)}\in\set{\ket{+},\ket{-}}$.
    \item If $t_i=1$, then $\forall j\in\set{1,\ldots,n}$, either $\ket{\psi_z(j)}=H\ket{\widetilde{y}_i(j)}$ or $\ket{\psi_z(j)}\in\set{\ket{0},\ket{1}}$.
    \item If $t_i=\overline{0}$, then $\exists j\in\set{1,\ldots,n}$ such that $\ket{\psi_z(j)}=\ket{\overline{\widetilde{y}_i(j)}}$.
    \item If $t_i=\overline{1}$, then $\exists j\in\set{1,\ldots,n}$ such that $\ket{\psi_z(j)}=H\ket{\overline{\widetilde{y}_i(j)}}$.
\end{enumerate}
Recall now that we set $t_1=0$ and $t_m=1$, \emph{i.e.}~the first query was a successful $Z$-basis query and the last query was a successful $X$-basis query. Applying rules $1$ and $2$ above thus yields that for all indices $k$, $\ket{\psi_z(k)}\in\set{\ket{\widetilde{y}_1(k)}, H\ket{\widetilde{y}_m(k)}}$. Moreover, since $\widetilde{y}_1=\widetilde{y}_2=\cdots = \widetilde{y}_{m-1}$, it follows that for all $k$, both assignments for $\ket{\psi_z(k)}$ are consistent with $t$. We conclude that the $(t,\widetilde{y})$-block of $Q_1$ has the following operator in register $\spa{X}_1$:
\begin{equation}\label{eqn:sigma}
    \sigma=\bigotimes_{k=1}^n\left(\ketbra{\widetilde{y}_1(k)}{\widetilde{y}_1(k)}+H\ketbra{\widetilde{y}_m(k)}{\widetilde{y}_m(k)}H\right).
\end{equation}
But for any $b,c\in\set{0,1}$, $\lmax(\ketbra{b}{b}+H\ketbra{c}{c}H)=1+\frac{1}{\sqrt{2}}$ (see, e.g.,~\cite{MVW13}). Thus, $\lmax(\sigma)=(1+\frac{1}{\sqrt{2}})^n$, as claimed.

\paragraph{Upper bound.} We next show a matching upper bound of $\lmax(Q_1)\leq \frac{2}{4^n}(1+\frac{1}{\sqrt{2}})^n$ among all $(t,\widetilde{y})$-blocks. For any $t\in T$, there exist indices $i\neq j$ such that $\widetilde{y}_i$ and $\widetilde{y}_j$ are a successful $0$- and $1$-query, respectively. Without loss of generality, assume $i=1$ and $j=m$. Then, as in the previous case, rules $1$ and $2$ imply that:
\begin{equation}\label{eqn:allk}
    \forall k\in\set{1,\ldots,n}, \quad\ket{\psi_z(k)}\in\set{\ket{\widetilde{y}_1(k)}, H\ket{\widetilde{y}_m(k)}}.
 \end{equation}
 Consider now any $\widetilde{y}_i$ for $1<i<m$, and suppose without loss of generality that $\widetilde{y}_i$ is a $0$-query, \emph{i.e.}~its first bit is set to $0$. There are two cases to analyze:
\begin{itemize}
    \item (Case 1: $t_i=0$) In this case, both query $1$ and query $i$ are successful $0$-queries; thus, they must agree on \emph{all} secret key bits which were encoded in the $Z$ basis. It follows from Rule 1 that for any bit $k$ on which $\widetilde{y}_1$ and $\widetilde{y}_i$ disagree, the secret key must have encoded bit $k$ in the $X$-basis. In other words, $\ket{\psi_z(k)}=H\ket{\widetilde{y}_m(k)}$ in Equation~(\ref{eqn:allk}) (\emph{i.e.}~one of the two possibilities is eliminated). (If $\widetilde{y}_1=\widetilde{y}_i$, on the other hand, no such additional constraint exists.)

    \item (Case 2: $t_i=\overline{0}$) In this case, query $i$ is an unsuccessful $0$-query. By Rule $3$, there exists a bit $k$ on which $\widetilde{y}_1$ and $\widetilde{y}_k$ disagree, and whose corresponding secret key bit was encoded in the $Z$ basis. In other words, $\ket{\psi_z(k)}=\ket{\widetilde{y}_1(k)}$ in Equation~(\ref{eqn:allk}) (\emph{i.e.},~one of the two possibilities is eliminated).
\end{itemize}
The analysis for $\widetilde{y}_i$ being a $1$-query is analogous. We conclude that for any $(t,\widetilde{y})$-block of $Q_1$, the operator in register $\spa{X}_1$ is of the form of $\sigma$ from Equation~(\ref{eqn:sigma}), except that the some of the indices $k$ may contain an operator consisting of only $1$ summand (e.g. $\ketbra{\widetilde{y}_1(k)}{\widetilde{y}_1(k)}$ instead of $\ketbra{\widetilde{y}_1(k)}{\widetilde{y}_1(k)}+H\ketbra{\widetilde{y}_m(k)}{\widetilde{y}_m(k)}H$). Since the omitted summands are all positive semidefinite, however, we conclude the eigenvalue on any $(t,\widetilde{y})$-block is at most the eigenvalue of $\sigma$ from Equation~(\ref{eqn:sigma}), \emph{i.e.},~at most $\lmax(Q_1)\leq \frac{2}{4^n}(1+\frac{1}{\sqrt{2}})^n$, as claimed.
\end{proof}

We can now prove the main result of this section.

\begin{theorem}\label{thm:cheatingbound}
    The SDP $\Gamma$ has a feasible solution with $p
\in O(2^{2m-0.228n})$.
\end{theorem}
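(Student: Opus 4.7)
The plan is to exhibit and verify the feasible solution to $\Gamma$ sketched in Section~\ref{sscn:securityintuition}. I would first set
\[
R_{m+1} \;=\; \frac{p}{\abs{T}}\sum_{t\in T}\bigotimes_{i=1}^{m}\ketbra{t_i s_{t_i}}{t_i s_{t_i}}_{\spa{X}_{i+1}}\otimes I_{\spa{Y}_{1\cdots m}}\otimes\frac{I_{\spa{X}_1}}{2^n},
\]
with $p$ to be pinned down at the end, and then define $P_k,R_k$ for $k\leq m$ backwards through the recursive constraints via $P_{m+1}=R_{m+1}$ (since $\spa{Y}_{m+1}=\complex$), $P_k := \trace_{\spa{Y}_k}(R_k)/\dim(\spa{Y}_k)$, and $R_{k-1}:=\trace_{\spa{X}_k}(P_k)$. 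Positivity is preserved throughout since partial traces map the PSD cone into itself, and the crucial tensor-product shape $R_k = P_k\otimes I_{\spa{Y}_k}$ is inherited at every level because the identity factor $I_{\spa{Y}_{1\cdots m}}$ built into $R_{m+1}$ is untouched by the partial traces over $\spa{X}$ registers. A short trace calculation (each $\ketbra{t_is_{t_i}}{t_is_{t_i}}$ traces to $1$; each trace-out of $I_{\spa{Y}_k}$ contributes $\dim\spa{Y}_k$ which exactly cancels the normalization in $P_k$; and $\trace(I_{\spa{X}_1}/2^n)=1$) then delivers $R_0 = p$ as required by constraint~(\ref{eqn:15}).

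The main analytical step is to determine the smallest $p$ making $Q_1\preceq R_{m+1}$. Both operators are block-diagonal in the standard basis on $\spa{X}_{2},\ldots,\spa{X}_{m+1}$ and $\spa{Y}_1,\ldots,\spa{Y}_m$, and the support of $Q_1$ (the $(t,\widetilde{y})$-blocks with $t\in T$) coincides with that of $R_{m+1}$, so the PSD inequality decouples across blocks. On the $(t,\widetilde{y})$-block, $R_{m+1}$ restricts on $\spa{X}_1$ to $\frac{p}{\abs{T}\,2^n}\,I_{\spa{X}_1}$, while $Q_1$ restricts to $\frac{1}{4^n}\,\sigma_{t,\widetilde{y}}$ with $\sigma_{t,\widetilde{y}}:=\sum_{z:(\widetilde{y},z)\in Y_t}\ketbra{\psi_z}{\psi_z}$. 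Hence $Q_1\preceq R_{m+1}$ reduces to the scalar condition
\[
p \;\geq\; \frac{\abs{T}\,2^n}{4^n}\,\max_{t,\widetilde{y}}\lmax(\sigma_{t,\widetilde{y}}).
\]

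To finish, I would invoke Lemma~\ref{l:eval}, whose block analysis already yields $\max_{t,\widetilde{y}}\lmax(\sigma_{t,\widetilde{y}}) \leq 2(1+1/\sqrt{2})^n$, combined with the trivial counting bound $\abs{T}\leq \abs{\Sigma}^m = 4^m$. Substituting gives
\[
p \;\leq\; 2\cdot 2^{2m}\cdot\bigl((1+1/\sqrt{2})/2\bigr)^n,
\]
and since $\log_2\bigl((1+1/\sqrt{2})/2\bigr) < -0.228$, this yields $p \in O(2^{2m-0.228n})$ as claimed. The only nontrivial ingredient is the per-block eigenvalue bound already supplied by Lemma~\ref{l:eval}; the main obstacle in executing the plan is tracking the tensor-product shape of the recursively defined $R_k$ at every level, which is precisely why the identity factor on all $\spa{Y}$ registers was inserted in the definition of $R_{m+1}$.
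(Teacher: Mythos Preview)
Your proposal is correct and follows essentially the same approach as the paper: the same block-diagonal $R_{m+1}$ (with the identity on the $\spa{Y}$ registers and $\spa{X}_1$), the same reduction of $Q_1\preceq R_{m+1}$ to a per-block eigenvalue comparison on $\spa{X}_1$, and the same appeal to Lemma~\ref{l:eval} combined with the bound $\abs{T}\leq 4^m$. The only cosmetic differences are that the paper computes $\abs{T}=4^m-2\cdot 3^m+2^m$ exactly (irrelevant for the big-$O$) and phrases the constraint as $Q_1\preceq p R_{m+1}$ with $R_0=1$ rather than absorbing $p$ into $R_{m+1}$ as you do.
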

\begin{proof}
    As $Q_1$ in Equation~(\ref{eqn:qa}) is block-diagonal in registers $\spa{X}_2,\ldots,\spa{X}_{m+1}$, consider solution (for $T$ from Equation~(\ref{eqn:S}))
\begin{equation}
    R_{m+1}=\frac{1}{\abs{T}}\sum_{t\in T}\ketbra{t_m s_{t_m}}{t_ms_{t_m}}_{\spa{X}_{m+1}}\otimes\cdots\otimes\ketbra{t_1s_{t_1}}{t_1s_{t_1}}_{\spa{X}_{2}}\otimes I_{\spa{Y}_{1,\ldots, m}}  \otimes  \frac{1}{2^n}I_{\spa{X}_1}.
\end{equation}
(Aside: Recall that $\spa{X}_1$ is an $n$-qubit register above, hence the $2^n$ renormalization factor.) Note that
\begin{align}
  \abs{\Sigma^m} &= 4^m \\
  \set{t\in \Sigma^m\mid t\text{ does not contain a }0} &= 3^m \\
  \set{t\in \Sigma^m\mid t\text{ does not contain a }1} &= 3^m \\
  \set{t\in \Sigma^m\mid t\text{ does not contain a }0\text{ or a }1} &= 2^m.
\end{align}
Thus, by the inclusion-exclusion principle, $\abs{T}=4^m-2\cdot3^m+2^m$.

In order for $R_{m+1}$ to be feasible, we must pick $p$ such that $Q_1\preceq p R_{m+1}$. Since $Q_1$ is block-diagonal on registers $\spa{X}_2\cdots \spa{X}_{m+1}$, it suffices to identify its block with the largest eigenvalue. In fact, each corresponding block for $R_{m+1}$ has eigenvalue $(\abs{T}2^n)^{-1}$. Thus, we must choose $p$ such that
\begin{equation}
    \lmax(Q_1)\leq \frac{p}{\abs{T}2^n},
\end{equation}
or equivalently, due to the $4^{-n}$ factor in $Q_1$,
\begin{equation}
    p\geq \frac{\abs{T}}{2^n}\lmax\left(4^nQ_1\right).
\end{equation}
By Lemma~\ref{l:eval}, $\lmax(Q_1)=\frac{2}{4^n}\left(1+\frac{1}{\sqrt{2}}\right)^n$. Thus, we can set
\begin{equation}
    p= \frac{\abs{T}}{2^{n-1}}\left(1+\frac{1}{\sqrt{2}}\right)^n\approx \abs{T}\cdot 2^{(-0.228)n+1},
\end{equation}
and since $\abs{T}\in\Theta(4^m)$, the cheating probability satisfies $p
\in O(2^{2m-0.228n})$.
\end{proof}

\section{Simplifying the Gutoski-Watrous SDP and its dual}\label{app:cleaner}

\subsection{Streamlining the primal and dual}\label{sscn:stream}

We now simplify the general SDP (Equation~(\ref{eqn:11}) from the Gutoski-Watrous (GW) framework (note this simplification is independent of our particular application of the framework for OTMs, i.e. independent of $Q_1$), and derive its dual SDP. For convenience, we begin by reproducing the following definitions, including the SDP $\Gamma$ of Equation~(\ref{eqn:11}).
		\begin{align}
			\text{min:}\quad & p\label{eqn:111}\\
  		\text{subject to:}\quad & Q_1 \preceq R_{m+1}\label{eqn:121}\\
  		& R_k= P_k\otimes I_{\spa{Y}_k}\qquad\qquad\qquad &\text{for }1\leq k\leq m+1\label{eqn:131}\\
        & \trace_{\spa{X}_k}(P_k)= R_{k-1} &\text{for }1\leq k\leq m+1\label{eqn:141}\\
        & R_0 = p\label{eqn:151}\\
        & R_k\in \operatorname{Pos}(\spa{Y}_{1,\ldots, k}\otimes \spa{X}_{1,\ldots, k})&\text{for }1\leq k\leq m+1\label{eqn:161}\\
        & P_k\in \operatorname{Pos}(\spa{Y}_{1,\ldots, k-1}\otimes \spa{X}_{1,\ldots, k})&\text{for }1\leq k\leq m+1\label{eqn:171}
          	\end{align}	
\begin{align}
    (t,\widetilde{y},z)&\in R \text{ if and only if }\left[ t\in T \text{ and } (\widetilde{y},z)\in Y_t\right]\\
  Q_1&=\frac{1}{4^{n}}\sum_{t\in T}\ketbra{t_{m}s_{t_{m}}}{t_{m}s_{t_{m}}}_{\spa{X}_{m+1}}\otimes\cdots\otimes\ketbra{t_1s_{t_1}}{t_1s_{t_1}}_{\spa{X}_{2}}\otimes\\
  &\hspace{4mm}\left(\sum_{(\widetilde{y},z)\in Y_t}\ketbra{\widetilde{y}_m}{\widetilde{y}_m}_{\spa{Y}_m}\otimes \cdots \otimes
\ketbra{\widetilde{y}_1}{\widetilde{y}_1}_{\spa{Y}_{1}}\otimes \ketbra{\psi_z}{\psi_z}_{\spa{X}_1}\right).
\end{align}

\paragraph{Guiding example: $m=3$.} We explicitly run through the construction for the first non-trivial case, $m=3$ queries. The construction then generalizes straightforwardly to all $m\geq 2$. To begin, using the fact that $R_4=P_4$ (since $\spa{Y}_4= \complex$, due to the fact that we assumed message $m+1$ from the user to the token is empty), $\Gamma$ can be written:
       		\begin{align}
			\text{min:}\quad & \trace(P_1)\\
  		\text{subject to:}\quad & Q_1 -  P_4\preceq 0\label{eqn:00}\\
  		& -P_3\otimes I_{\spa{Y}_3}+\trace_{\spa{X}_4}(P_4)\preceq 0\label{eqn:01}\\
        & -P_2\otimes I_{\spa{Y}_2}+\trace_{\spa{X}_3}(P_3)\preceq 0\label{eqn:02}\\
        & -P_1\otimes I_{\spa{Y}_1}+\trace_{\spa{X}_2}(P_2)\preceq 0\label{eqn:03}
          	\end{align}	
          Above, we relaxed the equalities to inequalities\footnote{This is without loss of generality, as we briefly justify. Clearly, any feasible solution for equality constraints is also feasible for inequality constraints. For the converse direction, suppose a feasible solution for the inequality constraints satisfies $P_i\otimes I_{\spa{Y}_{i+1}}-\trace_{\spa{X}_{i+1}}(P_{i+1})=\Lambda_i\succeq 0$ for non-zero $\Lambda_i$; pick the smallest such $i$ satisfying this condition. Then, redefining $P_{i+1}':= P_{i+1}+ \ketbra{\phi}{\phi}_{\spa{X}_{i+1}}\otimes \Lambda_i\succeq 0$ for arbitrary unit vector $\ket{\phi}$ satisfies $P_i\otimes I_{\spa{Y}_{i+1}}-\trace_{\spa{X}_{i+1}}(P'_{i+1})=0$, as desired. Note we can recurse this trick now from constraint $i$ to $i+1$, since if $P_{i+1}\otimes I_{\spa{Y}_{i+2}}-\trace_{\spa{X}_{i+2}}(P_{i+2})\succeq 0$, then $P'_{i+1}\otimes I_{\spa{Y}_{i+2}}-\trace_{\spa{X}_{i+2}}(P_{i+2})\succeq 0$ (similarly for constraint $Q_1\preceq P_{m+1}$). Thus, we obtain a new feasible solution for which all inequality constraints (except $Q_1\preceq P_{m+1}$) hold with equality. Moreover, this process does not alter the assignment for $P_1$ (i.e. we never define $P'_1$); thus the objective function value remains unchanged.}, which intuitively makes it easier to guess feasible solutions to $\Gamma$. We also omitted the positive semidefinite constraints on all $P_i$, since\footnote{In our particular setting, it is clear that $Q_1\succeq 0$. However, more generally in the GW framework, the operators $\set{Q_a}$ defining a measuring co-strategy all satisfy $Q_a\succeq 0$.} $P_4\succeq Q_1\succeq 0$ implies $P_i\succeq 0$ for all $i$. We now follow the standard Lagrange approach for deriving the dual SDP (see, e.g.~\cite{BV04}). Labelling equations~\eqref{eqn:00},\eqref{eqn:01},\eqref{eqn:02},\eqref{eqn:03} with dual variables $Y_1,\ldots ,Y_4$, respectively, the primal variables in the Lagrange dual function can be isolated as follows:
    \begin{center}
\begin{tabular}{c|c}
  Primal variable & Factor \\
  \hline
  $P_4$ & $-Y_1+Y_2\otimes I_{\X_4}$ \\
  $P_3$ & $-\trace_{\Y_3}(Y_2)+Y_3\otimes I_{\X_3}$ \\
  $P_2$ & $-\trace_{\Y_2}(Y_3)+Y_4\otimes I_{\X_2}$ \\
  $P_1$ & $I_{\X_1}-\trace_{\Y_1}(Y_4)$
\end{tabular}
\end{center}
For clarity and as an example, this says the term $P_4(-Y_1+Y_2\otimes I_{\X_4})$ appears in the dual function. This yields dual SDP:
          		\begin{align}
			\text{max:}\quad & \trace(Y_1Q_1)\\
  		\text{subject to:}\quad & -Y_1+Y_2\otimes I_{\X_4}=0\label{eqn:obv}\\
  		& -\trace_{\Y_3}(Y_2)+Y_3\otimes I_{\X_3}= 0\\
        & -\trace_{\Y_2}(Y_3)+Y_4\otimes I_{\X_2}= 0\\
        & I_{\X_1}-\trace_{\Y_1}(Y_4)= 0\\
        & Y_1,Y_2,Y_3,Y_4\succeq 0
          	\end{align}	
Now we make the following simplifications: (1) Replace $Y_1$ with $Y_2\otimes I_{\X_4}$ (follows from Equation~(\ref{eqn:obv})), (2) drop the constraints $Y_3,Y_4\succeq 0$ (since they are implied by $Y_2\succeq 0$), and (3) relax the equalities to inequalities (which follows similar to the argument for the primal, except here we also require that we are maximizing with respect to $Y_2$ below). Hence, we obtain:
          \begin{align}
			\text{max:}\quad & \trace(Y_2\trace_{\X_4}(Q_1))\\
  		\text{subject to:}& -\trace_{\Y_3}(Y_2)+Y_3\otimes I_{\X_3}\succeq  0\\
        & -\trace_{\Y_2}(Y_3)+Y_4\otimes I_{\X_2}\succeq  0\\
        & I_{\X_1}-\trace_{\Y_1}(Y_4)\succeq  0\\
        & Y_2\succeq 0
          	\end{align}	
          Note the $Y_2\succeq 0$ \emph{cannot} be removed. Intuitively, this is because the constraint $I_{\X_1}-\trace_{\Y_1}(Y_4)\succeq  0$ alone does not imply $Y_4\succeq 0$. Rather, it is the constraint $Y_2\succeq 0$ which forces $Y_4\succeq 0$ here. Indeed, a sanity check in CVX for Matlab reveals removing the $Y_2\succeq 0$ incorrectly yields an unbounded SDP.

          We now repeat the process by taking the dual of the dual to arrive at a simplified primal as follows. (Note that the inequalities above now go in the other direction, since we are starting from the dual SDP.) Labelling the constraints above $R_1,\ldots, R_4$, we have factor table:
    \begin{center}
\begin{tabular}{c|c}
  Dual variable & Factor \\
  \hline
  $Y_2$ & $\trace_{\X_4}(Q_1)-R_1\otimes I_{\Y_3}+R_4$ \\
  $Y_3$ & $\trace_{\X_3}(R_1)-R_2\otimes I_{\Y_2}$ \\
  $Y_4$ & $\trace_{\X_2}(R_2)-R_3\otimes I_{\Y_1}$
\end{tabular}
\end{center}
This yields primal SDP (after omitting the redundant constraints $R_1,R_2,R_3,R_4\succeq 0$):%
           		\begin{align}
			\text{min:}\quad & \trace(R_3)\\
  		\text{subject to:}&
  		\trace_{\X_4}(Q_1)-R_1\otimes I_{\Y_3}\preceq 0\\
        & \trace_{\X_3}(R_1)-R_2\otimes I_{\Y_2}\preceq 0\\
        & \trace_{\X_2}(R_2)-R_3\otimes I_{\Y_1}\preceq0
          	\end{align}	
          Taking the dual of the primal now yields the previous dual; so it seems we are done. Relabelling variables for the primal and dual, we obtain the final $m=3$ primal and dual SDPs, respectively:\\
          \begin{center}
          \begin{minipage}{2.5in}
           		\begin{align*}
			\text{min:}\quad & \trace(P_1)\\
  		\text{subject to:}\quad&
  		\trace_{\X_4}(Q_1)-P_3\otimes I_{\Y_3}\preceq 0\\
        & \trace_{\X_3}(P_3)-P_2\otimes I_{\Y_2}\preceq 0\\
        & \trace_{\X_2}(P_2)-P_1\otimes I_{\Y_1}\preceq0
          	\end{align*}	
          \end{minipage}
          \hspace{5mm}
          \begin{minipage}{2.5in}
          \begin{align*}
			\text{max:}\quad & \langle Y_1,\trace_{\X_4}(Q_1)\rangle\\
  		\text{subject to:}\quad& -\trace_{\Y_3}(Y_1)+Y_2\otimes I_{\X_3}\succeq  0\\
        & -\trace_{\Y_2}(Y_2)+Y_3\otimes I_{\X_2}\succeq  0\\
        & -\trace_{\Y_1}(Y_3)+I_{\X_1}\succeq  0\\
        & Y_1\succeq 0
          	\end{align*}	
          \end{minipage}
          \end{center}

\paragraph{General case.} The derivation above straightforwardly extends to the case of arbitrary $m\geq 2$, yielding primal and dual SDPs:\\
\begin{center}
{\underline{Primal SDP}}
\end{center}
\vspace{-3mm}
\begin{align}
		\text{min:}\quad & \trace(P_1)&\label{eqn:30}\\
  		\text{s.t.}\quad&\trace_{\X_{m+1}}(Q_1)-P_m\otimes I_{\Y_m}\preceq 0&\label{eqn:31}\\
        & \trace_{\X_{i+1}}(P_{i+1})-P_i\otimes I_{\Y_i}\preceq 0  &\forall i\in\set{1,\ldots, m-1}\label{eqn:32}\\
\end{align}	
\begin{center}
{\underline{Dual SDP}}
\end{center}
\vspace{-3mm}
\begin{align}
		\text{max:}\quad & \langle Y_1,\trace_{\X_{m+1}}(Q_1)\rangle\label{eqn:40}\\
  		\text{s.t.}\quad& -\trace_{\Y_{m-i+1}}(Y_i)+Y_{i+1}\otimes I_{\X_{m-i+1}}\succeq  0 &\forall i\in\set{1,\ldots ,m}\label{eqn:41}\\
        & Y_1\succeq 0\label{eqn:43}
\end{align}	
\noindent where note for uniformity in stating the dual constraints, we define $Y_{m+1}:=1$ in the dual SDP.

\subsection{An approximately optimal dual solution?}\label{sscn:approx}

We now give a simple feasible solution $Y:=\set{Y_i}$ to the dual SDP, whose objective function value appears to scale roughly as one might expect, \emph{if} security were to hold for our OTM construction. While we can explicitly prove $Y$ is \emph{not} dual optimal (thus, it only yields a lower bound on the best cheating probability), we conjecture it is roughly optimal up to multiplicative factors (stated precisely in Conjecture~\ref{conj:only}), which would in turn imply security against subexponentially many queries to the token, as desired.

\paragraph{A candidate dual solution.} Recall that each $\spa{Y}_i$ register encodes a message from the receiver to the token, consisting of $n+1$ qubits. Let $d:=2^{n+1}$ denote the dimension of this space. Define solution $Y:=\set{Y_1,\ldots,Y_m}$ via:
\begin{equation}\label{eqn:dual}
    Y_i = \frac{1}{d^{m-i+1}}I_{\spa{Y}_{1\cdots m},\spa{X}_{1\cdots m}}.
\end{equation}
Note that $Y_1\succeq 0$ trivially, and that Equation~(\ref{eqn:41}) holds with equality for all $i\in\set{1,\ldots, m}$. Thus, $Y$ is a dual feasible solution. Moreover, it obtains objective function value
\begin{equation}\label{eqn:prob}
    \beta:=\frac{\trace(Q_1)}{d^m}=\frac{\abs{R}}{4^nd^m},
\end{equation}
where recall we defined relation $R$ in Equation~(\ref{eqn:R}) via
    \begin{equation}
        (t,\widetilde{y},z)\in R \text{ if and only if }\left[ t\in T \text{ and } (\widetilde{y},z)\in Y_t\right].
    \end{equation}

\paragraph{The cardinality of $R$.} To analyze $\beta$, we require an expression for $\abs{R}$, given as follows.

\begin{lemma}\label{l:R}
    \begin{equation}
        \abs{R}=\left(2^{m(n+1)+n}\right)\sum_{\alpha=0}^n\binom{n}{\alpha}\left[1-\left(1-\frac{1}{2^{\alpha+1}}\right)^m-\left(1-\frac{1}{2^{n-\alpha+1}}\right)^m+\left(1-\frac{1}{2^{\alpha+1}}-\frac{1}{2^{n-\alpha+1}}\right)^m\right]
    \end{equation}
\end{lemma}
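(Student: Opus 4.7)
The plan is to count $\abs{R}$ directly by partitioning the sum over secret keys $z\in\set{0,1}^{2n}$ according to the number $\alpha$ of qubits encoded in the rectilinear ($+$) basis. The key observation is that, for a fixed $z$, the probability (over a uniformly random query string $\widetilde{y}\in\set{0,1}^{n+1}$) that the token produces any given response symbol in $\Sigma=\set{0,1,\overline{0},\overline{1}}$ depends only on $\alpha$, not on the specific choice of basis positions or bit values.

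Concretely, the first step is to compute, for a fixed $z$ with exactly $\alpha$ positions encoded in the $+$ basis (and hence $n-\alpha$ in the $\times$ basis), the per-query response probabilities. Writing $\widetilde{y}=by$ with $b\in\set{0,1}$ and $y\in\set{0,1}^n$, a uniformly random $\widetilde{y}$ yields $b=0$ with probability $1/2$, and conditioned on $b=0$, the $\alpha$ rectilinear positions of $y$ all agree with the corresponding bits of $x$ with probability $2^{-\alpha}$. Hence the probability of response $0$ (successful $0$-query) is $2^{-(\alpha+1)}$, and symmetrically the probability of response $1$ is $2^{-(n-\alpha+1)}$. Since the responses $\overline{0},\overline{1}$ simply account for the remaining cases, the probability a single query produces neither a $0$ nor a $1$ is $1-2^{-(\alpha+1)}-2^{-(n-\alpha+1)}$.

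Second, I would use inclusion-exclusion on the $m$ queries. Since each $\widetilde{y}_i$ is independent and the token's response to $\widetilde{y}_i$ depends only on $\widetilde{y}_i$ and $z$ (by Remark~\ref{rem}), the number of query sequences $\widetilde{y}\in\set{0,1}^{m(n+1)}$ such that the induced response string $t$ lies in $T$ (i.e.\ contains at least one $0$ \emph{and} at least one $1$) equals
\[
2^{m(n+1)}\left[1-\Paren{1-\frac{1}{2^{\alpha+1}}}^{\!m}-\Paren{1-\frac{1}{2^{n-\alpha+1}}}^{\!m}+\Paren{1-\frac{1}{2^{\alpha+1}}-\frac{1}{2^{n-\alpha+1}}}^{\!m}\right].
\]
This bracketed expression depends on $z$ only through~$\alpha$.

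Finally, I would sum over $z$. For each $\alpha\in\set{0,\ldots,n}$, there are $\binom{n}{\alpha}$ choices for which of the $n$ positions are encoded in the $+$ basis (i.e.\ assignments of the $\theta$-bits of $z$), and $2^n$ independent choices for the value bits ($x$-bits) of $z$. Pulling out the common factor $2^n\cdot 2^{m(n+1)}=2^{m(n+1)+n}$ and summing yields the claimed formula for $\abs{R}$. There is no real obstacle here; the only thing to be slightly careful about is the bookkeeping of the encoding of $z$ into basis-bits and value-bits, which is what gives rise to the $\binom{n}{\alpha}$ factor as opposed to $\binom{2n}{\alpha}$ or the like.
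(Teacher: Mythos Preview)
Your proposal is correct and follows essentially the same approach as the paper: partition by the number $\alpha$ of rectilinear-basis positions in $z$, compute per-query response counts, apply inclusion--exclusion over the $m$ independent queries, and sum over the $\binom{n}{\alpha}2^n$ keys with that $\alpha$. The only difference is presentational: the paper first writes out the inclusion--exclusion as explicit binomial sums distributing queries among the four response types and then collapses them via the binomial theorem, whereas you go directly to the probabilistic form $(1-p_0)^m$, $(1-p_1)^m$, $(1-p_0-p_1)^m$; your route is slightly more streamlined but arrives at the same computation.
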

\begin{proof}
    Recall again from Equation~(\ref{eqn:R}) that $R$ is defined via
    \begin{equation}
        (t,\widetilde{y},z)\in R \text{ if and only if }\left[ t\in T \text{ and } (\widetilde{y},z)\in Y_t\right],
    \end{equation}
    where $T$ is the set of successful query responses. For any quantum key $\ket{\psi_z}$, let $\alpha$ denote the number of qubits in $\ket{\psi_z}$ which are encoded in the $Z$ basis. Note that fixing $\alpha$ partitions $R$ into $n+1$ sets; let $R_\alpha$ denote the set in this partition corresponding to $\alpha$ $Z$-bits. We analyze each $R_\alpha$ independently first.

    \paragraph{Computing $\abs{R_\alpha}$ for fixed $\alpha$.} Fix any $0\leq \alpha\leq n$, and any secret key $z\in\set{0,1}^{2n}$ with precisely $\alpha$ bits in the $Z$-basis; denote the resulting subset of $R_{\alpha}$ by $R_{\alpha,z}$. We now ask: Conditioned on secret key $z$ and token response $t\in\Sigma=\set{0,1,\overline{0},\overline{1}}$, how many query strings $\widetilde{y}$ are consistent with $t$?
    \begin{itemize}
        \item Case 1: $t=0$. Since we have to get precisely all $\alpha$ bits correctly (i.e. those in the $Z$ basis), and we can get up to $n-\alpha$ bits in the $X$ basis incorrect, we have $2^{n-\alpha}$ choices for $\widetilde{y}$. (Note that the choice bit for $\widetilde{y}$ is forced to be $0$ since $t=0$.)
        \item Case 2: $t=1$. This is analogous to $t=0$, except now we can get the $\alpha$ $Z$-bits incorrect and the $X$-bits must be correct. Thus, there are $2^\alpha$ strings $\widetilde{y}$.
        \item Case 3: $t=\overline{0}$. Since the $X$ bits can be anything, and there is precisely one correct setting to the $Z$ basis bits, we have $2^{n-\alpha}(2^{\alpha}-1)=2^n-2^{n-\alpha}$ choices for $\widetilde{y}$.
        \item Case 4: $t=\overline{1}$. Analogous to the $t=\overline{0}$ case, we have $2^\alpha(2^{n-\alpha}-1)=2^n-2^\alpha$ strings $\widetilde{y}$.
    \end{itemize}
    As a sanity check, note that summing the four values obtained above yields precisely $d=2^{n+1}$ strings $\widetilde{y}$, which is the dimension of each register $\spa{Y}_i$, as desired.

    To now obtain an expression for $\abs{R_{\alpha,z}}$, recall that any set of $m$ queries is successful if it contains at least one successful $0$-query and one successful $1$-query. Thus, using the inclusion-exclusion formula (intuition to follow):
    \begin{align}
        \abs{R_{\alpha,z}}=2^{m(n+1)}&-\sum_{a=0}^m\binom{m}{a}(2^{\alpha})^a\left[\sum_{b=0}^{m-a}\binom{m-a}{b}\left(2^n-2^{n-\alpha}\right)^b\left(2^n-2^\alpha\right)^{m-a-b}\right]\\
        &-\sum_{a=0}^m\binom{m}{a}(2^{n-\alpha})^a\left[\sum_{b=0}^{m-a}\binom{m-a}{b}\left(2^n-2^{n-\alpha}\right)^b\left(2^n-2^\alpha\right)^{m-a-b}\right]\\
        &+\sum_{b=0}^m\binom{m}{b}\left(2^n-2^{n-\alpha}\right)^b\left(2^n-2^\alpha\right)^{m-b}.
    \end{align}
    Above, the first term is the set of all query strings on $m$ messages. The negative terms count the number of query strings with no successful $0$-queries and no successful $1$-queries, respectively. For the former, for example, we first choose $a$ positions in which to put the successful $1$-queries, and then distribute $\overline{0}$- and $\overline{1}$-queries among the remaining $m-a$ positions. The final, positive, term, counts the number of query strings with neither a successful  $0$- nor a successful $1$-query.  Inverting the binomial expansion $(a+b)^k=\sum_{l=0}^k\binom{k}{l}a^lb^{k-l}$, we can next write:
    \begin{align}
        \abs{R_{\alpha,z}}=2^{m(n+1)}&-\sum_{a=0}^m\binom{m}{a}(2^\alpha)^a\left[(2^n-2^{n-\alpha})+(2^n-2^\alpha)\right]^{m-a}\\
        &-\sum_{a=0}^m\binom{m}{a}(2^{n-\alpha})^a\left[(2^n-2^{n-\alpha})+(2^n-2^\alpha)\right]^{m-a}\\
        &+\left[(2^n-2^{n-\alpha})+(2^n-2^\alpha)\right]^m.
    \end{align}
    Applying the binomial expansion again yields
    \begin{align}
        \abs{R_{\alpha,z}}=2^{m(n+1)}&-\left[2^\alpha+ (2^n-2^{n-\alpha})+(2^n-2^\alpha)\right]^m\\
        &-\left[2^{n-\alpha}+(2^n-2^{n-\alpha})+(2^n-2^\alpha)\right]^m\\
        &+\left[(2^n-2^{n-\alpha})+(2^n-2^\alpha)\right]^m.
    \end{align}
    Collecting like terms and factoring out $2^{m(n+1)}$ yields
    \begin{equation}
        \abs{R_{\alpha,z}}=\left(2^{m(n+1)}\right)\left[1-\left(1-\frac{1}{2^{\alpha+1}}\right)^m-\left(1-\frac{1}{2^{n-\alpha+1}}\right)^m+\left(1-\frac{1}{2^{\alpha+1}}-\frac{1}{2^{n-\alpha+1}}\right)^m\right].
    \end{equation}
    Recall this was for any fixed secret key $z$. But for any fixed $0\leq \alpha\leq n$, there are precisely $\binom{n}{\alpha}2^\alpha2^{n-\alpha}=\binom{n}{\alpha}2^n$ choices of $z$ with $\alpha$ qubits encoded in the $Z$-basis. Thus,
    \begin{equation}
        \abs{R_{\alpha}}=\binom{n}{\alpha}\left(2^{m(n+1)+n}\right)\left[1-\left(1-\frac{1}{2^{\alpha+1}}\right)^m-\left(1-\frac{1}{2^{n-\alpha+1}}\right)^m+\left(1-\frac{1}{2^{\alpha+1}}-\frac{1}{2^{n-\alpha+1}}\right)^m\right].
    \end{equation}
    \paragraph{The final expression.} The claim follows since the sets $R_{\alpha}$ partition $R$, and so $\abs{R}=\sum_{\alpha=0}^n\abs{R_\alpha}$.
\end{proof}

\paragraph{Using Lemma~\ref{l:R} to heuristically bound $\beta$.} Recall that our goal is to understand the dual value $\beta$ from Equation~(\ref{eqn:prob}) obtained by our dual solution, which in turn gives us a lower bound on the optimal cheating probability. So let us get a sense of how $\abs{R}$ might scale asymptotically by deriving a heuristic approximation. To begin, applying the H\"{o}lder inequality to $\abs{R}$ in Lemma~\ref{l:R} yields
\begin{equation}
    \abs{R}\leq\left(2^{m(n+1)+2n}\right)\cdot\max_{\alpha}\left[1-\left(1-\frac{1}{2^{\alpha+1}}\right)^m-\left(1-\frac{1}{2^{n-\alpha+1}}\right)^m+\left(1-\frac{1}{2^{\alpha+1}}-\frac{1}{2^{n-\alpha+1}}\right)^m\right]
\end{equation}
We shall assume\footnote{This assumption appears to hold in numerical calculations over various values of $\alpha$.} the maximum is attained for $\alpha=n/2$. For this choice of $\alpha$, recalling that for large $x$,
\begin{equation}
    \left(1-\frac{1}{x}\right)^m\approx{e^{-\frac{m}{x}}},
\end{equation}
in the large $n$ limit the term in the square brackets above is approximately
$
 \left[1-2e^{-\frac{m}{2^{n/2+1}}}+e^{-\frac{m}{2^{n/2}}}\right].
$
Hence, we may bound
 \begin{align}
    \abs{R}&\leq\left(2^{m(n+1)+2n}\right)\left[1-2e^{-\frac{m}{2^{n/2+1}}}+e^{-\frac{m}{2^{n/2}}}\right]\\
    &\leq\left(2^{m(n+1)}\right)4^n\left[1-e^{-\frac{m}{2^{n/2}}}\right]\\
    &\approx d^m4^n\frac{m}{2^{n/2}},
\end{align}
for $m\ll n$, and where in the last line we used $d=2^{n+1}$. Plugging this into Equation~(\ref{eqn:prob}), we get precisely the type of behavior we want:
 \begin{equation}\label{eqn:approx}
    \beta=\frac{\abs{R}}{4^nd^m}\lesssim \frac{m}{2^{n/2}}.
 \end{equation}
 Thus, for polynomial $m$, the objective function value obtained by our dual solution from Equation~(\ref{eqn:dual})is exponentially small in the number of key bits, $n$ (under the heuristic approximations made in this derivation).

\paragraph{The dual solution is not optimal.} Naturally, this raises the question of whether our dual solution $Y$ from Equation~(\ref{eqn:dual}) is optimal. If it were, then a matching primal solution can in principle be found (it is easy to see that Slater's constraint qualification holds for the primal and dual, and so strong duality holds), and thus the optimal cheating probability would be approximately that given in Equation~(\ref{eqn:approx}).

Unfortunately, $Y$ is provably not dual optimal. Specifically, for $m=2$ and $n=1$ ($2$ queries, $1$ key bit), the primal optimal value is $\approx 0.85$ and for $m=3, n=1$, it is\footnote{With $m=3$ and $n=1$, it is trivial to break the OTM construction. Namely, first measure the quantum key $\ket{\psi_z}$ in the standard basis and make an honest $0$-query to extract the first secret bit. Then, since $\ket{\psi_z}$ is only $1$ qubit, we can use brute force to make two $1$-queries to the token with the only two possible candidate keys in the $X$-basis, $0$, or $1$.} $1$ (both numerical values calculated via CVX in Matlab). However, evaluating $\abs{R}$ in Lemma~\ref{l:R} for these values of $m$ and $n$ yields $\beta=0.25$ and $\beta=0.46875$, respectively.

Moreover, the heuristic and loose upper bound on the objective function value of $Y$ from Equation~(\ref{eqn:approx}) is asymptotically not optimal\footnote{We thank David Mestel for bringing this to our attention.}, since the naive cheating strategy in which an adversary independently measures each qubit of $\ket{\psi_z}$ in basis $\set{\cos\frac{\pi}{8}\ket{0}+\sin\frac{\pi}{8}\ket{1}, -\sin\frac{\pi}{8}\ket{0}+\cos\frac{\pi}{8}\ket{1}}$ successfully obtains classical key $x\in\set{0,1}^n$ (see Program~\ref{hardware-token-program}) with probability $(\cos^2\frac{\pi}{8})^n\approx 2^{-0.228n}$.

Thus, the dual solution $Y$ of Equation~(\ref{eqn:dual}) is not optimal. However, as the heuristic derivation of $\beta$ from Equation~(\ref{eqn:approx}) rather naturally led to the desired type of bound on the cheating probability, we conjecture that $Y$ is \emph{approximately} optimal, in the following sense.
\begin{conj}\label{conj:only}
    The optimal values for the primal and dual SDPs of Section~\ref{sscn:stream} are, up to multiplicative scaling by some function $f(m,n)\in O(m^c2^{(1/2-\epsilon)n})$ for constants $c>0$ and $0<\epsilon<1/2$, equal to $\beta=\frac{\abs{R}}{4^nd^m}$.
\end{conj}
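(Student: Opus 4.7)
\emph{Proof Proposal.} The plan is to establish the conjecture by constructing a primal feasible solution $(P_1^*, \ldots, P_m^*)$ of the primal SDP in Equations~(\ref{eqn:30})--(\ref{eqn:32}) whose objective value $\trace(P_1^*)$ is at most $\poly(m) \cdot \beta$, where $\beta = \abs{R}/(4^n d^m)$ is from Equation~(\ref{eqn:prob}). Combined with the dual feasible $Y$ of Equation~(\ref{eqn:dual}) that already witnesses $\beta$ as a lower bound on the optimum, and noting that Slater's condition is readily verified for both SDPs (so that strong duality holds), this would pin down the common primal--dual optimum to $\Theta(\poly(m) \cdot \beta)$ and establish the conjecture.

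\smallskip\noindent The natural construction is iterative. Initialize $P_{m+1}^* := Q_1$ and, for $i = m, m-1, \ldots, 1$, choose a minimal $P_i^*$ satisfying $P_i^* \otimes I_{\spa{Y}_i} \succeq \trace_{\spa{X}_{i+1}}(P_{i+1}^*)$. The structural observation driving the analysis is that $Q_1$ in Equation~(\ref{eqn:qa}) is block-diagonal in the computational basis on every register $\spa{Y}_i$ as well as on $\spa{X}_2,\ldots,\spa{X}_{m+1}$, and this block structure is preserved by partial traces over $\spa{X}$-registers and by tensoring with $I_{\spa{Y}_i}$. Hence at each step the semidefinite constraint decomposes into one constraint per classical query string $\widetilde{y}_i\in\bool^{n+1}$, and one may take each block of $P_i^*$ to be the join of the blocks it must dominate, or a convenient surrogate such as their sum.

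\smallskip\noindent The central technical challenge is controlling the multiplicative blow-up of the trace across the $m$ iterations. A naive choice, such as scaling an identity operator to dominate the constraint, introduces an extra factor of $d = 2^{n+1}$ per round and yields a vacuous $d^m \beta$ bound. Two structural features of $Q_1$ should be exploited. First, \emph{exchangeability}: since the token is stateless, $Q_1$ is invariant under any joint permutation of the round labels, so one may work with a symmetrized $P_i^*$ and reduce the bookkeeping. Second, the sparse eigenstructure exposed in Lemma~\ref{l:eval}: each $\widetilde{y}$-block of $Q_1$ on $\spa{X}_1$ is a tensor product of rank-two operators with spectral radius $1+1/\sqrt{2}$, and for ``inconsistent'' $\widetilde{y}$ sequences the block collapses to a much smaller operator, as a subset of the corresponding tensor factors lose one of their two summands due to query-response constraints (exactly as in the case analysis of the upper bound in Lemma~\ref{l:eval}). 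A concrete direction is to group the $\widetilde{y}$-blocks by their number of $Z$- and $X$-basis queries and match the telescoping trace sum, block class by block class, to the inclusion--exclusion expansion of $\abs{R}$ given in Lemma~\ref{l:R}.

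\smallskip\noindent The hardest step will be showing the telescoping accumulates only a polynomial overhead on top of $\beta$, rather than any exponential-in-$n$ loss. One way to side-step a direct trace calculation is via the strategy/co-strategy correspondence of~\cite{GutoskiW07}: primal feasible solutions with objective $p$ correspond to co-strategies against which no user strategy extracts both bits with probability exceeding $p$, so instead of manipulating the $P_i^*$ abstractly one may exhibit a structured co-strategy and argue by an adversarial symmetrization that the worst-case user does only $\poly(m)$ better than ``random classical guessing'', which already achieves $\beta$ by the heuristic in Equation~(\ref{eqn:approx}). We expect even a weaker $\poly(n,m)\cdot \beta$ upper bound to be substantive progress, as it would already imply OTM security against sub-exponentially many adaptive queries to the token.
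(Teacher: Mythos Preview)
The statement you are attempting to prove is a \emph{conjecture} in the paper, not a theorem: the authors explicitly leave it open and remark that ``the only techniques we are aware of to show such approximate optimality involve deriving a better primal SDP solution, which appears challenging.'' There is therefore no proof in the paper to compare your proposal against.

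Your proposal is not a proof but an outline of a research programme, and you implicitly acknowledge this when you write ``The hardest step will be showing the telescoping accumulates only a polynomial overhead on top of $\beta$.'' That sentence is not a step in an argument; it is a restatement of the conjecture itself. Everything preceding it---the iterative definition $P_{m+1}^*:=Q_1$, the observation that block-diagonal structure is preserved under partial trace, the appeal to permutation symmetry and to Lemma~\ref{l:eval}---is sound as far as it goes, but none of it touches the actual obstruction. Indeed, the paper's own feasible primal solution in Theorem~\ref{thm:cheatingbound} already exploits exactly the block structure and spectral bound you cite, and it still incurs a $4^m$ factor from $\abs{T}$; your proposal offers no concrete mechanism by which a different grouping of blocks would avoid a similar exponential-in-$m$ loss. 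The alternative route you float at the end---exhibiting an explicit co-strategy and arguing by ``adversarial symmetrization'' that the optimal user is only $\poly(m)$ better than random guessing---is again a hope, not an argument: you would need to actually construct such a co-strategy and bound the optimal strategy against it, which is precisely the unresolved question.

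In short, you have correctly identified the shape of what a proof would look like (a primal solution matching the dual value $\beta$ up to $\poly(m)$), and your structural observations are the natural starting points. But the proposal contains no new idea for the one step that matters, and so as written it neither proves nor makes progress on the conjecture.
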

\noindent If Conjecture~\ref{conj:only} holds, then our protocol would be secure in the sense that the optimal cheating probability would scale as $\poly(m)/2^{\Theta(n)}$.

\section{Proof of Lemma~\ref{l:impossible}}\label{app:4.1}


\begin{proof}
Observe first that an honest receiver Alice wishing to extract $s_i$ acts as follows.
She applies a unitary $U_i\in\unitary({A}\otimes{B})$ to get state
\begin{equation}\label{eqn:step1}
    \ket{\phi_1}:=U_i\ket{\psi}_{{A}{B}}\ket{0}_{{C}}.
\end{equation}
She then measures ${B}$ in the computational basis and postselects on result $y\in\set{0,1}^n$, obtaining state
\begin{equation}\label{eqn:step2}
    \ket{\phi_2}:=\ket{\phi_y}_{{A}} \ket{y}_{{B}} \ket{0}_{{C}}.
\end{equation}
She now treats $y$ as a ``key'' for $s_i$, \emph{i.e.}, she applies $O_f$ to ${B}\otimes{C}$ to obtain her desired bit $s_i$, \emph{i.e.},
\begin{equation}\label{eqn:step3}
    \ket{\phi_3}:=\ket{\phi_y}_{{A}} \ket{y}_{{B}} \ket{s_i}_{{C}}.
\end{equation}

A malicious receiver Bob wishing to extract $s_0$ and $s_1$ now acts similarly to the rewinding strategy for superposition queries. Suppose without loss of generality that $s_0$ has at most $\Delta$ keys. Then, Bob first applies $U_0$ to prepare~$\ket{\phi_1}$ from Equation~(\ref{eqn:step1}), which we can express as
    \begin{equation}\label{eqn:mid}
        \ket{\phi_1}=\sum_{y\in\set{0,1}^n}\alpha_y \ket{\psi_y}_{{A}}\ket{y}_{{B}}\ket{0}_{{C}}.
    \end{equation}
    for $\sum_y\abs{\alpha_y}^2=1$. Since measuring ${B}$ next would allow us to retrieve $s_0$ in register $C$ with certainty, we have that all $y$ appearing in the expansion above satisfy $f(y)=s_0$. Moreover, since $s_0$ has at most $\Delta$ keys, there exists a key $y'$ such that $\abs{\alpha_{y'}}^2\geq 1/\Delta$. Bob now measures $B$ in the computational basis to obtain $\ket{\phi_2}$ from Equation~(\ref{eqn:step2}), obtaining $y'$ with probability at least $1/\Delta$. Feeding $y'$ into $O_f$ yields $s_0$. Having obtained $y'$, we have that
    $
        \abs{\braket{\phi_1}{\phi_2}}^2\geq 1/\Delta,
    $
    implying
    \begin{equation}
        \abs{\bra{\psi}U_0^\dagger\ket{\phi_{y'}}\ket{y'}}^2\geq 1/\Delta,
    \end{equation}
    \emph{i.e.}, Bob now applies $U_0^\dagger$ to recover a state with ``large'' overlap with initial state $\ket{\psi}$.

    To next recover $s_1$, define $\psigood:=U_1\ket{\psi}$ and $\psiapprox:=U_1U_0^\dagger\ket{\phi_{y'}}\ket{y'}$. Bob applies $U_1$ to obtain
    \begin{equation}
        \psiapprox = \beta_1\psigood+\beta_2\ket{\psi_{\rm good}^\perp},
    \end{equation}
    where $\sum_i\abs{\beta_i}^2=1$, $\braket{\psi_{\rm good}}{\psi_{\rm good}^\perp}=0$, and $\abs{\beta_1}^2\geq 1/\Delta$. Define $\pigood:=\sum_{y\in\set{0,1}^n\text{ s.t. }f(y)=s_1}\ketbra{y}{y}$. Then, the probability that measuring $B$ in the computational basis now yields a valid key for $s_1$ is
    \begin{equation}
        \bra{\psi_{\rm approx}}\pigood\psiapprox\geq \abs{\beta_1}^2\geq\frac{1}{\Delta},
    \end{equation}
    where we have used the fact that $\pigood\psigood=\psigood$ (since an honest receiver can extract $s_1$ with certainty). We conclude that Bob can extract both $s_0$ and $s_1$ with probability at least $1/\Delta^2$.
\end{proof}

\printbibliography

\end{document}